\documentclass[11pt]{article}
\usepackage{graphicx,amsmath,amsfonts,amssymb,amsthm,bbold,bm,dsfont,authblk,cite,xcolor,fullpage,braket,tikz}
\usepackage[colorlinks]{hyperref}
\usepackage[capitalize]{cleveref}
\usepackage[title]{appendix}
\usepackage[labelfont=bf]{caption}
\usepackage{thmtools} %for restatable theorems

\crefformat{equation}{(#2#1#3)}
\Crefformat{equation}{(#2#1#3)}
\crefname{section}{}{\S\S}
\crefformat{section}{#2\S#1#3}
\Crefformat{section}{#2\S#1#3}
\crefname{appendix}{Appendix}{Appendices}
\Crefname{appendix}{Appendix}{Appendices}

\definecolor{DarkGreen}{rgb}{0.1,0.5,0.1}
\definecolor{DarkBlue}{rgb}{0.1,0.1,0.7}
\definecolor{NiceOrange}{rgb}{.9,0.55,0}
\definecolor{NicePurple}{rgb}{0.3,0.1,1}
\hypersetup{
    unicode=false,          % non-Latin characters in Acrobat's bookmarks
    pdftoolbar=true,        % show Acrobat toolbar?
    pdfmenubar=true,        % show Acrobat menu?
    pdffitwindow=false,      % page fit to window when opened
    pdfnewwindow=true,      % links in new window
    colorlinks=true,       % false: boxed links; true: colored links
    linkcolor=DarkBlue,          % color of internal links
    citecolor=NicePurple,        % color of links to bibliography
    filecolor=DarkGreen,      % color of file links
    urlcolor=DarkBlue,          % color of external links
    pdftitle={},
    pdfauthor={},
}
\newcommand{\tr}{\mathrm{tr}}

\renewcommand{\th}{^\mathrm{th}}
\newcommand{\hobj}{H_{\mathrm{obj}}}
\newcommand{\id}{\mathds{1}}

\newtheorem{definition}{Definition}
\newtheorem{theorem}{Theorem}
\newtheorem{corollary}{Corollary}
\newtheorem{lemma}{Lemma}
\newtheorem{proposition}{Proposition}

\newif\ifnotes
\notestrue

\title{Optimization Using Locally-Quantum Decoders}

\author[1]{Noah Shutty\footnote{shutty@google.com}}
\author[1,2]{Avijit Mandal}
\author[3]{Seyoon Ragavan}
\author[4]{Quentin Buzet}
\author[4]{Andr\'e Chailloux}
\author[1]{Nicholas C.~Rubin}
\author[5]{Abid Khan}
\author[5]{Sami Boulebnane}
\author[5]{Ruslan Shaydulin}
\author[6]{John Azariah}
\author[1]{Stephen P.~Jordan}

\affil[1]{\small{\it{Google Quantum AI, Venice, CA 90291}}}
\affil[2]{\small{\it{Department of Electrical and Computer Engineering, Duke University, Durham, NC 27708}}}
\affil[3]{\small{\it{Computer Science and Artificial Intelligence Laboratory, MIT, Cambridge, MA 02139}}}
\affil[4]{\small{\it{Inria, 48 rue Barrault 75013, Paris, France}}}
\affil[5]{\small{\it{Global Technology Applied Research, JPMorganChase, New York, NY 10017}}}
\affil[6]{\small{\it{University of Technology Sydney, Sydney, Australia}}}

\date{}

\begin{document}

\maketitle

\begin{abstract}
It was pointed out in \cite{JSW25} that widely-studied optimization problems such as $D$-regular max-$k$-XORSAT can be reduced to decoding of LDPC codes, using quantum algorithms related to Regev's reduction. LDPC codes have very good decoders, such as Belief Propagation (BP), and this therefore makes $D$-regular max-$k$-XORSAT an enticing target for this class of quantum algorithms. However, BP was found insufficient to achieve quantum advantage. Here, we develop an intrinsically quantum decoding technique, which decodes classical LDPC codes subject to coherent superpositions of bit flip errors. For average-case instances of $D$-regular max-$k$-XORSAT drawn from Gallager's ensemble, this quantum decoder strongly outperforms classical belief propagation at many values of $k$ and $D$. For some $(k,D)$ the approximate optima achievable using this decoder surpass both Prange's algorithm and simulated annealing. However, we stop short of achieving quantum advantage because we identify an enhancement to Prange's algorithm that recovers a precise tie, much as a precise tie was observed between the standard version of Prange's algorithm and a more limited version of locally-quantum decoding in \cite{chailloux2023quantum}.
\end{abstract}

\section{Introduction}
\label{sec:intro}

Whether quantum computers can achieve exponential speedup for combinatorial optimization problems has been a subject of substantial research \cite{farhi2001quantum, farhi2014quantum, montanaro2020quantum, li2025new}. Many combinatorial optimization problems are NP-hard, and thus finding exact optima efficiently with quantum computers is impossible unless $\mathrm{NP} \subset \mathrm{BQP}$. However, it would not violate any standard assumptions about complexity classes if quantum computers can efficiently find approximate optima that require exponential time to find classically. Recently, an apparent exponential quantum speedup for an algebraically structured optimization problem called Optimal Polynomial Intersection (OPI) was obtained using a quantum algorithm called Decoded Quantum Interferometry (DQI) \cite{JSW25}, which descends from Regev's reduction \cite{ATS03,R04,AR05,R09}.

The success of DQI for the OPI problem encourages grander ambitions of achieving exponential speedup for algebraically unstructured combinatorial optimization problems, which are less specialized and more commonly encountered in industrial settings. Here we consider the max-XORSAT problem: we are given $n$ binary variables and $m$ constraints. Each constraint specifies the value (zero or one) of the XOR of some subset of the variables. These constraints can be regarded as linear equations over $\mathbb{F}_2$. If a solution exists that satisfies all $m$ constraints then it can be found in polynomial time classically using Gaussian elimination over $\mathbb{F}_2$. However, if no such solution exists, finding the assignment to the variables that maximizes the number of satisfied constraints is NP-hard.

It is common to study max-$k$-XORSAT, which is the special case of max-XORSAT where each constraint contains at most $k$ variables. If each variable is contained in exactly $D$ constraints and each constraint contains $k$ variables then the problem is called $D$-regular max-$k$-XORSAT. An instance of $D$-regular max-$k$-XORSAT can be expressed as finding the $\mathbf{x} \in \mathbb{F}_2^n$ that minimizes the objective function $f(\mathbf{x}) = |B \mathbf{x} - \mathbf{v}|$, where $\mathbf{v} \in \mathbb{F}_2^m$, $B \in \mathbb{F}_2^{m \times n}$ is a sparse matrix with $k$ nonzero entries per row and $D$ nonzero entries per column, and $|\cdot|$ denotes Hamming weight.

Although $D$-regular max-$k$-XORSAT is a somewhat idealized problem, it is widely used as a testbed for both classical and quantum optimization algorithms that exploit sparsity, which is a feature often seen in real-world optimization problems. The max-$k$-XORSAT problem remains NP-hard even when $k=2$. However, as $B$ becomes increasingly sparse, heuristic and exact solvers both classical and quantum are able to find approximate optima closer to the true optimum. See \textit{e.g.} \cite{BF21} and Section 9 of \cite{JSW25}.

Throughout this manuscript we consider average-case $D$-regular max-$k$-XORSAT instances where $B$ is drawn from Gallager's ensemble and $\mathbf{v}$ is chosen uniformly at random. Gallager's ensemble over $\mathbb{F}_2$-matrices with $D$ nonzero entries per column and $k$ nonzero entries per row is easy to sample from and widely used as a source of parity check matrices for classical LDPC codes. See \cref{app:trees} for a precise definition. The matrices from Gallager's ensemble have certain properties that generic $(k,D)$-sparse matrices lack, as described in \cref{sec:fgum}, which we exploit in both our quantum and classical algorithms for $D$-regular max-$k$-XORSAT.

Some arguments regarding the limitations of DQI, as applied to algebraically unstructured problems, are given in \cite{anschuetz2025decoded, parekh2025no, kramer2026tight}. None of these rule out quantum advantage by DQI for average-case $D$-regular max-$k$-XORSAT. The results of \cite{parekh2025no} show that DQI, when applied to MaxCut (a special case of max-2-XORSAT), using decoders restricted to the unique decoding radius of the resulting LDPC code, is provably outperformed by efficient classical optimization algorithms. This limitation does not apply here because we consider $k \geq 3$. Furthermore, in \cref{sec:errors} we bound the effect of decoding failures, and are therefore not restricted to the unique decoding radius. In \cite{kramer2026tight}, complexity-theoretic inapproximability results are given for max-LINSAT, which generalizes max-XORSAT to larger alphabet. These do not apply to the present work because they apply to worst-case instances with no bound on the ratio of constraints to variables, whereas here we consider average-case instances where this ratio is $D/k$ for various finite choices of $D$ and $k$. In \cite{anschuetz2025decoded}, it is conjectured that a formula from \cite{ZP75} for the number of errors correctable by a greedy bit-flipping decoder applied to LDPC codes is also an upper bound on the number of errors correctable by any efficient algorithm. The resulting performance of DQI is then shown to lie within the bounds on what can be achieved by algorithms whose output distribution is, in a certain formal sense, a stable function of the problem instance. However, the results in \cite{anschuetz2025decoded} are up to $o(1)$ corrections in the limit of large $k$, whereas here we are concerned with specific finite $k$. More fundamentally, the validity of this conjecture on the limitations of classical decoders is unknown, and one can find examples where quantum decoders exceed this bound and enable DQI to surpass the limitations that stable algorithms are subject to~\cite{zhou2026gibbs}.

DQI and Regev's reduction both use the quantum Fourier transform to reduce an instance of $D$-regular max-$k$-XORSAT to a problem of decoding an LDPC code in which each bit is contained in $k$ parity checks and each parity check contains $D$ bits. The matrix $B$ appearing in the objective function $f(\mathbf{x}) = |B \mathbf{x} - \mathbf{v}|$ yields the parity check matrix $B^T$ for the corresponding LDPC code. The relation between DQI and Regev's reduction is explained in \cref{app:DQI_vs_regev}. For reasons of technical convenience we work exclusively with Regev's reduction in this paper. However, the same techniques of locally-quantum decoding could be applied in the context of DQI.

The expected number of constraints that are satisfied by the bit strings produced by Regev's reduction is determined by the number of bit flip errors that can be decoded. The more bit flips decoded, the more constraints satisfied, as quantified in the context of Regev's reduction by \cref{eq:alphaperp} and in the context of DQI by the semicircle law \cite{JSW25}.

For codes defined by a generic dense parity check matrix $B^T$, there are no known polynomial-time decoding algorithms beyond error weight logarithmic in $n$. But for $B$ with $O(1)$ nonzero entries per row and column, linearly many errors can be efficiently corrected using known classical algorithms such as Belief Propagation (BP). Via the semicircle law this implies that DQI satisfies strictly more than half of the constraints in the limit $n \to \infty$. Thus, one might hope that DQI, using a reversible implementation of belief propagation as its decoder, could achieve quantum advantage over classical algorithms on $D$-regular max-$k$-XORSAT for at least some $(k,D)$ pairs.

The potential for DQI, with BP decoding, to achieve quantum advantage for average case $D$-regular max-$k$-XORSAT was explored in \cite{JSW25}. Two of the primary competing classical algorithms for average case $D$-regular max-$k$-XORSAT are simulated annealing and Prange's algorithm, with simulated annealing generally finding better optima for the smallest values of $k$ and $D$ (\textit{i.e.} the problems where $B$ is the sparsest), and Prange's algorithm finding better optima for denser problems. Via the semicircle law, the performance of DQI+BP can be inferred from the empirical performance of BP and compared against the empirical performance of simulated annealing and Prange's algorithm on the same instance. In \cite{JSW25}, no $(k,D)$ was found for which DQI+BP outperformed both simulated annealing and Prange's algorithm on $D$-regular max-$k$-XORSAT, although a highly tuned \textit{irregular} instance of max-XORSAT was constructed for which DQI+BP could outperform both simulated annealing and Prange's algorithm. This irregular instance was obtained by optimizing the irregular sparsity pattern to be maximally favorable to BP and maximally unfavorable to simulated annealing, and thus served as a proof of principle rather than as a practical example.

The primary competing quantum algorithm is the Quantum Approximate Optimization Algorithm (QAOA)~\cite{farhi2014quantum}. QAOA performance grows monotonically with QAOA depth $p$ and can be predicted classically for sparse max-$k$-XORSAT in time exponential in $p$ but independent of problem size for both the infinite-$D$ limit~\cite{BF21} and finite $D$ \cite{farhi2025lower}. We use the values from \cite{BF21} for the infinite-$D$ limit and extend the techniques previously introduced for $k=2$ in \cite{farhi2025lower} to arbitrary $k$ to calculate the finite $D$ performance.

As emphasized in \cite{chailloux2023quantum}, the bit flip errors that arise in DQI and related algorithms are in coherent superposition. Information-theoretically, a coherent superposition of such errors is more favorable than the corresponding probability distribution over classical bit flips, as the Holevo bound exceeds the Shannon bound. However, realizing this advantage with efficient (\textit{i.e.} polynomial size) quantum circuits is highly nontrivial and is the primary subject of this paper.

Here, following \cite{clz21,chailloux2023quantum,buzet2025fine}, we restrict our attention to what we call ``locally-quantum decoding schemes'' which, while in principle less powerful than general quantum decoders, are simpler to reason about and easier to implement by efficient quantum circuits. Roughly, the idea is to do local changes of basis on $O(1)$ qubits that modify the effective error channel to make it more advantageous, and then apply classical decoding to the resulting errors. Using a carefully chosen, yet relatively simple, version of locally-quantum decoding, we are able to consistently outcompete prior results achieved by DQI+BP, and for the first time outperform both simulated annealing and Prange's algorithm on $D$-regular max-$k$-XORSAT for some $(k,D)$, as shown in \cref{tab:scores}. The main new idea in our quantum decoder relative to the previous locally-quantum decoders introduced in \cite{clz21,chailloux2023quantum,buzet2025fine} is that our local change of basis is performed on sets of qubits corresponding to the parity checks in the underlying code and the \textit{a priori} knowledge that the uncorrupted codeword has parity zero on each of these sets is taken into account.

We do not claim outright quantum advantage because, as described in \cref{sec:turboprange}, we are able to construct a classical algorithm combining Prange with greedy local optimization that efficiently matches the approximate optima achieved by our quantum algorithm. Nevertheless, the fact that a relatively simple locally-quantum decoder can already greatly surpass DQI+BP and slightly surpass both simulated annealing and Prange's algorithm on a standard testbed problem suggests to us that the development of new quantum algorithms for the quantum decoding problem and the analysis of existing quantum algorithms for the quantum decoding problem such as BPQM \cite{renes2017belief,brandsen2022belief,mandal2026belief}, are among the most promising directions for the future development of DQI and Regev approaches to max-XORSAT.

  \begin{table}[!htbp]
       \centering
     \begin{tabular}{|c|c|c|c|c|c|}
     	\hline
     	$(k,D)$ & Prange  & Simulated          & DQI+BP  & Regev+FGUM  & QAOA       \\
     	        &         & Annealing          &         &             & ($p{=}16$)      \\
     	\hline
     	(3,4)   & 0.875   & \textbf{0.9366}    & 0.8730 & 0.8930      & 0.8898    \\
     	\hline
     	(3,5)   & 0.8     & \textbf{0.9005}    & 0.8176 & 0.8379      & 0.8532    \\
     	\hline
     	(3,6)   & 0.75    & \textbf{0.8712}    & 0.7776 & 0.7857      & 0.8231    \\
     	\hline
     	(3,7)   & 0.71428 & \textbf{0.8492}    & 0.7476 & 0.7621      &  0.8001   \\
     	\hline
     	(3,8)   & 0.6875  & \textbf{0.8287}    & 0.7243 & 0.7312      &  0.7813   \\
     	\hline
     	(4,5)   & 0.9     & \textbf{0.9279}    & 0.8605  & 0.9216      &   0.8797  \\
     	\hline    	
        (4,6)   & 0.83333 & \textbf{0.9024}    & 0.8214 & 0.8616      &  0.8498   \\
     	\hline
     	(4,7)   & 0.78571 & \textbf{0.8771}    & 0.7908 & 0.8267      &   0.8259  \\
     	\hline
     	(4,8)   & 0.75    & \textbf{0.8587}    & 0.7663 & 0.7905      &   0.8061  \\
     	\hline
     	(5,6)   & 0.91667 & 0.9190             & 0.8443 & \textbf{0.9312} & 0.8669\\
     	\hline
     	(5,7)   & 0.85714 & \textbf{0.8965}    & 0.8140 & 0.8853      &   0.8428  \\
     	\hline
     	(5,8)   & 0.8125  & \textbf{0.8740}    & 0.7893 & 0.8441       &  0.8226  \\
     	\hline
     	(6,7)   & 0.92857 & 0.9051             & 0.8291 & \textbf{0.9427} & 0.8546\\
     	\hline
     	(6,8)   & 0.875   & 0.8875             & 0.8045 & \textbf{0.8962}  &0.8344\\
     	\hline
     	(7,8)   & 0.9375  & 0.8155             & 0.8155   & \textbf{0.9481}  &0.8432\\
     	\hline
     \end{tabular}
   \caption{\label{tab:scores} Comparison of optimization for $D$-regular max-$k$-XORSAT. Top scores are in bold. Scores for Regev+FGUM are approximate up to corrections exponentially small in $D$. Regev+FGUM beats both Prange and DQI+BP for all $(k,D)$ in our table but is matched by Turbo Prange (see~\cref{sec:turboprange}). QAOA scores are computed exactly by evaluating and optimizing depth-$p$ QAOA on a $(D,k)$-regular hypertree (see~\cref{sec:QAOA}), and are accurate up to a $O(n^{-0.99})$ correction with high probability over the random choice of instance. The score monotonically increases with $p$. DQI+BP scores are obtained by computing the asymptotic behavior of BP using density evolution~\cite{RU01} and substituting into the semicircle law~\cite{JSW25}. Simulated annealing scores are obtained using instances with $n=2520$ variables and applying $10^6$ sweeps \textit{i.e.} $n \times 10^6$ Metropolis updates. For each instance, simulated annealing was run independently with four different seeds and the best score was recorded. Although for any given $(k,D)$ the approximate optimum found by simulated annealing can be slightly improved using more updates, this has diminishing returns. When $k$ and $D$ become large, Prange's algorithm beats simulated annealing by an increasing margin, and eventually simulated annealing ceases to be a viable competitor.}
 \end{table}
%simulated annealing numbers are from:
%/Users/stephenjordan/Documents/xortools/locally_quantum_data/second_sweep
%DQI+BP numbers are from bazel-bin/src/density, de.sh, de_results.txt, and dqi_bp_table.ipynb

\section{Regev's Reduction and the Quantum Decoding Problem}
\label{sec:quantum_decoding}

Although originally proposed as a reduction between different classical lattice problems to clarify their complexity and serve as a cryptographic foundation \cite{ATS03,R04,AR05,R09}, Regev's reduction can be adapted to reduce max-XORSAT to a quantum decoding problem, as is done in \cite{chailloux2023quantum}. We now review how this works. For simplicity, we will omit normalization factors when sketching the reduction in this section.

Recall that max-XORSAT is the problem of minimizing an objective of the form $f(\mathbf{x}) = |B \mathbf{x} - \mathbf{v}|$. This can also be viewed as the problem of finding the codeword $\mathbf{c} \in C$ of minimal Hamming distance from $\mathbf{v}$, where $C$ is the code defined by generator matrix $B$:
\begin{equation}
    \label{eq:C}
    C = \{ B \mathbf{x} | \mathbf{x} \in \mathbb{F}_2^n \}.
\end{equation}
To do this, one can try to prepare a state of the form
\begin{equation}
    \label{eq:final_state}
    \ket{\psi_P} \propto \sum_{\mathbf{x} \in \mathbb{F}_2^n} P(B\mathbf{x} - \mathbf{v}) \ket{B \mathbf{x}},
\end{equation}
where $P$ is a function that biases the superposition toward small $|B\mathbf{x} - \mathbf{v}|$. Upon measuring this state in the computational basis, one obtains a string of the form $B \mathbf{x}$. Inferring $\mathbf{x}$ from $B \mathbf{x}$ is then an overconstrained linear system over $\mathbb{F}_2$ that is guaranteed to have exactly one solution, provided $B$ is full rank. Furthermore, this solution can be found efficiently by Gaussian elimination. If the bias imposed by $P$ is strong then the resulting $\mathbf{x}$ will be a good solution to the original max-XORSAT problem.

The state $\ket{\psi_P}$ can be rewritten as
\begin{equation}
    \label{eq:final_state2}
    \ket{\psi_P} \propto \sum_{\mathbf{c} \in C} P(\mathbf{c} - \mathbf{v}) \ket{\mathbf{c}},
\end{equation}
where $C$ is the code defined in \cref{eq:C}. By the convolution theorem, the Hadamard transform of $\ket{\psi_P}$ is
\begin{equation}
    \label{eq:tilde_state}
    \ket{\widetilde{\psi}_P} \propto \sum_{\mathbf{d} \in C^\perp} \sum_{\mathbf{e} \in \mathbb{F}_2^m} (-1)^{\mathbf{v} \cdot \mathbf{e}} \widetilde{P}(\mathbf{e}) \ket{\mathbf{d} + \mathbf{e}},
\end{equation}
where $\widetilde{P}$ is the Hadamard transform of $P$ and $C^\perp$ is the code defined by parity check matrix $B^T \in \mathbb{F}_2^{n \times m}$. That is,
\begin{equation}
    \label{eq:Cperp}
    C^\perp = \{ \mathbf{d} \in \mathbb{F}_2^m : B^T \mathbf{d} = \mathbf{0} \}.
\end{equation}
Since the Hadamard transform is self-inverse, this suggests an approach to preparing $\ket{\psi_P}$: first prepare $\ket{\widetilde{\psi}_P}$ and then apply a Hadamard transform.

The final remaining problem is how to prepare the state $\ket{\widetilde{\psi}_P}$. In Regev's reduction one approaches this as follows. First prepare a uniform superposition over $C^\perp$ in one register, and a single copy of the peak defined by $\widetilde{P}$ in a second register.
\begin{equation}
    \left( \frac{1}{\sqrt{|C^\perp|}} \sum_{\mathbf{d} \in C^\perp} \ket{\mathbf{d}} \right) \otimes \sum_{\mathbf{e} \in \mathbb{F}_2^m} \widetilde{P}(\mathbf{e}) \ket{\mathbf{e}}.
\end{equation}
Preparing a uniform superposition over a linear code can be done by simple and efficient quantum circuits, as shown \textit{e.g.} in \cite{FJ24}. Preparing the peak $\widetilde{P}$ is efficient for many choices of $\widetilde{P}$, such as the ones we discuss in \cref{sec:locally_quantum} and \cref{sec:fgum}.

Second, reversibly add the content of the first register to the second register.
\begin{equation}
        \to \frac{1}{\sqrt{|C^\perp|}} \sum_{\mathbf{d} \in C^\perp} \ket{\mathbf{d}} \sum_{\mathbf{e} \in \mathbb{F}_2^m} \widetilde{P}(\mathbf{e}) \ket{\mathbf{d} + \mathbf{e}}.
\end{equation}
We now almost have $\ket{\widetilde{\psi}_P}$. Imposing the necessary phases $(-1)^{\mathbf{e} \cdot \mathbf{v}}$ is straightforward but we are faced with the nontrivial problem of uncomputing the unwanted register containing $\mathbf{d}$. In other words, one wishes to infer the codeword $\mathbf{d}$ from the corrupted codeword $\sum_{\mathbf{e} \in \mathbb{F}_2^m} \widetilde{P}(\mathbf{e}) \ket{\mathbf{d} + \mathbf{e}}$. This is the quantum decoding problem. It is information-theoretically possible to solve this if the superposition over error strings $\mathbf{e}$ defined by $\widetilde{P}(\mathbf{e})$ is narrowly peaked around low Hamming weight errors. In particular, if the support of $\widetilde{P}$ is limited to Hamming radius smaller than half the minimum Hamming distance between codewords of $C^\perp$ then the states $\sum_{\mathbf{e} \in \mathbb{F}_2^m} \widetilde{P}(\mathbf{e}) \ket{\mathbf{d} + \mathbf{e}}$ for different $\mathbf{d} \in C^\perp$ are orthogonal and perfect decoding is possible in principle. However, doing such decoding using efficient quantum circuits is highly nontrivial and is the central topic of this paper. A second core component of this paper is bounding the effect of decoding failures in the case that the superpositions $\sum_{\mathbf{e} \in \mathbb{F}_2^m} \widetilde{P}(\mathbf{e}) \ket{\mathbf{d} + \mathbf{e}}$ for different $\mathbf{d}$ are not orthogonal but only nearly so, \textit{e.g.} when $\widetilde{P}$ has support on all bit strings.

\section{Locally-Quantum Decoding}
\label{sec:locally_quantum}

For simplicity, suppose we had a quantum decoding problem for a single codeword $\mathbf{d}$ rather than a superposition over quantum decoding problems, one for each codeword. That is, suppose we are given the state $\sum_{\mathbf{e} \in \mathbb{F}_2^m} \widetilde{P}(\mathbf{e}) \ket{\mathbf{d} + \mathbf{e}}$ with the promise that $\mathbf{d} \in C^\perp$ and we wish to infer $\mathbf{d}$.

Whether this can be done (even information-theoretically) depends on the choice of $P$, which in turn determines $\widetilde{P}$. The simplest and most commonly studied choice is $P(\mathbf{y}) = P_\alpha(\mathbf{y})$, where
\begin{equation}
    \label{eq:palpha}
    P_\alpha(\mathbf{y}) := \prod_{j=1}^m \sqrt{1-\alpha}^{1-y_j} \sqrt{\alpha}^{y_j}.
\end{equation}
Here $\alpha$ is chosen between zero and one and $y_j$ denotes the $j\th$ bit of $\mathbf{y}$. In this case, if one had $C = \mathbb{F}_2^m$ then, by \cref{eq:final_state2}, upon measuring $\ket{\psi_P}$ in the computational basis, each of the $m$ constraints would be violated with probability $\alpha$, independently. Thus the expected number of constraints satisfied would be $(1-\alpha) m$. In reality, $C$ consists of only $2^n$ bit strings from $\mathbb{F}_2^m$. Nevertheless, for average case $\mathbf{v}$ the naive estimate that the expected number of satisfied constraints is $(1-\alpha) m$ turns out to be exactly correct. This is reflected in the $\epsilon=0$ special case of \cref{cor:expectation_bound2}.

For $\alpha \leq 1/2$, the Hadamard transform of $P_\alpha$ is
\begin{equation}
    \label{eq:tildepalpha}
    \widetilde{P}_\alpha(\mathbf{e}) = \prod_{j=1}^m \sqrt{1-\alpha^\perp}^{1-e_j} \sqrt{\alpha^\perp}^{e_j},
\end{equation}
where
\begin{equation}
    \label{eq:alphaperp}
    \alpha^\perp = \frac{1}{2} - \sqrt{\alpha (1-\alpha)}.
\end{equation}
As discussed above, $1-\alpha$ is the expected fraction of optimization constraints satisfied by the string obtained from measuring $\ket{\psi_P}$ in the computational basis. By \cref{eq:tilde_state}, $\alpha^\perp$ is the bit flip error rate on $\mathbf{d}$ if one were to measure $\ket{\widetilde{\psi}_P}$ in the computational basis. As discussed in \cref{sec:quantum_decoding}, we prepare the state $\ket{\widetilde{\psi}_P}$ by solving the decoding problem of inferring the original codeword $\mathbf{d}$ after it has been subjected to this error. Thus, one can interpret \cref{eq:alphaperp} as a relation between the effective error rate corrected in the decoding step, namely $\alpha^\perp$, and the fraction of constraints satisfied in the original optimization problem, namely $1-\alpha$. This is analogous to the role that the semicircle law plays in DQI \cite{JSW25}.

One way to infer $\mathbf{d} \in C^\perp$ given the state $\sum_{\mathbf{e} \in \mathbb{F}_2^m} \widetilde{P}(\mathbf{e}) \ket{\mathbf{d} + \mathbf{e}}$ would be to measure the state in the computational basis and try to classically infer $\mathbf{d}$ from the measured value of $\mathbf{d} + \mathbf{e}$. Due to the product structure of $\widetilde{P}_\alpha$, shown in \cref{eq:tildepalpha}, our quantum decoding problem is one of inferring the codeword $\mathbf{d}$ after each bit in $\mathbf{d}$ has been sent separately through the following classical-quantum channel.
\begin{equation}
    \label{eq:quantum_channel}
    d_j \to \sqrt{1-\alpha^\perp} \ket{d_j} + \sqrt{\alpha^\perp} \ket{\lnot d_j}
\end{equation}
By measuring in the computational basis, we collapse this to a classical bit flip channel where each bit is flipped with probability $\alpha^\perp$. This is called the binary symmetric channel $\mathrm{BSC}(\alpha^\perp)$. Inferring the original codeword $\mathbf{d}$ from the bit string $\mathbf{d} + \mathbf{e}$ resulting from the binary symmetric channel is one of the most-studied problems in classical error correction. In the case that the parity check matrix $B^T$ defining the code $C^\perp$ is sufficiently sparse, this problem can be solved with high probability in polynomial time using the belief propagation algorithm, even when the error rate $\alpha^\perp$ is close to saturating the information-theoretic limit on classical decodability, namely Shannon's bound.

In the context of Regev's reduction, one cannot actually measure $\ket{\mathbf{d} + \mathbf{e}}$ because one must solve a superposition over quantum decoding problems for different codewords $\mathbf{d} \in C^\perp$. Instead, one must implement the chosen classical decoding algorithm, such as belief propagation, as a reversible circuit and apply it to the superposition. The measurements serve only as a conceptual tool to aid us in thinking about the quantum decoding problem.

By using the conceptual tool of fictitious measurements one can perceive alternative approaches to the quantum decoding problem, which go outside the framework of simply leveraging existing classical decoders and applying them as reversible circuits. In particular, in \cite{chailloux2023quantum}, Unambiguous State Discrimination (USD) measurements were introduced in the context of Regev reductions. Given two non-orthogonal states $\ket{\psi_0}$ and $\ket{\psi_1}$, the corresponding USD measurement is a three-outcome POVM that takes $\ket{\psi_b}$ as input for unknown $b \in \{0, 1\}$ and will either identify with certainty that $b = 0$, identify with certainty that $b = 1$, or produce the $\perp$ outcome which indicates no information regarding the value of $b$. A larger inner product $|\langle \psi_0 | \psi_1 \rangle|$ necessitates a larger probability on the $\perp$ outcome. Specifically, in the case of uniform priors, the optimal USD measurement achieves
\begin{equation}
    p_{\perp} = |\langle \psi_0 | \psi_1 \rangle|.
\end{equation}
Hence, if one applies a USD measurement to the states produced by the classical-quantum channel of \cref{eq:quantum_channel} then instead of a classical binary symmetric channel with bit flip probability $\alpha^\perp$ one obtains a classical erasure channel with bit erasure probability $2 \sqrt{\alpha^\perp (1 - \alpha^\perp)}$.

Information-theoretically, the bit flip channel $\mathrm{BSC}(\alpha^\perp)$ has higher capacity than this erasure channel. However, the advantage of an erasure channel is that it is easier to saturate the information-theoretic capacity using polynomial-time decoders. Specifically, one can find the erased bit values by solving the set of $\mathbb{F}_2$-linear equations defined by the parity check matrix using Gaussian elimination. For decoding bit flip errors, known polynomial-time decoders fall increasingly short of information-theoretic limits as the density of the parity check matrix increases. Decoding erasures using Gaussian elimination does not suffer from this problem.

The imperviousness of erasure decoders to density is enticing because the optimization problems that are reducible to quantum decoding by Regev's reduction become much more difficult to approximate using local search heuristics such as simulated annealing when the instances become denser. By the principle of deferred measurement \cite{NC00}, anything that can be achieved by measurement and classical feedforward can also be achieved coherently by controlled reversible operations. Thus, instead of carrying out the USD measurements, one can dilate them to isometries, implemented concretely by introducing ancilla qubits each initialized to $\ket{0}$ and applying unitary transformations. Then, by applying a reversible implementation of Gaussian elimination, one can replicate the performance of the USD-based decoding method in a purely coherent way applicable to the quantum decoding problem on superpositions of codewords, as needed in Regev's reduction. Unfortunately, as shown in \cite{chailloux2023quantum}, the approximate optima found by applying this strategy are exactly matched by Prange's algorithm, which is a classical algorithm that runs in polynomial time.

The USD-based strategy from \cite{chailloux2023quantum} is one example of what we here call a locally-quantum approach to the quantum decoding problem. One considers a choice of local measurement that induces a classical channel and finds a classical decoding algorithm suited to the resulting channel. The measurement-inspired scheme is then implemented in a fully coherent manner by replacing the measurements with isometries. If the measurement is in the computational basis, we regard the approach as fully classical, since the coherent version consists of nothing other than a reversible implementation of a classical decoding algorithm. The USD-based strategy is not the only locally-quantum approach to the quantum decoding problem that has been introduced. Prior to \cite{chailloux2023quantum}, a method based on filtering measurements was introduced in \cite{clz21}. Subsequent to \cite{chailloux2023quantum} a method based on unambiguous state discrimination for groups of qubits was introduced in \cite{buzet2025fine}. Quantum advantage via Regev's reduction was not found in \cite{chailloux2023quantum} or \cite{buzet2025fine}, as the approximate optima could be efficiently replicated classically using Prange's algorithm. A potential quantum advantage was found in \cite{clz21}. However, a polynomial-time classical algorithm replicating or exceeding the approximate optima achieved by the quantum algorithm of \cite{clz21} was subsequently found \cite{kothari2025no}. Thus, at present, there is no known quantum advantage achieved by using locally-quantum approaches to the quantum decoding problem in the context of Regev's reduction.

\section{Fine-Grained Unambiguous Measurements (FGUM)}
\label{sec:fgum}

Here, we consider a new locally-quantum decoding method based on Fine-Grained Unambiguous Measurements (FGUM), which were introduced in \cite{buzet2025fine}. The analysis in \cite{buzet2025fine} showed that Regev's reduction using locally-quantum decoding based on FGUMs could not beat Prange's algorithm. However, the measurements considered in \cite{buzet2025fine} are code-agnostic. In this section, we extend FGUMs to include measurements that depend on the code. We find that these more tailored FGUMs do enable our quantum algorithm to beat the satisfaction fractions of both Prange's algorithm and simulated annealing for some problems.

This is achieved by careful codesign of the function $P$ used to bias the output state toward good solutions and the measurements used in the locally-quantum decoding process. The first step in this codesign is to partition the $m$ qubits corresponding to the code symbols into groups of size $D$ that are aligned with the constraint structure of the given instance of $D$-regular max-$k$-XORSAT. We specifically consider instances of $D$-regular max-$k$-XORSAT drawn from Gallager's ensemble, as more general ensembles of instances pose additional technical challenges, which we do not address here.

The structure of an instance of max-XORSAT can be represented using a bipartite graph. One row of $n$ vertices represents the $n$ binary variables and one row of $m$ vertices represents the $m$ constraints. Each constraint vertex is connected by an edge to each variable vertex that it contains. This is also the Tanner graph of the code $C^\perp$ dual to the max-XORSAT problem. When viewed as a Tanner graph, the $m$ vertices represent the $m$ bits in $C^\perp$ and the $n$ vertices represent the $n$ parity checks in $C^\perp$. For an instance of $D$-regular max-$k$-XORSAT this bipartite graph will be $(k,D)$-regular; each variable vertex will connect to exactly $D$ edges and each constraint vertex will connect to exactly $k$ edges. A convenient and commonly used way to generate random $(k,D)$-regular bipartite graphs is to draw them from Gallager's ensemble, as described \textit{e.g.} in appendix B of \cite{JSW25}. 

\begin{figure}[htbp]
    \centering
    \resizebox{0.65\textwidth}{!}{%
        \begin{tikzpicture}[
            checknode/.style={circle, draw=black, fill=white, inner sep=2pt, outer sep=0pt},
            varnode/.style={circle, draw=black, fill=white, inner sep=2pt, outer sep=0pt},
            thinedge/.style={draw=gray, thin},
            thickedge/.style={draw=blue!80!black, line width=2pt},
            labeltext/.style={font=\sffamily\small, align=right} 
        ]
            \def\hspacing{1.5}
            \def\vspacing{2}
            \node[varnode] (B1) at (-2.5*\hspacing, 0) {};
            \node[varnode] (B2) at (-1.5*\hspacing, 0) {};
            \node[varnode] (B3) at (-0.5*\hspacing, 0) {};
            \node[varnode] (B4) at ( 0.5*\hspacing, 0) {};
            \node[varnode] (B5) at ( 1.5*\hspacing, 0) {};
            \node[varnode] (B6) at ( 2.5*\hspacing, 0) {};
            \node[checknode, fill=blue!80!black] (T1) at (-1.5*\hspacing, \vspacing) {}; 
            \node[checknode] (T2) at (-0.5*\hspacing, \vspacing) {};
            \node[checknode] (T3) at ( 0.5*\hspacing, \vspacing) {};
            \node[checknode, fill=blue!80!black] (T4) at ( 1.5*\hspacing, \vspacing) {}; 
            \node[labeltext, anchor=east] at (-2.9*\hspacing, \vspacing) {check nodes};
            \node[labeltext, anchor=east] at (-2.9*\hspacing, 0) {bit nodes};
            \draw[thinedge] (T2) -- (B1);
            \draw[thinedge] (T2) -- (B2);
            \draw[thinedge] (T2) -- (B4);
            \draw[thinedge] (T3) -- (B3);
            \draw[thinedge] (T3) -- (B5);
            \draw[thinedge] (T3) -- (B6);
            \draw[thinedge] (T1) -- (B1);
            \draw[thinedge] (T1) -- (B2);
            \draw[thinedge] (T1) -- (B3);
            \draw[thinedge] (T4) -- (B4);
            \draw[thinedge] (T4) -- (B5);
            \draw[thinedge] (T4) -- (B6);
            \draw[thickedge] (T1) -- (B1);
            \draw[thickedge] (T1) -- (B2);
            \draw[thickedge] (T1) -- (B3);
            \draw[thickedge] (T4) -- (B4);
            \draw[thickedge] (T4) -- (B5);
            \draw[thickedge] (T4) -- (B6);
        \end{tikzpicture}%
    }
    \caption{\label{fig:gallager_tanner} A Tanner graph from the $(2,3)$ Gallager ensemble. Highlighted in blue is a set of parity check nodes such that each bit node is connected by an edge to exactly one check from this set.}
\end{figure}
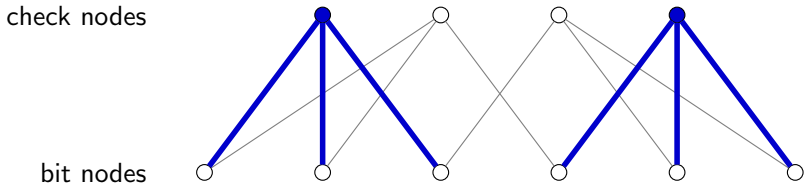

In any LDPC code drawn from Gallager's ensemble, there exists a subset $\Gamma \subset \{1,\ldots,n\}$ of parity checks such that each bit is contained in exactly one parity check from this subset\footnote{If the sample from Gallager's ensemble is generated in the standard way, one can simply use the first $m/D$ rows from the parity check matrix.}. Hence $|\Gamma| = m/D$. This is illustrated in \cref{fig:gallager_tanner}. Although general $(k,D)$-regular bipartite graphs do not have this property, we consider only graphs that do. Any such choice of $\Gamma$ defines a partition of the $m$ bits of the code $C^\perp$ into blocks $S_1,\ldots,S_{m/D} \subset \{1,\ldots,m\}$. $S_i$ is the set of bits in the $i\th$ parity check from $\Gamma$. Thus, $S_1,\ldots,S_{m/D}$ are disjoint and $\bigcup_{i=1}^{m/D} S_i = \{1,\ldots,m\}$ and $|S_i|=D$ for all $i=1,\ldots,m/D$.

Building upon this partition, we next describe our specific choice of $P$ that defines our output state $\ket{\psi_P}$ via \cref{eq:final_state}. For any bit string $\mathbf{y} \in \mathbb{F}_2^m$ and any subset $S \subset \{1,\ldots,m\}$ let $\mathbf{y}_S$ denote the corresponding length-$|S|$ substring. Our choice of $P$ is of the form
\begin{equation}
    \label{eq:prodP}
    P(\mathbf{y}) = \prod_{i=1}^{m/D} P^{(D)}(\mathbf{y}_{S_i}).
\end{equation}
Consequently, its Hadamard transform is of the form
\begin{equation}
    \label{eq:prodPtilde}
    \widetilde{P}(\mathbf{e}) = \prod_{i=1}^{m/D} \widetilde{P}^{(D)}(\mathbf{e}_{S_i}).
\end{equation}

To describe $P^{(D)}$ we need the following definitions.
\begin{definition}
	Let $D$ be an even integer. A set $J_D$ is called a symmetric split of balanced words of size $D$ if
	\begin{itemize}
		\item $J_D \subseteq \{\mathbf{y} \in \mathbb{F}_2^D : |\mathbf{y}| = \frac{D}{2}\}$
		\item $|J_D| = \frac{1}{2} \cdot \binom{D}{D/2}$
		\item $J_D \cup (J_D + 1^D) = \{\mathbf{y} \in \mathbb{F}_2^D : |\mathbf{y}| = \frac{D}{2}\}$.
	\end{itemize} 
\end{definition}
\vspace{11pt}
\begin{definition}
    \label{def:ID}
	For each integer $D$, we fix a set $I_D \subseteq \mathbb{F}_2^D$ with $|I_D| = 2^{D-1}$ that minimizes the quantity $\sum_{\mathbf{y} \in I_D} |\mathbf{y}|$ and such that $I_D \cup (I_D + 1^D) = \mathbb{F}_2^D$. More precisely:
	\begin{itemize}
		\item If $D$ is odd, we choose $I_D = \{\mathbf{y} \in \mathbb{F}_2^D : |\mathbf{y}| \le (D-1)/2\}$.
		\item If $D$ is even, we choose 
		$I_D = \{\mathbf{y} \in \mathbb{F}_2^D : |\mathbf{y}| \le \frac{D}{2} - 1\} \cup J_D$, where $J_D$ is any symmetric split of balanced words of size $D$.
	\end{itemize}
	We also write $I_D^* = I_D \backslash \{0^D\}$.
\end{definition}
We remark that the set $I_D$ corresponds to the ``coset leaders'' of \cite{buzet2025fine}.
With these definitions in hand, we can define $P^{(D)}$ as follows
\begin{equation}
    \label{eq:PDdef}
    P^{(D)}(\mathbf{y}_{S_i}) = \left\{ \begin{array}{cl}
	\sqrt{1-\alpha} & \textrm{if } \mathbf{y}_{S_i} = 0^D \\
	\sqrt{\frac{\alpha}{2^{D-1} - 1}} & \textrm{if } \mathbf{y}_{S_i} \in I_D^* \\
	0 & \textrm{otherwise.}
	\end{array}\right.
\end{equation}

Using this choice of $P$, we obtain a superposition over quantum decoding problems, one for each $\mathbf{d} \in C^\perp$. By \cref{eq:prodPtilde}, our quantum decoding problem is, given a state of the form
\begin{equation}
    \label{eq:prodp2}
    \sum_{\mathbf{e} \in \mathbb{F}_2^m} \prod_{i=1}^{m/D} \widetilde{P}^{(D)}(\mathbf{e}_{S_i}) \ket{\mathbf{d} + \mathbf{e}},
\end{equation}
to recover the codeword $\mathbf{d} \in C^\perp$.

We now describe our decoding method, which we call Fine-Grained Unambiguous Measurements (FGUM), following \cite{buzet2025fine}. Up to ordering of qubits, the state described in \cref{eq:prodp2} can be re-expressed as
\begin{equation}
    \sum_{\mathbf{e} \in \mathbb{F}_2^m} \widetilde{P}(\mathbf{e}) \ket{\mathbf{d} + \mathbf{e}} = \bigotimes_{i=1}^{m/D} \ket{\widetilde{\psi}_i^{(D)}}
\end{equation}
where
\begin{equation}
    \ket{\widetilde{\psi}_i^{(D)}} = \sum_{\mathbf{e} \in \mathbb{F}_2^D} \widetilde{P}^{(D)}(\mathbf{e}) \ket{\mathbf{d}_{S_i} + \mathbf{e}}.
\end{equation}
We measure each of these $m/D$ tensor factors as follows. First, we apply a Hadamard transform, yielding $H^{\otimes D} \ket{\widetilde{\psi}_i^{(D)}} = \ket{\psi_i^{(D)}}$ given by
\begin{equation}
    \label{eq:psied}
    \ket{\psi_i^{(D)}} = \sum_{\mathbf{y} \in \mathbb{F}_2^D} (-1)^{\mathbf{d}_{S_i} \cdot \mathbf{y}} P^{(D)}(\mathbf{y}) \ket{\mathbf{y}}.
\end{equation}
Next, we perform the two-outcome measurement $\{ M_0, M_1 \}$ where
\begin{align}
    M_0 &= \left(\frac{\alpha}{(1-\alpha)(2^{D-1} - 1)}\right)^{1/2} \ket{0^D}\bra{0^D} + \sum_{\mathbf{y} \neq 0^D} \ket{\mathbf{y}}\bra{\mathbf{y}} \label{eq:M0}\\
    M_1 &=  \left(1-\frac{\alpha}{(1-\alpha)(2^{D-1} - 1)}\right)^{1/2} \ket{0^D}\bra{0^D}. \label{eq:M1}
\end{align}
This measurement is non-projective but satisfies the postulates of a general measurement in quantum mechanics, provided $\alpha \leq 1-2^{1-D}$. As we will later show, this condition is satisfied by all values of $\alpha$ used in our algorithm. (See \cref{eq:alphamin}.)

From this measurement, we obtain the outcome $0$ with probability
\begin{equation}
    p_0 = \bra{\psi_i^{(D)}} M_0^\dag M_0 \ket{\psi_i^{(D)}}.
\end{equation}
By \cref{eq:M0} and \cref{eq:psied},
\begin{equation}
    p_0 =  \frac{\alpha}{(1-\alpha)(2^{D-1} - 1)} |P^{(D)}(0^D)|^2 + \sum_{\mathbf{y} \neq 0^D} |P^{(D)}(\mathbf{y})|^2.
\end{equation}
Hence, by \cref{eq:PDdef}
\begin{equation}
    \label{eq:p0alpha}
    p_0  = \alpha \frac{2^{D-1}}{2^{D-1} - 1}.
\end{equation}
Conditioned on getting the $0$ outcome, the post-measurement state is
$\frac{M_0 \ket{\psi_i^{(D)}}}{\| M_0 \ket{\psi_i^{(D)}} \|}$. By \cref{eq:M0}, \cref{eq:psied}, and \cref{eq:PDdef}, 
\begin{equation}
    \frac{M_0 \ket{\psi_i^{(D)}}}{\| M_0 \ket{\psi_i^{(D)}} \|} = \frac{1}{\sqrt{2^{D-1}}}\sum_{\mathbf{y} \in I_D} (-1)^{\mathbf{d}_{S_i} \cdot \mathbf{y}} \ket{\mathbf{y}}.
\end{equation}
Next, we perform the operation $\ket{\mathbf{i}} \rightarrow \frac{1}{\sqrt{2}} \left( \ket{\mathbf{i}} + \ket{\mathbf{i} + 1^D}\right)$ for each $\mathbf{i} \in I_D$. This can be done unitarily due to the definition of $I_D$. Since $\mathbf{d}_{S_i} \cdot 1^D = 0$, we thereby obtain the state 
\begin{equation}
    \frac{1}{\sqrt{2^D}}\sum_{\mathbf{y} \in \mathbb{F}_2^D} (-1)^{\mathbf{d}_{S_i} \cdot \mathbf{y}} \ket{\mathbf{y}}.
\end{equation}
Applying a Hadamard transform and measuring in the computational basis then yields $\mathbf{d}_{S_i}$.

We have described the above procedure as a sequence of two measurements. We first apply a measurement with outcomes 0 and 1. If we get outcome 1 we give up. If we get outcome 0 we then apply a second measurement with outcomes in $\mathbb{F}_2^D$. We could also conceptualize this as a single measurement with $2^D+1$ outcomes: one ``failure'' outcome which yields no information about $\mathbf{d}_{S_i}$, and $2^D$ ``success'' outcomes which tell us $\mathbf{d}_{S_i}$ with certainty. In describing and analyzing the Regev+FGUM algorithm throughout the remainder of this manuscript we will use this latter formulation of FGUM as a measurement with $2^D+1$ outcomes.

After applying the FGUM to each of the $m/D$ blocks of qubits from our partition, we can postprocess the resulting classical information by regarding the measurement outcomes, along with the \textit{a priori} knowledge that $B^T \mathbf{d} = \mathbf{0}$ as a combined system of linear equations over $\mathbb{F}_2$. If this is fully determined, we obtain $\mathbf{d}$. Next, we analyze with what probability this occurs.

\section{Analysis of FGUM}
\label{sec:FGUM_analysis}

We can interpret the failed measurements as erasure errors. There are $m/D$ blocks in our partition. On each one we apply a fine-grained unambiguous measurement, which succeeds with probability $p_0$. Thus the error model seen in the classical postprocessing of the measurement outcomes is one in which each block is erased with probability $1-p_0$. In addition to the measurement outcomes, we also have the \textit{a priori} knowledge that $B^T \mathbf{d} = \mathbf{0}$. This is a system of $n$ equations. Each erased bit is a variable that we need to solve for. Thus, one might naively expect that the classical decoder can succeed (using Gaussian elimination) as long as the total expected number of erasures, namely $(1-p_0)m$, is less than or equal to $n$. However, this is not quite correct, due to some linear dependencies between the equations. We thus define $e_{\max}$ as follows.

\begin{definition}
    Let $B \in \mathbb{F}_2^{m \times n}$ be a matrix from the $(k,D)$-Gallager ensemble and let $C^\perp = \{ \mathbf{d} \in \mathbb{F}_2^m : B^T \mathbf{d} = \mathbf{0} \}$ be the corresponding LDPC code. Let $S_1,\ldots,S_{m/D}$ be a partition of the codeword bits derived from $B$ as in \cref{sec:fgum}. We define $e_{\max}$ as the maximum probability such that if each block of bits from a codeword $\mathbf{d} \in C^\perp$ is independently erased with probability $e_{\max}$ the erased bits can be recovered with high probability in the limit $n \to \infty$ by solving the system of $\mathbb{F}_2$-linear equations arising from the parity check condition $B^T \mathbf{d} = \mathbf{0}$.
\end{definition}

Let $\nu$ be the number of erased bits (out of $m$) that we need to solve for. Let's now estimate $E$, the expected number of $\mathbb{F}_2$-linear equations constraining these unknown bits. We expect the linear system to become solvable when $E=\nu + O(1)$. We have a total of $n$ parity check equations. Some of these equations contain only known bits. $E$ counts the equations that contain unknown bits.

To calculate $E$ we divide the $n$ parity check equations into three categories. The first is a parity check equation that defines a block from the partition that has been erased. In expectation there are $E_1 = \nu/D$ such equations. All of these equations contain erased bits. The second is a parity check equation that defines a block from the partition that has not been erased. In expectation there are $(m-\nu)/D$ of these. However, none of these contain any erased bits so their contribution to $E$ is $E_2 = 0$. The third is a parity check equation that does not define a block from the partition. There are $n-m/D$ of these. Some of these contain erased bits but not all. Each of these equations contains $D$ bits. Since $\nu$ out of the $m$ bits have been erased, one estimates that the probability that none of these $D$ bits in such a parity check are erased is $(1-\nu/m)^D$. Hence, from these $n-m/D$, the expected number containing erased bits is $E_3 \simeq (n-m/D)(1-(1-\nu/m)^D)$. Our estimate of the total number of equations constraining the erased bits is $E = E_1+E_2+E_3$. That is,
\begin{equation}
    \label{eq:E}
    E \simeq \frac{\nu}{D} + \left(n - \frac{m}{D} \right) \left( 1 - \left(1-\frac{\nu}{m} \right)^D \right).
\end{equation}
To solve for $e_{\max}$ we substitute $\nu = e_{\max} m$ into \cref{eq:E} and demand that the number of equations equals the number of unknowns, \textit{i.e.} $E = e_{\max} m$. That is,
\begin{equation}
    \label{eq:emax1}
    e_{\max} \simeq \frac{e_{\max}}{D} + \left( \frac{n}{m} - \frac{1}{D} \right) \left( 1 - \left(1-e_{\max} \right)^D \right).
\end{equation}
For Gallager's ensemble $n/m = k/D$, and thus \cref{eq:emax1} is more conveniently written as
\begin{equation}
    \label{eq:emax}
    e_{\max} \simeq \frac{e_{\max}}{D} + \left( \frac{k}{D} - \frac{1}{D} \right) \left( 1 - \left(1-e_{\max} \right)^D \right).
\end{equation}
We can solve for the largest root of \cref{eq:emax} numerically to find our approximation to $e_{\max}$ for any given $k$ and $D$.

To find the minimum achievable $\alpha$, we set $e_{\max} = 1 - p_0$ and use \cref{eq:p0alpha}. This yields
\begin{equation}
    \label{eq:alphamin}
    \alpha_{\min} = (1-e_{\max}) (1-2^{1-D}).
\end{equation}

Lastly, we consider the expected number of constraints satisfied in our max-XORSAT instance as a function of $\alpha$. From \cref{eq:PDdef}, we see that the expected number of satisfied constraints in a block of $D$ is
\begin{equation}
    \label{eq:bsda}
    \bar{s}(D, \alpha) = (1-\alpha) D + \alpha \bar{I}^*_D,
\end{equation}
where $\bar{I}^*_D$ is the expected number of constraints satisfied by a uniformly random element of $I^*_D$. Observe that 
\begin{equation}
    \label{eq:bisd}
    \bar{I}^*_D = D-\hat{I}^*_D,
\end{equation}
where $\hat{I}^*_D$ is the expected Hamming weight of a uniformly random element of $I^*_D$. Substituting \cref{eq:bisd} into \cref{eq:bsda} and simplifying yields
\begin{equation}
    \label{eq:exp}
    \bar{s}(D,\alpha) = D - \alpha \hat{I}^*_D.
\end{equation}
By \cref{def:ID},
\begin{equation}
    \label{eq:hatid}
    \hat{I}_D^* = \left\{ \begin{array}{cl} \frac{\sum_{w=1}^{(D-1)/2} \binom{D}{w} w}{\sum_{w=1}^{(D-1)/2} \binom{D}{w}} & \textrm{if $D$ is odd} \vspace{10pt} \\
                                          \frac{\sum_{w=1}^{D/2-1} \binom{D}{w} w + \frac{1}{2} \binom{D}{D/2} \frac{D}{2}}{\sum_{w=1}^{D/2-1} \binom{D}{w} + \frac{1}{2} \binom{D}{D/2}}
                                         & \textrm{if $D$ is even.} \end{array} \right.
\end{equation}
The expected fraction of satisfied constraints produced by the quantum algorithm is then
\begin{equation}
    \label{eq:satfrac}
    \frac{\langle s \rangle}{m} = \frac{\bar{s}(D,\alpha_{\min})}{D}.
\end{equation}
By substituting \cref{eq:alphamin} and \cref{eq:exp} into \cref{eq:satfrac} and simplifying one obtains
\begin{equation}
    \label{eq:simple_satfrac}
    \frac{\langle s \rangle}{m} = 1 - \left( 1 - e_{\max} \right) \left( 1 - 2^{1-D} \right) \frac{\hat{I}^*_D}{D}.
\end{equation}
Evaluating \cref{eq:hatid} and \cref{eq:emax} and substituting the results into \cref{eq:simple_satfrac} yields the satisfaction fractions shown in the Regev+FGUM column of \cref{tab:scores}.

\section{Error Bound}
\label{sec:errors}

In the previous section, we calculated the maximum satisfaction fraction achievable such that our FGUM method is expected to succeed on the resulting average case quantum decoding problem. However, this does not mean that the decoding will succeed with $100\%$ probability for each codeword in the superposition. Rather, the failure probability $\epsilon$ will in general take some nonzero value, which depends on our choice of $\alpha$. In this section, we prove that if $\epsilon$ is small, then Regev's reduction will still work well solving the max-XORSAT problem $\min_{\mathbf{x}} |B \mathbf{x} - \mathbf{v}|$, for average case $\mathbf{v}$. In fact, with an eye toward future work, we prove an error bound as a function of $\epsilon$ that is much more general than what we need for our immediate purpose and then obtain a bound for Regev's reduction applied to max-XORSAT as a corollary. Many similar error bounds exist in the literature \cite{chailloux2024quantum, JSW25, quantumEquivalenceSLWEISIS}, but none of the existing theorems perfectly fits our present needs.

We first establish some notation and terminology in order to state our main result in its full generality. Throughout this section, we consider optimization and decoding problems over the finite field $\mathbb{F}_q$ of characteristic $p$. The dual pair of max-XORSAT and binary decoding arise in the special case of $\mathbb{F}_2$. We start by formally defining the quantum decoding problem for a general alphabet.
\begin{definition}
    \label{def:QDP} The quantum decoding problem $\mathrm{QDP}_{C^\perp,\widetilde{P}}$ is the following task: given as input the state $\sum_{\mathbf{e} \in \mathbb{F}_q^m} \widetilde{P}(\mathbf{e}) \ket{\mathbf{d} + \mathbf{e}}$, for a uniformly random $\mathbf{d} \in C^\perp$, output $\mathbf{d}$.
\end{definition}

\begin{definition}\label{def:qdpstinespring}
    By Stinespring's dilation theorem~\cite{NC00}, we may assume without loss of generality that any algorithm for $\mathrm{QDP}_{C^\perp, \widetilde{P}}$: (a) appends an ancilla register $\ket{0_\mathrm{anc}}$ in the all 0's state; (b) applies a unitary $U_\mathrm{dec}$; then (c) measures the final register in the standard basis.
    We write the action of $U_\mathrm{dec}$ as:
    \begin{equation}
    \label{eq:udec}
    U_{\mathrm{dec}} \sum_{\mathbf{e} \in \mathbb{F}_q^m} \widetilde{P}(\mathbf{e}) \ket{\mathbf{d} + \mathbf{e}} \ket{0_\mathrm{anc}} = \sqrt{1-\epsilon_{\mathbf{d}}} \ket{\xi_{\mathbf{d}}} \ket{\mathbf{d}} + \sqrt{\epsilon_{\mathbf{d}}} \ket{\delta_{\mathbf{d}}}.
    \end{equation}
    Here, $\ket{\xi_{\mathbf{d}}}$ is an arbitrary normalized garbage state. $\ket{\delta_{\mathbf{d}}}$ is a normalized state of both registers which has zero amplitude on any state in which the second register contains $\mathbf{d}$, but is otherwise arbitrary.
    The probability that this algorithm recovers $\mathbf{d}$ successfully is therefore $1-\epsilon_\mathbf{d}$.
    The overall success probability of the $\mathrm{QDP}$ algorithm is then defined as $1 - \frac{1}{|C^\perp|} \sum_{\mathbf{d} \in C^\perp} \epsilon_\mathbf{d}$.
\end{definition}

\noindent
For any $a \in \mathbb{F}_q$, let:
\begin{equation}
    \label{eq:shift_single}
    X_q^{(a)} = \sum_{j \in \mathbb{F}_q} \ket{j+a}\bra{j}
\end{equation}
be the shift operator for $\mathbb{F}_q$. If $\mathbb{F}_q$ is a prime field then $X_q^{(a)}$ is simply $X_q^{(1)}$ raised to the $a\th$ power, but for extension fields this is not true. For $\mathbb{F}_2$ we recover the Pauli $X$ operator. Given $\mathbf{v} \in \mathbb{F}_q^m$ define
\begin{equation}
    \label{eq:shift_multi}
    X_q^{\mathbf{v}} = X_q^{(v_1)} \otimes \cdots \otimes X_q^{(v_m)}.
\end{equation}
Next, we state the main result of this section.
\begin{restatable}[Main error bound]{theorem}{mainErrorBound}
    \label{thm:expectation_bound} Let $\hobj \in \mathbb{C}^{q^m \times q^m}$ be a diagonal matrix with entries in $[0,1]$ and $\ket{\mathcal{S}} = \sum_{\mathbf{x} \in \mathbb{F}_q^m} P(\mathbf{x}) \ket{\mathbf{x}}$ be a normalized state. Let $\widetilde{P}$ denote the Fourier transform of $P$ over $\mathbb{F}_q^m$. Let $C$ be a code over $\mathbb{F}_q^m$ and let $C^\perp$ be its dual. Using any quantum algorithm for $\mathrm{QDP}_{C^\perp,\widetilde{P}}$ succeeding with probability $1-\epsilon$, Regev's reduction produces a mixed state $\rho(\mathbf{v})$ with support only on the codewords of $C$ such that
    \[
    \mathbb{E}_{\mathbf{v}} \left[ \tr \left[ X_q^{\mathbf{v}} \hobj X_q^{-\mathbf{v}} \rho(\mathbf{v}) \right]\right] \geq \braket{\mathcal{S} | \hobj | \mathcal{S}} - 2 \sqrt{\epsilon},
    \]
    where the expectation is taken over uniform $\mathbf{v} \in \mathbb{F}_q^m$.
\end{restatable}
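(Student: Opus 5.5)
The plan is to make Regev's reduction explicit as a sequence of isometries and compare the real output with an ideal one. Both states will be written as vectors in a common space, with $\mathbf{v}$ kept as a classical label that we average over at the end. The real algorithm, for a fixed $\mathbf{v}$, runs as follows.
\begin{enumerate}
\item Prepare $\frac{1}{\sqrt{|C^\perp|}}\sum_{\mathbf{d}\in C^\perp}\ket{\mathbf{d}}\otimes\sum_{\mathbf{e}}\widetilde{P}(\mathbf{e})\ket{\mathbf{e}}$.
\item Apply the phases $\omega^{\mathbf{v}\cdot\mathbf{e}}$ on the error register, where $\omega$ denotes the additive character of $\mathbb{F}_q$. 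Then add the first register into the second, giving $\ket{\mathbf{d}}\ket{\mathbf{d}+\mathbf{e}}$.
\item Run $U_{\mathrm{dec}}$ of \cref{def:qdpstinespring} on the second register and an ancilla.
\item Subtract the decoder's output register from the first register.
\item Apply $U_{\mathrm{dec}}^\dagger$, which cleans the garbage $\ket{\xi_{\mathbf{d}}}$ out of the ancilla.
\item Discard the first and ancilla registers, apply the inverse Fourier transform, and call the resulting mixed state $\rho(\mathbf{v})$.
\end{enumerate}
The steps ``copy out, then uncompute the decoder'' are the natural way around the $\mathbf{d}$-dependent garbage in \cref{eq:udec}. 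Because the subtraction commutes with $U_{\mathrm{dec}}$ on the successful branch, that branch returns to $\ket{0}\ket{0}\ket{0_{\mathrm{anc}}}$ tensored with the desired corrupted codeword.

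Concretely, I will compare the real state with the unnormalized ideal vector
\[
\ket{\Phi_{\mathbf{v}}}=\ket{0}\ket{0_{\mathrm{anc}}}\otimes\frac{1}{\sqrt{|C^\perp|}}\sum_{\mathbf{d}\in C^\perp}\sum_{\mathbf{e}}\omega^{\mathbf{v}\cdot\mathbf{e}}\widetilde{P}(\mathbf{e})\ket{\mathbf{d}+\mathbf{e}}.
\]
Note that $\ket{\Phi_{\mathbf{v}}}$ is not normalized for each $\mathbf{v}$. By the convolution theorem, its inverse Fourier transform is proportional to $\sum_{\mathbf{c}\in C}P(\mathbf{c}-\mathbf{v})\ket{\mathbf{c}}$, up to the conventional sign of $\mathbf{v}$. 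Its squared norm is $\frac{q^m}{|C|}\sum_{\mathbf{c}\in C}|P(\mathbf{c}-\mathbf{v})|^2$.

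Averaging over uniform $\mathbf{v}$ gives two identities, which I will check first:
\[
\mathbb{E}_{\mathbf{v}}\|\Phi_{\mathbf{v}}\|^2=\|\mathcal{S}\|^2=1,\qquad
\mathbb{E}_{\mathbf{v}}\bra{\Phi_{\mathbf{v}}}X_q^{\mathbf{v}}\hobj X_q^{-\mathbf{v}}\ket{\Phi_{\mathbf{v}}}=\sum_{\mathbf{x}}|P(\mathbf{x})|^2\hobj(\mathbf{x})=\braket{\mathcal{S}|\hobj|\mathcal{S}}.
\]
Here the operator acts on the Fourier-transformed data register, and the second identity holds because, as $\mathbf{v}$ and $\mathbf{c}$ range, $\mathbf{c}-\mathbf{v}$ sweeps each $\mathbf{x}$ exactly $|C|$ times. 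The ideal output also has support only on $C$. Since $\hobj$ is diagonal, the same holds for the real state after the $\perp$-branch is accounted for: a measurement of the first register can be inserted, or I will simply note that the real state is supported on $C$ in the final statement as given by the construction. I expect this bookkeeping to be routine.

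Next I bound the distance between the real pure state $\ket{\Psi_{\mathbf{v}}}$ (before discarding registers) and $\ket{\Phi_{\mathbf{v}}}$. The key point is that the decoder never sees $\mathbf{v}$. The phase $\omega^{\mathbf{v}\cdot\mathbf{e}}$ can be pushed through to the end, in the sense that the map from the $\mathbf{v}$-independent pre-decoding state to $\ket{\Psi_{\mathbf{v}}}$ is a fixed isometry. Hence it is cleanest to purify $\mathbf{v}$: introduce a register $\frac{1}{\sqrt{q^m}}\sum_{\mathbf{v}}\ket{\mathbf{v}}$ and view the phase as controlled. Then
\[
\mathbb{E}_{\mathbf{v}}\|\Psi_{\mathbf{v}}-\Phi_{\mathbf{v}}\|^2
\]
is the squared distance of two joint vectors, and after an inverse Fourier transform on the $\mathbf{v}$-register it becomes a sum over the pure inputs $\ket{\mathbf{d}}\ket{\mathbf{d}+\mathbf{e}}$, i.e.\ exactly the QDP states. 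For each $\mathbf{d}$, \cref{eq:udec} says the copy-out-and-uncompute circuit deviates from the ideal clean map by at most about $\sqrt{\epsilon_{\mathbf{d}}}$ in norm, from the $\ket{\delta_{\mathbf{d}}}$ component. Orthogonality over $\mathbf{d}$ then gives a total squared error of order $\frac{1}{|C^\perp|}\sum_{\mathbf{d}}\epsilon_{\mathbf{d}}=\epsilon$.

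Finally, with $A=X_q^{\mathbf{v}}\hobj X_q^{-\mathbf{v}}$ (extended by identity, so $0\le A\le 1$),
\[
|\bra{\Psi}A\ket{\Psi}-\bra{\Phi}A\ket{\Phi}|\le\|\Psi-\Phi\|\,(\|\Psi\|+\|\Phi\|).
\]
Applying Cauchy--Schwarz in the averaging over $\mathbf{v}$, together with $\mathbb{E}\|\Psi\|^2=\mathbb{E}\|\Phi\|^2=1$, yields the deficit $2\sqrt{\epsilon}$.

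The main obstacle is making the error-distance step tight enough to land exactly on $\sqrt{\epsilon}$ rather than, say, $2\sqrt{\epsilon}$ per norm. The copy-out-and-uncompute construction naively costs $\sqrt{\epsilon_{\mathbf{d}}}$ twice, once for $U_{\mathrm{dec}}$ and once for $U_{\mathrm{dec}}^\dagger$. I would first try to show that the failing branch contributes a single orthogonal error component: compute $\|(\text{circuit}-\text{ideal})\ket{\mathbf{d}}\ket{\text{input}}\|^2$ directly, using that on the success branch the subtraction acts trivially after $U_{\mathrm{dec}}^\dagger$. If that only gives a bound like $2\sqrt{\epsilon}$ on the distance, I would instead avoid uncomputing the garbage. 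In that variant $\rho(\mathbf{v})$ would be compared with the ideal via fidelity, with garbage traced out, giving $1-\epsilon$ overlap directly.
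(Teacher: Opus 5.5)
\textbf{There is a genuine gap in the reduction to the decoder's guarantee.} Your step 2 puts the phase on the error before decoding. In branch $\mathbf{d}$ the decoder then receives
$\sum_{\mathbf{e}}\omega^{\mathbf{v}\cdot\mathbf{e}}\widetilde{P}(\mathbf{e})\ket{\mathbf{d}+\mathbf{e}}=\omega^{-\mathbf{v}\cdot\mathbf{d}}Z_q^{\mathbf{v}}\sum_{\mathbf{e}}\widetilde{P}(\mathbf{e})\ket{\mathbf{d}+\mathbf{e}}$.
This is the decoding instance for the shifted peak $P(\cdot-\mathbf{v})$, not $\mathrm{QDP}_{C^\perp,\widetilde{P}}$, and \cref{eq:udec} says nothing about it. Your claim that ``the decoder never sees $\mathbf{v}$'' and that the phase can be pushed to the end is false. $Z_q^{\mathbf{v}}$ acts on the very register $U_{\mathrm{dec}}$ reads and does not commute with it.

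Your purification exposes the problem rather than fixing it. Fourier transforming the $\mathbf{v}$-register writes $\mathbf{e}$ into that register, so the decoder's inputs become computational basis states $\ket{\mathbf{d}+\mathbf{e}}$ with weights $|\widetilde{P}(\mathbf{e})|^2$. Averaged over $\mathbf{v}$, this is the fully dephased classical channel; for $P_\alpha$ it is just $\mathrm{BSC}(\alpha^\perp)$. These are not the QDP states. The quantity $1-\epsilon_{\mathbf{d}}$ is a statement about the coherent superposition only, so a decoder that exploits coherence, such as FGUM, loses its guarantee.

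The repair is the paper's ordering:
\begin{itemize}
\item Apply $Z_q^{-\mathbf{v}}$ to the $\ket{\mathbf{d}}$ register before the addition. This phase is global within each branch, so the decoder sees exactly $\sum_{\mathbf{e}}\widetilde{P}(\mathbf{e})\ket{\mathbf{d}+\mathbf{e}}$.
\item Apply $Z_q^{\mathbf{v}}$ to the data register after $U_{\mathrm{dec}}^\dagger$, using $\chi_{\mathbf{v}}(-\mathbf{d})\chi_{\mathbf{v}}(\mathbf{d}+\mathbf{e})=\chi_{\mathbf{v}}(\mathbf{e})$.
\end{itemize}
With this ordering your purification does work. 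After stripping the final controlled-$Z_q^{\mathbf{v}}$, which is common to the real and ideal states, the $\mathbf{v}$-register records $-\mathbf{d}$ and the branches are orthogonal across $\mathbf{d}$.

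\textbf{Even after this repair, the constant and the support claim remain open.} Write $\mathrm{Sub}$ for your step 4. Per branch, real minus ideal is $\sqrt{\epsilon_{\mathbf{d}}}\,U_{\mathrm{dec}}^\dagger\bigl[\mathrm{Sub}(\ket{\mathbf{d}}\ket{\delta_{\mathbf{d}}})-\ket{0}\ket{\delta_{\mathbf{d}}}\bigr]$. These two pieces are orthogonal, since the first register is nonzero in one and zero in the other. Hence $\mathbb{E}_{\mathbf{v}}\|\Psi_{\mathbf{v}}-\Phi_{\mathbf{v}}\|^2=2\epsilon$, and your inequality gives only about $2\sqrt{2\epsilon}$; your hoped-for single error component does not materialize.

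Your fallback of not applying $U_{\mathrm{dec}}^\dagger$ cannot work. Without it, the data register holds $\mathbf{d}$ plus $\mathbf{d}$-dependent garbage, not the corrupted-codeword superposition that the inverse QFT turns into $P(\mathbf{c}-\mathbf{v})$.

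What is missing is removing the branch with a nonzero first register. The paper postselects on this (step 5), which makes $\mathcal{N}_{\mathrm{dec}}=1/\sqrt{|C^\perp|(1-\epsilon)}$ independent of $\mathbf{v}$. Let $\Psi_t$ be the ideal state rescaled to unit expected norm. Then $\mathbb{E}\braket{\Phi_{\mathrm{actual}}|\Psi_t}=\sqrt{1-\epsilon}$, so $\mathbb{E}\|\Phi_{\mathrm{actual}}\mp\Psi_t\|^2=2\mp2\sqrt{1-\epsilon}$, whose product is exactly $4\epsilon$. Alternatively, since $\hobj\ge 0$ and the branches are orthogonal in the first register, you can simply drop the failure branch, which leaves squared distance $\epsilon$.

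Finally, support on $C$ is not automatic, even after that postselection. The residual $U_{\mathrm{dec}}^\dagger\ket{\delta_{\mathbf{d}}}$ terms have arbitrary support after the inverse QFT. The paper projects with $\Pi_C$ and runs the comparison with $\hobj'(\mathbf{v})=\mathrm{QFT}\,\Pi_C X_q^{\mathbf{v}}\hobj X_q^{-\mathbf{v}}\Pi_C\,\mathrm{QFT}^\dagger$. This equals the original observable on the ideal state, which lies in $C$. It then uses $\hobj\ge0$ and acceptance probability at most $1$ to show that renormalizing after the projection can only increase the score.
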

Note: if $q$ is not prime then the Fourier transform over $\mathbb{F}_q^m$ is defined using the field trace, as discussed in~\cref{sec:error_proofs_prelims}. For $q=2$ this reduces to the ordinary Hadamard transform.

To apply this to $D$-regular max-$k$-XORSAT we take our field to be $\mathbb{F}_2$ and our dual pair of codes $C,C^\perp$ to be as defined in \cref{eq:C} and \cref{eq:Cperp}. We then choose
\begin{equation}
    \hobj = \sum_{\mathbf{x} \in \mathbb{F}_2^m} \left( 1 - \frac{|\mathbf{x}|}{m}\right) \ket{\mathbf{x}} \bra{\mathbf{x}}.
\end{equation}
We thus obtain the following as a special case of \cref{thm:expectation_bound}.
\begin{corollary}
    \label{cor:expectation_bound2}
    Given $B \in \mathbb{F}_2^{m \times n}$, let $C=\{B \mathbf{x} : \mathbf{x} \in \mathbb{F}_2^n \}$ and $C^\perp = \{\mathbf{d} \in \mathbb{F}_2^m : B^T \mathbf{d} = \mathbf{0}\}$. Let $P:\mathbb{F}_2^m \to \mathbb{C}$ be a function satisfying the normalization condition $\sum_{\mathbf{x} \in \mathbb{F}_2^m} |P(\mathbf{x})|^2 = 1$. Let $\widetilde{P}$ denote the Hadamard transform of $P$. Choose $\mathbf{v} \in \mathbb{F}_2^m$ uniformly at random, and apply Regev's reduction to the max-XORSAT problem of finding a codeword $\mathbf{c} \in C$ minimizing $|\mathbf{c} - \mathbf{v}|$. Using any quantum algorithm for $\mathrm{QDP}_{C^\perp,\widetilde{P}}$ succeeding with probability $1-\epsilon$, this procedure produces a codeword $\mathbf{c} \in C$ such that
    \[
        \mathbb{E}_{\mathbf{v}} \left[ 1-\frac{|\mathbf{c} - \mathbf{v}|}{m} \right] \geq \left[ \sum_{\mathbf{y} \in \mathbb{F}_2^m} |P(\mathbf{y})|^2 \left( 1- \frac{|\mathbf{y}|}{m} \right) \right] - 2 \sqrt{\epsilon}.
    \]
\end{corollary}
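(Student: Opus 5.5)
The corollary follows from \cref{thm:expectation_bound} by specializing to $q=2$, taking $C,C^\perp$ as in \cref{eq:C} and \cref{eq:Cperp}, and choosing $\hobj = \sum_{\mathbf{x}} (1-|\mathbf{x}|/m)\ket{\mathbf{x}}\bra{\mathbf{x}}$. The steps are as follows.

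First we check the hypotheses of the theorem. The operator $\hobj$ is diagonal with entries $1-|\mathbf{x}|/m \in [0,1]$. The state $\ket{\mathcal{S}} = \sum_{\mathbf{x}} P(\mathbf{x})\ket{\mathbf{x}}$ is normalized by the stated condition $\sum_{\mathbf{x}}|P(\mathbf{x})|^2=1$. Over $\mathbb{F}_2$ the Fourier transform is the ordinary Hadamard transform, so $\widetilde{P}$ is the same object in both statements. The theorem then gives a mixed state $\rho(\mathbf{v})$, supported on codewords of $C$, with $\mathbb{E}_{\mathbf{v}}\tr[X^{\mathbf{v}}\hobj X^{\mathbf{v}}\rho(\mathbf{v})] \ge \braket{\mathcal{S}|\hobj|\mathcal{S}} - 2\sqrt{\epsilon}$. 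Over $\mathbb{F}_2$ we have $X^{-\mathbf{v}}=X^{\mathbf{v}}$.

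Next we evaluate both sides.

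\emph{Right side.} Since $\hobj$ is diagonal, $\braket{\mathcal{S}|\hobj|\mathcal{S}} = \sum_{\mathbf{y}}|P(\mathbf{y})|^2(1-|\mathbf{y}|/m)$, which is exactly the bracketed term in the corollary.

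\emph{Left side.} Conjugating a diagonal operator by the shift permutes its diagonal:
\[
X^{\mathbf{v}}\hobj X^{\mathbf{v}} = \sum_{\mathbf{x}} \left(1-\frac{|\mathbf{x}+\mathbf{v}|}{m}\right)\ket{\mathbf{x}}\bra{\mathbf{x}}.
\]
Over $\mathbb{F}_2$, $|\mathbf{x}+\mathbf{v}|=|\mathbf{x}-\mathbf{v}|$. Hence
\[
\tr[X^{\mathbf{v}}\hobj X^{\mathbf{v}}\rho(\mathbf{v})] = \sum_{\mathbf{c}\in C}\bra{\mathbf{c}}\rho(\mathbf{v})\ket{\mathbf{c}}\,\left(1-\frac{|\mathbf{c}-\mathbf{v}|}{m}\right).
\]
Here the sum runs over $C$ only, because $\rho(\mathbf{v})$ is supported on codewords of $C$.

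Finally we interpret this quantity. Measuring $\rho(\mathbf{v})$ in the computational basis, which is the final step of Regev's reduction, yields $\mathbf{c}\in C$ with probability $\bra{\mathbf{c}}\rho(\mathbf{v})\ket{\mathbf{c}}$. The left side is therefore the expectation of $1-|\mathbf{c}-\mathbf{v}|/m$ over both the measurement outcome and the uniformly random $\mathbf{v}$, which gives the claimed inequality.

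The only point needing care is the sign convention for the shift, namely that the theorem's $X^{\mathbf{v}}\hobj X^{-\mathbf{v}}$ measures distance to $\mathbf{v}$ rather than to $-\mathbf{v}$. Over $\mathbb{F}_2$ these coincide, so this is not a real obstacle. The substantive content is already contained in \cref{thm:expectation_bound}, which we are allowed to assume.
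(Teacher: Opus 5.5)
Your proposal is correct and matches the paper's argument. The paper obtains the corollary by specializing \cref{thm:expectation_bound} to $q=2$, with the dual pair from \cref{eq:C} and \cref{eq:Cperp} and $\hobj=\sum_{\mathbf{x}}(1-|\mathbf{x}|/m)\ket{\mathbf{x}}\bra{\mathbf{x}}$. Your explicit evaluation of $\braket{\mathcal{S}|\hobj|\mathcal{S}}$, and of the shifted diagonal observable on the codeword-supported $\rho(\mathbf{v})$, spells out what the paper leaves implicit.
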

Informally, this says that, for max-XORSAT with average case $\mathbf{v}$, a nonzero failure rate $\epsilon$ for the quantum decoding reduces the fraction of constraints satisfied by the bit string produced by Regev's reduction by at most $2 \sqrt{\epsilon}$.

\section{Proof of Error Bound}
\label{sec:error_proofs}

\subsection{Preliminaries}\label{sec:error_proofs_prelims}
In preparation for proving \cref{thm:expectation_bound}, we first introduce some terminology, notation, and useful theorems. All of the theorems in this subsection are well known and can be proven by direct calculation, which we omit.

\begin{definition}[Characters]
    Let $\omega_p := \exp(2\pi i/p)$. For an element $a \in \mathbb{F}_q$ we define the trace $\tr(a) := \sum_{i = 0}^{r-1} a^{p^i} \in \mathbb{F}_p$, where $r$ is such that $q = p^r$. 
    For elements $a, b \in \mathbb{F}_q$ the character is:
    \begin{equation}
        \chi_a(b) = \omega_p^{\tr(ab)}.
    \end{equation}
    For vectors $\mathbf{a}, \mathbf{b} \in \mathbb{F}_q^m$, we define $\chi_{\mathbf{a}}(\mathbf{b}) = \prod_{i = 1}^m \chi_{a_i}(b_i)$.
\end{definition}
Some important properties of characters are the homomorphic property
\begin{equation}
    \label{eq:character_homomorphic}
    \chi_{\mathbf{a}}(\mathbf{c}_1) \chi_{\mathbf{a}}(\mathbf{c}_2) = \chi_\mathbf{a}(\mathbf{c}_1+\mathbf{c}_2),
\end{equation}
the orthogonality property
\begin{equation}
    \label{eq:character_orthogonality}
    \sum_{\mathbf{a} \in \mathbb{F}_q^m} \chi_{\mathbf{a}}(\mathbf{c}) = \left\{ \begin{array}{ll} q^m & \textrm{if $\mathbf{c} = \mathbf{0}$} \\ 0 & \textrm{otherwise} \end{array}\right.
\end{equation}
and the complex conjugate identity
\begin{equation}
    \label{eq:character_cc}
    \chi_{\mathbf{a}}(\mathbf{c})^* = \chi_{\mathbf{a}}(-\mathbf{c}).
\end{equation}

Using the quantum Fourier transform over $\mathbb{F}_q$ we can define clock operators to go with the shift operators defined in \cref{eq:shift_single} and \cref{eq:shift_multi}, as follows. Let $Z_q^{(a)}$ denote the clock operator:
\begin{equation}
    Z_q^{(a)} = \sum_{x \in \mathbb{F}_q} \chi_a(x)\ket{x} \bra{x}.
\end{equation}
For a vector $\mathbf{v} \in \mathbb{F}_q^m$ we also use the following notation, analogous to \cref{eq:shift_multi}:
\begin{equation}
    Z_q^{\mathbf{v}} = Z_q^{(v_1)} \otimes \cdots \otimes Z_q^{(v_m)}.
\end{equation}
For $\mathbb{F}_2$ the clock and shift operators reduce to the standard Pauli $Z$ and $X$ operators. Using the field trace, we next define the quantum Fourier transform over $\mathbb{F}_q$. 
\begin{definition}[Quantum Fourier transform]\label{def:QFT}
    Let $\mathrm{QFT}_q$ denote the unitary such that
    \begin{equation}
        \mathrm{QFT}_q \ket{x} = \frac{1}{\sqrt{q}}\sum_{y\in \mathbb{F}_q} \chi_x(y) \ket{y}.
    \end{equation}
    Furthermore, for a quantum state $\ket{\psi} = \sum_x \psi_x \ket{x}$ we denote
    \begin{equation}
        \ket{\widetilde{\psi}} = \sum_{x}\widetilde{\psi}_x\ket{x} = \mathrm{QFT}_q \ket{\psi}.
    \end{equation}
\end{definition}
\noindent
Note that we slightly abuse notation and denote by $\mathrm{QFT}$ any of $\mathrm{QFT}_q^{\otimes j}$ for some $j>0$ (the field size $q$ will be a variable or clear from context).

With these definitions in hand it is straightforward to show that
\begin{equation}\label{eq:QFTXZdual}
    \mathrm{QFT} \ X_q^{(a)} \ \mathrm{QFT}^\dag = Z_q^{(a)} \quad \mathrm{and} \quad \mathrm{QFT}^\dag \ X_q^{(a)} \ \mathrm{QFT} = Z_q^{(-a)}
\end{equation}
and
\begin{equation}
    \mathrm{QFT}_q^\dag \ket{x} = \frac{1}{\sqrt{q}}\sum_{y\in \mathbb{F}_q} \chi_x(-y) \ket{y}.
\end{equation}
\begin{definition}[Convolution of quantum states]
For quantum states $\ket{\psi} = \sum_{\mathbf{x} \in \mathbb{F}_q^m} \psi_{\mathbf{x}} \ket{\mathbf{x}}$ and $\ket{\phi} = \sum_{\mathbf{x} \in \mathbb{F}_q^m} \phi_{\mathbf{x}} \ket{\mathbf{x}}$, their convolution is the unnormalized quantum state
\begin{equation}
    \ket{\psi} * \ket{\phi} = \ket{\psi * \phi} = \sum_{\mathbf{x},\mathbf{y} \in \mathbb{F}_q^m} \psi_{\mathbf{x}} \phi_{\mathbf{y}} \ket{\mathbf{x}+\mathbf{y}}.
\end{equation}
\end{definition}
\begin{definition}[Pointwise product of quantum states]\label{def:pointwise_product}
For quantum states $\ket{\psi} = \sum_{\mathbf{x} \in \mathbb{F}_q^m} \psi_{\mathbf{x}} \ket{\mathbf{x}}$ and $\ket{\phi} = \sum_{\mathbf{x} \in \mathbb{F}_q^m} \phi_{\mathbf{x}} \ket{\mathbf{x}}$, their pointwise product is the unnormalized quantum state
\begin{equation}
    \ket{\psi}\odot\ket{\phi} = \ket{\psi \odot \phi} = \sum_{\mathbf{x} \in \mathbb{F}_q^m}\psi_{\mathbf{x}} \phi_{\mathbf{x}} \ket{\mathbf{x}}.
\end{equation}
\end{definition}
\noindent
\begin{theorem}[Fourier Convolution Theorem]\label{thm:convolution}
For quantum states $\ket{\psi} = \sum_{\mathbf{x} \in \mathbb{F}_q^m} \psi_{\mathbf{x}} \ket{\mathbf{x}}$ and $\ket{\phi} = \sum_{\mathbf{x} \in \mathbb{F}_q^m} \phi_{\mathbf{x}} \ket{\mathbf{x}}$,
    \begin{equation}
        \mathrm{QFT} \ket{\psi \odot \phi} = \frac{1}{\sqrt{q^m}} \left(\mathrm{QFT} \ket{\psi}\right) * \left(\mathrm{QFT} \ket{\phi}\right).
    \end{equation}
    This identity also holds if we replace $\mathrm{QFT}$ with $\mathrm{QFT}^\dag$ throughout.
\end{theorem}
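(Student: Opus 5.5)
The plan is to verify the identity by expanding both sides in the computational basis and comparing the amplitude on each basis state $\ket{\mathbf{y}}$. The only tools needed are the symmetry $\chi_{\mathbf{a}}(\mathbf{b}) = \chi_{\mathbf{b}}(\mathbf{a})$, the homomorphic property \cref{eq:character_homomorphic}, and the orthogonality relation \cref{eq:character_orthogonality}. Symmetry holds because $\tr(ab)=\tr(ba)$ coordinatewise. By bilinearity of $(a,b)\mapsto \tr(ab)$, the homomorphic property holds in either argument, so in particular $\chi_{\mathbf{x}}(\mathbf{y}-\mathbf{a}) = \chi_{\mathbf{x}}(\mathbf{y})\chi_{\mathbf{x}}(-\mathbf{a})$ and $\chi_{\mathbf{x}}(-\mathbf{a}) = \chi_{-\mathbf{x}}(\mathbf{a})$.

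First I compute the left-hand side from \cref{def:QFT}, taking the tensor power coordinatewise:
\[
\mathrm{QFT}\ket{\psi\odot\phi} = q^{-m/2}\sum_{\mathbf{y}}\Big(\sum_{\mathbf{x}}\psi_{\mathbf{x}}\phi_{\mathbf{x}}\chi_{\mathbf{x}}(\mathbf{y})\Big)\ket{\mathbf{y}}.
\]
For the right-hand side, I write $\widetilde{\psi}_{\mathbf{a}} = q^{-m/2}\sum_{\mathbf{x}}\psi_{\mathbf{x}}\chi_{\mathbf{x}}(\mathbf{a})$ and similarly $\widetilde{\phi}_{\mathbf{b}} = q^{-m/2}\sum_{\mathbf{x}'}\phi_{\mathbf{x}'}\chi_{\mathbf{x}'}(\mathbf{b})$. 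Substituting into the convolution, the coefficient of $\ket{\mathbf{y}}$ is the sum over pairs $(\mathbf{a},\mathbf{b})$ with $\mathbf{a}+\mathbf{b}=\mathbf{y}$. Together with the prefactor $q^{-m/2}$ this gives
\[
q^{-3m/2}\sum_{\mathbf{x},\mathbf{x}'}\psi_{\mathbf{x}}\phi_{\mathbf{x}'}\,\chi_{\mathbf{x}'}(\mathbf{y})\sum_{\mathbf{a}}\chi_{\mathbf{x}}(\mathbf{a})\chi_{\mathbf{x}'}(-\mathbf{a}).
\]
Next I rewrite the inner summand as $\chi_{\mathbf{a}}(\mathbf{x}-\mathbf{x}')$, using symmetry and homomorphism. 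Orthogonality \cref{eq:character_orthogonality} then collapses the inner sum to $q^m\,\delta_{\mathbf{x},\mathbf{x}'}$. This leaves exactly $q^{-m/2}\sum_{\mathbf{x}}\psi_{\mathbf{x}}\phi_{\mathbf{x}}\chi_{\mathbf{x}}(\mathbf{y})$, matching the left-hand side.

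For the $\mathrm{QFT}^\dagger$ version I repeat the computation with $\chi_{\mathbf{x}}(\mathbf{y})$ replaced by $\chi_{\mathbf{x}}(-\mathbf{y})$, using the stated formula for $\mathrm{QFT}_q^\dagger$. The same orthogonality step goes through, with inner sum $\sum_{\mathbf{a}}\chi_{\mathbf{a}}(\mathbf{x}'-\mathbf{x})$. Equivalently, this case follows by conjugating the whole identity via \cref{eq:character_cc}.

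The only point needing care is the bookkeeping of signs and of the arguments of the characters when $q$ is not prime. This is handled entirely by the bilinearity and symmetry of the field trace pairing, so I expect no genuine obstacle beyond routine index manipulation.
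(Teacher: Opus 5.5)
Your proposal is correct and is the direct calculation the paper has in mind (the paper omits this proof as well known): you expand both sides in the computational basis and collapse the sum over $\mathbf{a}$ with character orthogonality. Your conjugation shortcut for the $\mathrm{QFT}^\dagger$ case is also valid, because $\mathrm{QFT}$ is a symmetric matrix, so its entrywise complex conjugate is $\mathrm{QFT}^\dagger$, and conjugation commutes with convolution and pointwise products.
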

\begin{definition}[Code States]\label{def:code_states}
    Given a linear code $C \subset \mathbb{F}_q^m$, the corresponding code state $\ket{C}$ is the normalized uniform superposition over codewords in $C$:
    \[
        \ket{C} = \frac{1}{\sqrt{|C|}}\sum_{\mathbf{c} \in C} \ket{\mathbf{c}}.
    \]
\end{definition}
\begin{theorem}\label{thm:subspaceQFT}
    Given $B \in \mathbb{F}_q^{m \times n}$, let $C$ be the code $C=\{B\mathbf{x}: \mathbf{x} \in \mathbb{F}_q^n \}$ and let $C^\perp$ be its dual code $C^\perp = \{\mathbf{d} \in \mathbb{F}_q^m : B^T \mathbf{d} = \mathbf{0}\}$. Then
    \[
    \mathrm{QFT} \ket{C} = \ket{C^\perp} \quad \textrm{and} \quad \mathrm{QFT}^\dag \ket{C} = \ket{C^\perp}.
    \]
\end{theorem}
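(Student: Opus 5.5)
The approach is a direct computation of the amplitudes of $\mathrm{QFT}\ket{C}$, followed by a character-sum argument over the subgroup $C$. By \cref{def:QFT} and \cref{def:code_states},
\[
\mathrm{QFT}\ket{C} = \frac{1}{\sqrt{|C|\, q^m}} \sum_{\mathbf{y} \in \mathbb{F}_q^m} \Bigl( \sum_{\mathbf{c} \in C} \chi_{\mathbf{c}}(\mathbf{y}) \Bigr) \ket{\mathbf{y}}.
\]
So everything reduces to evaluating the inner sum $A(\mathbf{y}) = \sum_{\mathbf{c}\in C}\chi_{\mathbf{c}}(\mathbf{y})$. I will sum over the codewords $\mathbf{c}$ rather than over the messages $\mathbf{x}$. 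This avoids any multiplicity issue when $B$ is not full rank.

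First, note that $\chi_{\mathbf{c}}(\mathbf{y}) = \omega_p^{\tr(\mathbf{c}\cdot\mathbf{y})}$. The map $\mathbf{c}\mapsto \chi_{\mathbf{c}}(\mathbf{y})$ is a homomorphism from the additive group $C$ to the unit circle, by \cref{eq:character_homomorphic} (applied in the index, using symmetry $\chi_{\mathbf{a}}(\mathbf{b})=\chi_{\mathbf{b}}(\mathbf{a})$). Hence $A(\mathbf{y}) = |C|$ if this character is trivial on $C$, and $A(\mathbf{y}) = 0$ otherwise, by the standard subgroup orthogonality argument: multiply by $\chi_{\mathbf{c}_0}(\mathbf{y})\neq 1$ and reindex.

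The key step, and the only real obstacle, is identifying exactly when the character is trivial. The claim is that this happens iff $\mathbf{y}\in C^\perp$.
\begin{itemize}
\item If $B^T\mathbf{y}=\mathbf{0}$, then $\mathbf{c}\cdot\mathbf{y} = \mathbf{x}^T B^T \mathbf{y} = 0$ for every $\mathbf{c}=B\mathbf{x}$, so the character is trivial.
\item Conversely, suppose $B^T\mathbf{y}\neq \mathbf{0}$. Then the linear functional $\mathbf{x}\mapsto \mathbf{x}\cdot B^T\mathbf{y}$ is onto $\mathbb{F}_q$. So $\{\mathbf{c}\cdot\mathbf{y} : \mathbf{c}\in C\} = \mathbb{F}_q$.
\item The field trace $\tr:\mathbb{F}_q\to\mathbb{F}_p$ is not identically zero, being a nonzero polynomial of degree $p^{r-1}<q$. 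Hence some codeword has $\tr(\mathbf{c}\cdot\mathbf{y})\neq 0$, and therefore $\chi_{\mathbf{c}}(\mathbf{y})\neq 1$.
\end{itemize}
For $q$ prime the trace is the identity and this step is immediate; for extension fields it is the nondegeneracy of the trace form that makes it work.

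Combining, the amplitude on $\ket{\mathbf{y}}$ is $\sqrt{|C|/q^m}$ for $\mathbf{y}\in C^\perp$ and $0$ otherwise. Since $\dim C + \dim C^\perp = m$ (rank–nullity, with $\dim C = \mathrm{rank}\,B = \mathrm{rank}\,B^T$), we have $|C|\,|C^\perp| = q^m$. The amplitude therefore equals $1/\sqrt{|C^\perp|}$, giving $\mathrm{QFT}\ket{C}=\ket{C^\perp}$.

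For $\mathrm{QFT}^\dag$, the same computation applies with $\chi_{\mathbf{c}}(-\mathbf{y})$ in place of $\chi_{\mathbf{c}}(\mathbf{y})$. Since $C^\perp$ is closed under negation, the support and amplitudes are unchanged, so $\mathrm{QFT}^\dag\ket{C}=\ket{C^\perp}$ as well.
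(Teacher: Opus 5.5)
Your proof is correct and complete. It is the direct character-sum calculation the paper has in mind when it calls this result well known and omits the proof. In particular, you handle the point a careless calculation would miss for non-prime $q$: triviality of $\mathbf{c}\mapsto\chi_{\mathbf{c}}(\mathbf{y})$ on $C$ forces $B^T\mathbf{y}=\mathbf{0}$ only because the trace form is nondegenerate. You also correctly fix the normalization via $|C|\,|C^\perp|=q^m$.
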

\noindent
Additionally, we use the notation $\mathbf{1}[\textrm{condition}]$ as the indicator function, which evaluates to 1 when the condition is met and 0 otherwise. 

\subsection{Regev's Reduction over a General Finite Field}

Before proving \cref{thm:expectation_bound}, we must first review what we mean by Regev's reduction over a general finite field $\mathbb{F}_q$. \\

\noindent
\textbf{Step 1: Prepare the initial state.} In the first register prepare the uniform superposition over the code $C^\perp$. Since $C^\perp$ is a linear code this can be done by an efficient quantum circuit. In the second register prepare a superposition with amplitudes given by $\widetilde{P}$. In this paper, $\widetilde{P}$ factors into a product of functions, each acting on only $D$ symbols. Consequently, this state is a tensor product of states on $D$ qudits, which can therefore be prepared efficiently for any constant $D$.
\begin{equation}
    \left( \frac{1}{\sqrt{|C^\perp|}} \sum_{\mathbf{d} \in C^\perp } \ket{\mathbf{d}} \right) \otimes \left( \sum_{\mathbf{e} \in \mathbb{F}_q^m } \widetilde{P}(\mathbf{e}) \ket{\mathbf{e}} \right)
\end{equation}
\textbf{Step 2: Apply $Z_q^{-\mathbf{v}}$ to the first register.}
\begin{equation}
    \to \left( \frac{1}{\sqrt{|C^\perp|}} \sum_{\mathbf{d} \in C^\perp } \chi_{\mathbf{v}}(-\mathbf{d}) \ket{\mathbf{d}} \right) \otimes \left( \sum_{\mathbf{e} \in \mathbb{F}_q^m} \widetilde{P}(\mathbf{e}) \ket{\mathbf{e}} \right)
\end{equation}
\textbf{Step 3: Reversibly add the first register's contents to the second register.}
\begin{equation}
    \to  \frac{1}{\sqrt{|C^\perp|}} \sum_{\mathbf{d} \in C^\perp } \chi_{\mathbf{v}}(-\mathbf{d}) \ket{\mathbf{d}} \sum_{\mathbf{e} \in \mathbb{F}_q^m } \widetilde{P}(\mathbf{e}) \ket{\mathbf{d} + \mathbf{e}}
\end{equation}
\textbf{Step 4: Uncompute the first register.} This is done by solving the quantum decoding problem as specified in~\cref{def:qdpstinespring}.
We append qubits in the initial state $\ket{0_\mathrm{anc}}$ and then apply $U_\mathrm{dec}$ as defined in~\cref{eq:udec}.
Since our decoder fails with probability $\epsilon$ we see from \cref{def:qdpstinespring} that
\begin{equation}
    \label{eq:epsilon_avg}
    \frac{1}{|C^\perp|} \sum_{\mathbf{d} \in C^\perp} \epsilon_{\mathbf{d}} = \epsilon.
\end{equation}
Appending the ancilla qubits and applying $U_\mathrm{dec}$ at this stage in the algorithm yields
\begin{equation}
    \to \frac{1}{\sqrt{|C^\perp|}} \sum_{\mathbf{d} \in C^\perp} \chi_{\mathbf{v}}(-\mathbf{d}) \ket{\mathbf{d}} \left( \sqrt{1 - \epsilon_{\mathbf{d}}} \ket{\xi_{\mathbf{d}}} \ket{\mathbf{d}} + \sqrt{\epsilon_{\mathbf{d}}} \ket{\delta_{\mathbf{d}}} \right).
\end{equation}
\textbf{Step 5: Postselect on decoding success.} To do this we perform a projective measurement to see whether the last register contains the same symbol string as the first. If not, we abort and restart from step 1. This happens with probability $\epsilon$. If the measurement finds that the last register does contain the same symbol string as the first we are left with the state
\begin{equation}
    \label{eq:after_postselection}
    \to \mathcal{N}_{\mathrm{dec}} \sum_{\mathbf{d} \in C^\perp} \chi_{\mathbf{v}}(-\mathbf{d}) \ket{\mathbf{d}} \sqrt{1-\epsilon_{\mathbf{d}}} \ket{\xi_{\mathbf{d}}} \ket{\mathbf{d}},
\end{equation}
where $\mathcal{N}_{\mathrm{dec}} \in \mathbb{R}^{> 0}$ is a normalization constant. \\
\textbf{Step 6: Uncompute the first register.} This is achieved by reversibly subtracting the content of the last register from the content of the first register (over $\mathbb{F}_q^m$). The first register is then in the all zeros state, unentangled with the last register, and can therefore be safely discarded. The result is
\begin{equation}
    \to \mathcal{N}_{\mathrm{dec}} \sum_{\mathbf{d} \in C^\perp} \chi_{\mathbf{v}}(-\mathbf{d}) \sqrt{1-\epsilon_{\mathbf{d}}} \ket{\xi_{\mathbf{d}}} \ket{\mathbf{d}}.
\end{equation}
\textbf{Step 7: Apply $U_{\mathrm{dec}}^\dag$.} By \cref{eq:udec}, this yields
\begin{equation}
    \to \mathcal{N}_{\mathrm{dec}} \sum_{\mathbf{d} \in C^\perp} \chi_{\mathbf{v}}(-\mathbf{d}) \left( \sum_{\mathbf{e} \in \mathbb{F}_q^m} \widetilde{P}(\mathbf{e}) \ket{\mathbf{d} + \mathbf{e}} \ket{0_\mathrm{anc}} - \sqrt{\epsilon_{\mathbf{d}}} U_{\mathrm{dec}}^\dag \ket{\delta_{\mathbf{d}}} \right).
\end{equation}
\textbf{Step 8: Apply $Z_q^{\mathbf{v}}$ to the first register.}
\begin{align}
    &\to \mathcal{N}_{\mathrm{dec}} \sum_{\mathbf{d} \in C^\perp} \chi_{\mathbf{v}}(-\mathbf{d}) \left( \sum_{\mathbf{e} \in \mathbb{F}_q^m} \widetilde{P}(\mathbf{e}) \chi_{\mathbf{v}}(\mathbf{d} + \mathbf{e}) \ket{\mathbf{d} + \mathbf{e}} \ket{0_\mathrm{anc}} - \sqrt{\epsilon_{\mathbf{d}}} (Z_q^{\mathbf{v}} \otimes \id_\mathrm{anc}) U_{\mathrm{dec}}^\dag \ket{\delta_{\mathbf{d}}} \right) \\
    &= \mathcal{N}_{\mathrm{dec}} \sum_{\mathbf{d} \in C^\perp} \sum_{\mathbf{e} \in \mathbb{F}_q^m} \chi_{\mathbf{v}}(\mathbf{e}) \widetilde{P}(\mathbf{e}) \ket{\mathbf{d} + \mathbf{e}} \ket{0_\mathrm{anc}} - \ket{\Delta},
\end{align}
where
\begin{equation}
    \label{eq:delta}
    \ket{\Delta} = \mathcal{N}_{\mathrm{dec}} \sum_{\mathbf{d} \in C^\perp} \chi_{\mathbf{v}}(-\mathbf{d}) \sqrt{\epsilon_{\mathbf{d}}} (Z_q^{\mathbf{v}} \otimes \id_\mathrm{anc}) U_{\mathrm{dec}}^\dag \ket{\delta_{\mathbf{d}}}.
\end{equation}
\textbf{Step 9: Apply the inverse QFT and measure the first register.} By the Fourier convolution theorem (\cref{thm:convolution}), the inverse quantum Fourier transform of the state from the preceding step is
\begin{equation}
    \to \mathcal{N}_{\mathrm{dec}}|C^\perp| \sum_{\mathbf{c} \in C} P(\mathbf{c}-\mathbf{v}) \ket{\mathbf{c}} \ket{0_\mathrm{anc}} - (\mathrm{QFT}^\dag \otimes \id_\mathrm{anc}) \ket{\Delta}.
\end{equation}
(See the next subsection for a more detailed derivation.) Thus, upon measuring the first register, if $\ket{\Delta}$ had zero norm, we would obtain the codeword $\mathbf{c}$ with probability proportional to $|P(\mathbf{c}-\mathbf{v})|^2$, as desired. Thus, our essential task is to show that $\| \ket{\Delta} \|$ is small, for average case $\mathbf{v}$. This is made nontrivial by the fact that the states $\ket{\delta_{\mathbf{d}}}$ need not be orthogonal for different $\mathbf{d} \in C^\perp$.

\subsection{Technical Lemmas}
\label{sec:lemmas}

The technical lemmas in this section, which generalize the analysis of~\cite{chailloux2024quantum}, provide us with a toolkit that we use to prove \cref{thm:expectation_bound}. We hope that this modular analytical framework will allow for easy adaptation to other algorithms and tasks.
Let
\begin{equation}
    \label{eq:phi_target}
    \ket{\Phi_{\mathrm{target}}} = \mathcal{N}_{\mathrm{target}} \sum_{\mathbf{d} \in C^\perp} \sum_{\mathbf{e} \in \mathbb{F}_q^m} \chi_{\mathbf{v}}(\mathbf{e}) \widetilde{P}(\mathbf{e}) \ket{\mathbf{d} + \mathbf{e}} \ket{0_\mathrm{anc}}
\end{equation}
denote the state that would be produced at the end of step 8 in the absence of a $\ket{\Delta}$ term due to decoding failure. $\mathcal{N}_{\mathrm{target}} \in \mathbb{R}^{> 0}$ is chosen such that this state has unit norm. Note that in general $\mathcal{N}_{\mathrm{target}}$ depends on $\mathbf{v}$. In the language of convolutions and pointwise products, this can be expressed as
\begin{equation}
    \ket{\Phi_{\mathrm{target}}} =  \left(\sqrt{|C^\perp|} \ \mathcal{N}_{\mathrm{target}} \ \ket{C^\perp} * \left( Z_q^{\mathbf{v}} \ket{\widetilde{P}}\right)\right) \otimes \ket{0_\mathrm{anc}}
\end{equation}
where
\begin{align}
    \ket{\widetilde{P}} &= \sum_{\mathbf{x} \in \mathbb{F}_q^m} \widetilde{P}(\mathbf{x}) \ket{\mathbf{x}},% \\
    %\ket{\chi_{\mathbf{v}}} &= \sum_{\mathbf{x} \in \mathbb{F}_q^m} \chi_{\mathbf{v}}(\mathbf{x}) \ket{\mathbf{x}},
\end{align}
and $\ket{C^\perp}$ is the code state as in \cref{def:code_states}. The inverse Fourier transforms are
\begin{align}
    \textrm{QFT}^\dag \ket{C^\perp} &= \ket{C} \quad \textrm{(by \cref{thm:subspaceQFT})} \\
    \textrm{QFT}^\dag \ket{\widetilde{P}} &= \ket{P}. \\
    %\textrm{QFT}^\dag \ket{\chi_{\mathbf{v}}} &= \ket{\mathbf{v}}.
\end{align}
Thus, by \cref{eq:QFTXZdual,thm:convolution}, $\ket{\widetilde{\Phi}_{\mathrm{target}}} := \left(\textrm{QFT}^\dag \otimes \id_\mathrm{anc}\right) \ket{\Phi_{\mathrm{target}}}$ is given by
\begin{equation}
    \label{eq:tildephi}
    \ket{\widetilde{\Phi}_{\mathrm{target}}} = \sqrt{\frac{q^m |C^\perp|}{|C|}} \mathcal{N}_{\mathrm{target}} \sum_{\mathbf{c} \in C} P(\mathbf{c}-\mathbf{v}) \ket{\mathbf{c}} \ket{0_\mathrm{anc}} = |C^\perp| \mathcal{N}_{\mathrm{target}} \sum_{\mathbf{c} \in C} P(\mathbf{c}-\mathbf{v}) \ket{\mathbf{c}} \ket{0_\mathrm{anc}}.
\end{equation}
This is the final state that we would ideally wish to produce.  

Let $\ket{\Phi_{\mathrm{actual}}}$ be the state obtained from step 8 using a quantum algorithm that solves the quantum decoding problem with error probability $\epsilon$. As described in the previous subsection,
\begin{equation}
    \label{eq:phi_actual}
    \ket{\Phi_{\mathrm{actual}}} = \mathcal{N}_{\mathrm{dec}} \sum_{\mathbf{d} \in C^\perp} \sum_{\mathbf{e} \in \mathbb{F}_q^m} \chi_{\mathbf{v}}(\mathbf{e}) \widetilde{P}(\mathbf{e}) \ket{\mathbf{d} + \mathbf{e}} \ket{0_\mathrm{anc}} - \ket{\Delta},
\end{equation}
where $\ket{\Delta}$ is given by \cref{eq:delta}. By the unitarity of the QFT, $\langle \Phi_{\mathrm{actual}}| \Phi_{\mathrm{target}} \rangle$ is equal to the inner product between the final output state from Regev's reduction, and the ideal final state $\ket{\widetilde{\Phi}_{\mathrm{target}}}$. This motivates the following lemma.

\begin{lemma}\label{lem:expectedinnerproduct}
    If $U_{\mathrm{dec}}$ has failure probability $\epsilon$ then for uniformly random $\mathbf{v} \in \mathbb{F}_q^m$, we have:
    \begin{equation}
        \mathbb{E}_{\mathbf{v}} \left[\frac{\braket{\Phi_\mathrm{actual} | \Phi_\mathrm{target}}}{\mathcal{N}_\mathrm{dec} \mathcal{N}_\mathrm{target}}\right] = |C^\perp| (1-\epsilon).
    \end{equation}
\end{lemma}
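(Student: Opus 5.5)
The plan is to compute the inner product directly: rewrite both $\ket{\Phi_{\mathrm{actual}}}$ and $\ket{\Phi_{\mathrm{target}}}$ as the same $\mathbf{v}$-independent unitary applied to explicit states, expand in a form where all $\mathbf{v}$-dependence sits in phases $\chi_{\mathbf{v}}(\mathbf{d}-\mathbf{d}')$, and then average over $\mathbf{v}$ using the orthogonality property \cref{eq:character_orthogonality}.

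\textbf{Rewriting both states.} Set
\[
\ket{A} := \sum_{\mathbf{d}\in C^\perp}\sum_{\mathbf{e}\in\mathbb{F}_q^m}\chi_{\mathbf{v}}(\mathbf{e})\widetilde{P}(\mathbf{e})\ket{\mathbf{d}+\mathbf{e}}\ket{0_\mathrm{anc}},
\]
so that $\ket{\Phi_{\mathrm{target}}}/\mathcal{N}_{\mathrm{target}}=\ket{A}$. Undoing the phase bookkeeping of Step 8, and using $\chi_{\mathbf{v}}(\mathbf{e})=\chi_{\mathbf{v}}(-\mathbf{d})\chi_{\mathbf{v}}(\mathbf{d}+\mathbf{e})$ from \cref{eq:character_homomorphic}, gives
\[
\ket{A}=(Z_q^{\mathbf{v}}\otimes\id_\mathrm{anc})\sum_{\mathbf{d}}\chi_{\mathbf{v}}(-\mathbf{d})\sum_{\mathbf{e}}\widetilde{P}(\mathbf{e})\ket{\mathbf{d}+\mathbf{e}}\ket{0_\mathrm{anc}}.
\]
Substituting \cref{eq:udec} then yields
\[
\ket{A}=(Z_q^{\mathbf{v}}\otimes\id)\,U_{\mathrm{dec}}^\dagger\sum_{\mathbf{d}}\chi_{\mathbf{v}}(-\mathbf{d})\bigl(\sqrt{1-\epsilon_{\mathbf{d}}}\ket{\xi_{\mathbf{d}}}\ket{\mathbf{d}}+\sqrt{\epsilon_{\mathbf{d}}}\ket{\delta_{\mathbf{d}}}\bigr).
\]
Tracing Steps 6--8, the actual state is
\[
\ket{\Phi_{\mathrm{actual}}}/\mathcal{N}_{\mathrm{dec}}=(Z_q^{\mathbf{v}}\otimes\id)\,U_{\mathrm{dec}}^\dagger\sum_{\mathbf{d}}\chi_{\mathbf{v}}(-\mathbf{d})\sqrt{1-\epsilon_{\mathbf{d}}}\ket{\xi_{\mathbf{d}}}\ket{\mathbf{d}}.
\]

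\textbf{Expanding the inner product.} The common unitary $(Z_q^{\mathbf{v}}\otimes\id)U_{\mathrm{dec}}^\dagger$ cancels, leaving
\[
\frac{\braket{\Phi_\mathrm{actual}|\Phi_\mathrm{target}}}{\mathcal{N}_\mathrm{dec}\mathcal{N}_\mathrm{target}}=\sum_{\mathbf{d},\mathbf{d}'\in C^\perp}\chi_{\mathbf{v}}(\mathbf{d}-\mathbf{d}')\sqrt{1-\epsilon_{\mathbf{d}}}\,\bra{\xi_{\mathbf{d}}}\bra{\mathbf{d}}\bigl(\sqrt{1-\epsilon_{\mathbf{d}'}}\ket{\xi_{\mathbf{d}'}}\ket{\mathbf{d}'}+\sqrt{\epsilon_{\mathbf{d}'}}\ket{\delta_{\mathbf{d}'}}\bigr),
\]
where \cref{eq:character_cc} is used to combine the two character factors. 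Split this sum into diagonal and off-diagonal parts.
\begin{itemize}
\item For $\mathbf{d}=\mathbf{d}'$, the phase is $1$. The $\ket{\delta_{\mathbf{d}}}$ contribution vanishes, because $\ket{\delta_{\mathbf{d}}}$ has zero amplitude whenever the last register holds $\mathbf{d}$. What remains is exactly $1-\epsilon_{\mathbf{d}}$.
\item For $\mathbf{d}\neq\mathbf{d}'$, the bracketed overlap depends only on the decoder and the codewords, not on $\mathbf{v}$. Its coefficient is $\chi_{\mathbf{v}}(\mathbf{d}-\mathbf{d}')$.
\end{itemize}

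\textbf{Averaging over $\mathbf{v}$.} By \cref{eq:character_orthogonality}, $\mathbb{E}_{\mathbf{v}}\chi_{\mathbf{v}}(\mathbf{d}-\mathbf{d}')=\mathbf{1}[\mathbf{d}=\mathbf{d}']$, so every off-diagonal term averages to zero. The diagonal terms sum to $\sum_{\mathbf{d}}(1-\epsilon_{\mathbf{d}})=|C^\perp|(1-\epsilon)$ by \cref{eq:epsilon_avg}, which is the claimed identity.

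\textbf{Main obstacle.} The delicate step is justifying that the off-diagonal overlaps are genuinely $\mathbf{v}$-independent. This requires $U_{\mathrm{dec}}$, $\ket{\xi_{\mathbf{d}}}$, $\ket{\delta_{\mathbf{d}}}$ and $\epsilon_{\mathbf{d}}$ to be defined without reference to $\mathbf{v}$, which holds because the decoder only ever acts on the $\mathbf{v}$-free state of \cref{eq:udec}. It also requires that all $\mathbf{v}$-dependence has been pulled out as the common unitary $Z_q^{\mathbf{v}}$ plus the explicit phases $\chi_{\mathbf{v}}(\pm\mathbf{d})$. A further point is that the normalizations $\mathcal{N}_{\mathrm{dec}}$ and $\mathcal{N}_{\mathrm{target}}$, which may depend on $\mathbf{v}$, are divided out before taking the expectation, so they never enter the average.
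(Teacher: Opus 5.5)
Your proof is correct. It uses the same ingredients as the paper: character orthogonality over $\mathbf{v}$ to kill cross terms between distinct codewords, the defining property of $\ket{\delta_{\mathbf{d}}}$, and \cref{eq:epsilon_avg}. It organizes the overlap differently, though.

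The paper writes $\ket{\Phi_{\mathrm{actual}}}$ as the ideal state minus $\ket{\Delta}$, expands everything in the computational basis, and averages two sums separately. The ideal--ideal term collapses, via $\mathbf{e}_1=\mathbf{e}_2$ and then $\mathbf{d}_1=\mathbf{d}_2$, to $\sum_{\mathbf{d},\mathbf{e}}|\widetilde{P}(\mathbf{e})|^2=|C^\perp|$. The $\ket{\Delta}$ term is evaluated by substituting \cref{eq:udec} to get $\bra{\delta_{\mathbf{d}}}U_{\mathrm{dec}}(\cdots)=\sqrt{\epsilon_{\mathbf{d}}}$, which yields $|C^\perp|-\sum_{\mathbf{d}}\epsilon_{\mathbf{d}}$.

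You instead pull both states back through the common unitary $(Z_q^{\mathbf{v}}\otimes\id)U_{\mathrm{dec}}^\dagger$, which cancels in the inner product, and you compute in the decoder's output frame. There, orthogonality of the $\ket{\mathbf{d}}$ register and of $\ket{\delta_{\mathbf{d}}}$ leaves diagonal terms equal to $1-\epsilon_{\mathbf{d}}$ directly after a single average over $\mathbf{v}$.

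Your route is slightly more economical: there is one averaging step, and normalization of $\widetilde{P}$ is absorbed into the normalized decomposition \cref{eq:udec} rather than invoked separately. The paper's route keeps the failure term $\ket{\Delta}$ explicit, matching how it was introduced in Step 8 of the reduction.
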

\begin{proof}
    By \cref{eq:phi_target}, \cref{eq:phi_actual}, and \cref{eq:delta},
    \begin{align*}
        \frac{\braket{\Phi_\mathrm{actual} | \Phi_\mathrm{target}}}{\mathcal{N}_\mathrm{dec} \mathcal{N}_\mathrm{target}}
        &= \sum_{\substack{\mathbf{d}_1, \mathbf{d}_2 \in C^\perp \\ \mathbf{e}_1, \mathbf{e}_2 \in \mathbb{F}_q^m}} \chi_{\mathbf{v}}(-\mathbf{e}_1) \widetilde{P}^*(\mathbf{e}_1) \chi_{\mathbf{v}}(\mathbf{e}_2) \widetilde{P}(\mathbf{e}_2) \cdot \mathbf{1}[\mathbf{d}_1 + \mathbf{e}_1 = \mathbf{d}_2 + \mathbf{e}_2] \\
        &- \sum_{\substack{\mathbf{d}_1, \mathbf{d}_2 \in C^\perp \\ \mathbf{e}_2 \in \mathbb{F}_q^m}} \chi_{\mathbf{v}}(\mathbf{d}_1) \sqrt{\epsilon_{\mathbf{d}_1}} \chi_{\mathbf{v}}(\mathbf{e}_2) \widetilde{P}(\mathbf{e}_2) \bra{\delta_{\mathbf{d}_1}} U_\mathrm{dec} \left((Z_q^{-\mathbf{v}} \ket{\mathbf{d}_2+\mathbf{e}_2}) \otimes \ket{0_\mathrm{anc}}\right).
    \end{align*}
    By observing that $Z_q^{-\mathbf{v}} \ket{\mathbf{d}_2 + \mathbf{e}_2} = \chi_{\mathbf{v}}(-\mathbf{d}_2 -\mathbf{e}_2) \ket{\mathbf{d}_2 + \mathbf{e}_2}$ and applying the homomorphic property of characters \cref{eq:character_homomorphic}, we simplify this to
    \begin{align*}
        = &\sum_{\substack{\mathbf{d}_1, \mathbf{d}_2 \in C^\perp \\ \mathbf{e}_1, \mathbf{e}_2 \in \mathbb{F}_q^m}} \chi_{\mathbf{v}}(\mathbf{e}_2-\mathbf{e}_1) \widetilde{P}^*(\mathbf{e}_1) \widetilde{P}(\mathbf{e}_2) \cdot \mathbf{1}[\mathbf{d}_1 + \mathbf{e}_1 = \mathbf{d}_2 + \mathbf{e}_2] \\
        - & \sum_{\substack{\mathbf{d}_1, \mathbf{d}_2 \in C^\perp \\ \mathbf{e}_2 \in \mathbb{F}_q^m}} \chi_{\mathbf{v}}(\mathbf{d}_1-\mathbf{d}_2) \sqrt{\epsilon_{\mathbf{d}_1}} \widetilde{P}(\mathbf{e}_2) \bra{\delta_{\mathbf{d}_1}} U_\mathrm{dec} \left(\ket{\mathbf{d}_2+\mathbf{e}_2} \otimes \ket{0_\mathrm{anc}}\right).
    \end{align*}
    It follows by the orthogonality property of characters \cref{eq:character_orthogonality} that:
    \begin{align*}
        \mathbb{E}_{\mathbf{v}} \left[\frac{\langle \Phi_\mathrm{actual} | \Phi_\mathrm{target} \rangle}{\mathcal{N}_\mathrm{dec} \mathcal{N}_\mathrm{target}}  \right] &= \sum_{\substack{\mathbf{d} \in C^\perp \\ \mathbf{e} \in \mathbb{F}_q^m}} | \widetilde{P}(\mathbf{e})|^2 
        - \sum_{\substack{\mathbf{d} \in C^\perp \\ \mathbf{e} \in \mathbb{F}_q^m}} \sqrt{\epsilon_{\mathbf{d}}} \widetilde{P}(\mathbf{e}) \bra{\delta_\mathbf{d}} U_\mathrm{dec} \left(\ket{ \mathbf{d}+\mathbf{e}} \otimes \ket{0_\mathrm{anc}}\right) \\
        &= \ |C^\perp| - \sum_{\substack{\mathbf{d} \in C^\perp \\ \mathbf{e} \in \mathbb{F}_q^m}} \sqrt{\epsilon_\mathbf{d}} \widetilde{P}(\mathbf{e}) \bra{\delta_\mathbf{d}} U_\mathrm{dec} \left(\ket{\mathbf{d}+\mathbf{e}} \otimes \ket{0_\mathrm{anc}}\right) \\
        &=\ |C^\perp| - \sum_{\substack{\mathbf{d} \in C^\perp}} \sqrt{\epsilon_\mathbf{d}}  \bra{\delta_\mathbf{d}} U_\mathrm{dec} \left(\sum_{\mathbf{e} \in \mathbb{F}_q^m} \widetilde{P}(\mathbf{e}) \ket{\mathbf{d}+\mathbf{e}} \otimes \ket{0_\mathrm{anc}}\right).
    \end{align*}
    By the definition of $U_\mathrm{dec}$ in~\cref{eq:udec} this is equal to
    \[
        =  \ |C^\perp| - \sum_{\mathbf{d} \in C^\perp} \epsilon_\mathbf{d},
    \]
    which by \cref{eq:epsilon_avg} is equal to
    \[
        = \ |C^\perp|(1-\epsilon).
    \]
\end{proof}

\begin{lemma}\label{lem:Ndec}
    If $\epsilon$ is the failure probability of $U_{\mathrm{dec}}$, then
    \begin{equation}
        \mathcal{N}_{\mathrm{dec}} = 1/\sqrt{|C^\perp|(1-\epsilon)}.
    \end{equation}
\end{lemma}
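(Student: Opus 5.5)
The plan is to compute $\mathcal{N}_{\mathrm{dec}}$ directly from its defining role in Step 5: it is the positive constant that makes the post-selected state \cref{eq:after_postselection} have unit norm. Everything therefore reduces to evaluating the squared norm of the unnormalized vector
\[
\sum_{\mathbf{d} \in C^\perp} \chi_{\mathbf{v}}(-\mathbf{d}) \sqrt{1-\epsilon_{\mathbf{d}}}\, \ket{\mathbf{d}} \ket{\xi_{\mathbf{d}}} \ket{\mathbf{d}}
\]
and setting $\mathcal{N}_{\mathrm{dec}}$ to the inverse square root of it.

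First I will observe that the first register holds $\ket{\mathbf{d}}$, so the terms for distinct $\mathbf{d} \in C^\perp$ are mutually orthogonal. This holds regardless of any overlap between the garbage states $\ket{\xi_{\mathbf{d}}}$. Hence the squared norm is a sum of squared norms of individual terms. Each character value has modulus one by \cref{eq:character_cc}, since $|\omega_p^{t}| = 1$. Each $\ket{\xi_{\mathbf{d}}}$ is normalized by \cref{def:qdpstinespring}. So the squared norm equals $\sum_{\mathbf{d} \in C^\perp} (1-\epsilon_{\mathbf{d}})$.

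By \cref{eq:epsilon_avg}, this sum is $|C^\perp| - |C^\perp|\epsilon = |C^\perp|(1-\epsilon)$. Since $\mathcal{N}_{\mathrm{dec}} > 0$, it follows that $\mathcal{N}_{\mathrm{dec}} = 1/\sqrt{|C^\perp|(1-\epsilon)}$, as claimed. As a consistency check, the same conclusion follows from the fact that post-selection succeeds with probability $1-\epsilon$. Before post-selection, the branch component has squared norm $\frac{1}{|C^\perp|}\sum_{\mathbf{d}}(1-\epsilon_{\mathbf{d}}) = 1-\epsilon$, and renormalizing divides by its square root.

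There is no real obstacle here. The only point requiring care is to justify orthogonality via the copy of $\mathbf{d}$ held in the first register, rather than via the possibly non-orthogonal garbage states. One should also note that the result is independent of $\mathbf{v}$, because the phases $\chi_{\mathbf{v}}(-\mathbf{d})$ drop out of the norm.
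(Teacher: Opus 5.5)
Your proposal is correct and follows the paper's proof: both normalize the post-selected state \cref{eq:after_postselection}, use that the phases have unit modulus and the $\ket{\xi_{\mathbf{d}}}$ are normalized to get $\mathcal{N}_{\mathrm{dec}}^2\sum_{\mathbf{d}\in C^\perp}(1-\epsilon_{\mathbf{d}})=1$, and then apply \cref{eq:epsilon_avg}. Your explicit remark that orthogonality of the terms comes from the $\ket{\mathbf{d}}$ register rather than the garbage states is a step the paper leaves implicit.
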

\begin{proof}
    By \cref{eq:after_postselection},
    \begin{align*}
        1 &= \mathcal{N}_{\mathrm{dec}}^2 \sum_{\mathbf{d} \in C^\perp} | \chi_{\mathbf{v}}(-\mathbf{d}) \sqrt{1 - \epsilon_\mathbf{d}}|^2 \\
        &= \mathcal{N}_{\mathrm{dec}}^2 \sum_{\mathbf{d} \in C^\perp} (1 - \epsilon_\mathbf{d}) \\
        &= \mathcal{N}_{\mathrm{dec}}^2 |C^\perp| (1-\epsilon),
    \end{align*}
    where the last equality follows from \cref{eq:epsilon_avg}.
\end{proof}
Unlike $\mathcal{N}_{\mathrm{dec}}$, $\mathcal{N}_{\mathrm{target}}$ depends on $\mathbf{v}$. To deal with this, we use the following lemma.
\begin{lemma}\label{lem:Ntarg}
    For uniformly random $\mathbf{v} \in \mathbb{F}_q^m$,
    \begin{equation}
        \mathbb{E}_{\mathbf{v}} \left[ \frac{1}{\mathcal{N}_{\mathrm{target}}^2}\right] = |C^\perp|.
    \end{equation}
\end{lemma}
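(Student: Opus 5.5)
We compute $1/\mathcal{N}_{\mathrm{target}}^2$ directly as the squared norm of the unnormalized target vector, and then average over $\mathbf{v}$ using the orthogonality of characters. The approach parallels the computation in \cref{lem:expectedinnerproduct}.

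By \cref{eq:phi_target}, normalization of $\ket{\Phi_{\mathrm{target}}}$ gives
\[
\frac{1}{\mathcal{N}_{\mathrm{target}}^2} = \Big\| \sum_{\mathbf{d} \in C^\perp} \sum_{\mathbf{e} \in \mathbb{F}_q^m} \chi_{\mathbf{v}}(\mathbf{e}) \widetilde{P}(\mathbf{e}) \ket{\mathbf{d}+\mathbf{e}} \Big\|^2 .
\]
Expanding the squared norm, as in the first term of the proof of \cref{lem:expectedinnerproduct}, gives
\[
\sum_{\substack{\mathbf{d}_1,\mathbf{d}_2 \in C^\perp\\ \mathbf{e}_1,\mathbf{e}_2\in\mathbb{F}_q^m}} \chi_{\mathbf{v}}(\mathbf{e}_2-\mathbf{e}_1)\, \widetilde{P}^*(\mathbf{e}_1)\widetilde{P}(\mathbf{e}_2)\, \mathbf{1}[\mathbf{d}_1+\mathbf{e}_1=\mathbf{d}_2+\mathbf{e}_2].
\]
Here we used \cref{eq:character_cc} and \cref{eq:character_homomorphic} to combine $\chi_{\mathbf{v}}(\mathbf{e}_1)^*\chi_{\mathbf{v}}(\mathbf{e}_2)$ into $\chi_{\mathbf{v}}(\mathbf{e}_2-\mathbf{e}_1)$.

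Next, take the expectation over uniform $\mathbf{v}$ term by term. By \cref{eq:character_orthogonality}, $\mathbb{E}_{\mathbf{v}}[\chi_{\mathbf{v}}(\mathbf{e}_2-\mathbf{e}_1)] = \mathbf{1}[\mathbf{e}_1=\mathbf{e}_2]$. One small check is needed: the orthogonality relation is stated as a sum over the subscript $\mathbf{a}$. Since $\chi_{\mathbf{a}}(\mathbf{b}) = \omega_p^{\tr(\mathbf{a}\cdot\mathbf{b})}$ is symmetric in $\mathbf{a}$ and $\mathbf{b}$, summing over $\mathbf{v}$ is legitimate.

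Once $\mathbf{e}_1=\mathbf{e}_2$, the indicator forces $\mathbf{d}_1=\mathbf{d}_2$. The expectation therefore collapses to
\[
\sum_{\mathbf{d}\in C^\perp}\sum_{\mathbf{e}} |\widetilde{P}(\mathbf{e})|^2 = |C^\perp| \cdot \|\ket{\widetilde{P}}\|^2 = |C^\perp|.
\]
The last equality holds because $\ket{\mathcal{S}}$ is normalized and the QFT is unitary, so $\sum_{\mathbf{e}}|\widetilde{P}(\mathbf{e})|^2 = \sum_{\mathbf{x}}|P(\mathbf{x})|^2 = 1$.

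There is no real obstacle. The only care points are bookkeeping the complex conjugates correctly in the expansion of the norm, and justifying the interchange of the finite expectation with the finite sums, which is trivial. We also note that for some $\mathbf{v}$ the unnormalized vector could vanish, so that $\mathcal{N}_{\mathrm{target}}$ is undefined. The identity should then be read with $1/\mathcal{N}_{\mathrm{target}}^2$ interpreted as the squared norm, which is exactly the quantity computed above.
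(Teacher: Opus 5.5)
Your proof is correct and is essentially the paper's argument: expand the squared norm, average the character over $\mathbf{v}$ using orthogonality, and collapse the double sum to $|C^\perp|\sum_{\mathbf{e}}|\widetilde{P}(\mathbf{e})|^2 = |C^\perp|$. The paper first re-indexes by $\mathbf{x}=\mathbf{d}+\mathbf{e}$, so the character becomes $\chi_{\mathbf{v}}(\mathbf{d}_2-\mathbf{d}_1)$ and forces $\mathbf{d}_1=\mathbf{d}_2$ directly, whereas you keep the indicator and force $\mathbf{e}_1=\mathbf{e}_2$; this is a cosmetic difference, and your symmetry check is unnecessary since $\mathbf{v}$ already sits in the subscript.
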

\begin{proof}
    Using \cref{eq:phi_target} and discarding the ancilla qubits that are in the all 0's state in $\ket{\Phi_\mathrm{target}}$, we have:
    \begin{align*}
        1/\mathcal{N}_{\mathrm{target}}^2 &= \left\| \sum_{\mathbf{d} \in C^\perp} \sum_{\mathbf{e} \in \mathbb{F}_q^m} \chi_{\mathbf{v}}(\mathbf{e}) \widetilde{P}(\mathbf{e}) \ket{\mathbf{d} + \mathbf{e}} \right\|^2 \\
        &= \left\| \sum_{\mathbf{d} \in C^\perp} \sum_{\mathbf{x} \in \mathbb{F}_q^m} \chi_{\mathbf{v}}(\mathbf{x} - \mathbf{d}) \widetilde{P}(\mathbf{x} - \mathbf{d}) \ket{\mathbf{x}} \right\|^2 \\
        &= \sum_{\mathbf{x} \in \mathbb{F}_q^m} \left| \sum_{\mathbf{d} \in C^\perp} \chi_{\mathbf{v}}(\mathbf{x}-\mathbf{d}) \widetilde{P}(\mathbf{x} - \mathbf{d}) \right|^2.
    \end{align*}
    Expanding this out and using the complex conjugate \cref{eq:character_cc} and homomorphic \cref{eq:character_homomorphic} identities for characters, we have
    \begin{align*}
        &= \sum_{\mathbf{x} \in \mathbb{F}_q^m} \left( \sum_{\mathbf{d}_1,\mathbf{d}_2 \in C^\perp} \chi_{\mathbf{v}}(\mathbf{x}-\mathbf{d}_1) \widetilde{P}(\mathbf{x} - \mathbf{d}_1) \chi_{\mathbf{v}}(\mathbf{d}_2-\mathbf{x}) \widetilde{P}^*(\mathbf{x} - \mathbf{d}_2) \right) \\
        &= \sum_{\mathbf{x} \in \mathbb{F}_q^m} \left( \sum_{\mathbf{d}_1,\mathbf{d}_2 \in C^\perp} \chi_{\mathbf{v}}(\mathbf{d}_2 - \mathbf{d}_1) \widetilde{P}(\mathbf{x} - \mathbf{d}_1) \widetilde{P}^*(\mathbf{x} - \mathbf{d}_2) \right).
    \end{align*}
    By character orthogonality \cref{eq:character_orthogonality} it follows that
    \begin{align*}
        \mathbb{E}_{\mathbf{v}} \left[ \frac{1}{\mathcal{N}_{\mathrm{target}}^2} \right] &= \sum_{\mathbf{x} \in \mathbb{F}_q^m} \sum_{\mathbf{d} \in C^\perp} |\widetilde{P}(\mathbf{x} - \mathbf{d})|^2 \\
        &= |C^\perp|.
    \end{align*}
\end{proof}

\begin{lemma}\label{lem:innerproduct}
    If $U_{\mathrm{dec}}$ fails with probability $\epsilon$ then, for uniformly random $\mathbf{v} \in \mathbb{F}_q^m$,
    \begin{equation}
        \mathbb{E}_{\mathbf{v}} \left[ | \langle \Phi_{\mathrm{actual}}| \Phi_{\mathrm{target}} \rangle |^2 \right] \geq 1 - \epsilon.
    \end{equation}
\end{lemma}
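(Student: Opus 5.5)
We will combine the three preceding lemmas. Write $X(\mathbf{v}) := \braket{\Phi_\mathrm{actual}|\Phi_\mathrm{target}}/(\mathcal{N}_\mathrm{dec}\mathcal{N}_\mathrm{target})$, so that
\[
\braket{\Phi_\mathrm{actual}|\Phi_\mathrm{target}} = \mathcal{N}_\mathrm{dec}\,\mathcal{N}_\mathrm{target}(\mathbf{v})\,X(\mathbf{v}).
\]
By \cref{lem:Ndec}, $\mathcal{N}_\mathrm{dec}$ is the constant $1/\sqrt{|C^\perp|(1-\epsilon)}$, independent of $\mathbf{v}$. Hence
\[
\mathbb{E}_{\mathbf{v}}\left[|\braket{\Phi_\mathrm{actual}|\Phi_\mathrm{target}}|^2\right] = \frac{1}{|C^\perp|(1-\epsilon)}\,\mathbb{E}_{\mathbf{v}}\left[\mathcal{N}_\mathrm{target}^2 |X|^2\right].
\]

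We then lower bound $\mathbb{E}[\mathcal{N}_\mathrm{target}^2|X|^2]$ using Cauchy--Schwarz in the form
\[
\left|\mathbb{E}[X]\right|^2 = \left|\mathbb{E}\left[(\mathcal{N}_\mathrm{target} X)\cdot \mathcal{N}_\mathrm{target}^{-1}\right]\right|^2 \le \mathbb{E}\left[\mathcal{N}_\mathrm{target}^2|X|^2\right]\cdot \mathbb{E}\left[\mathcal{N}_\mathrm{target}^{-2}\right].
\]
\cref{lem:expectedinnerproduct} gives $\mathbb{E}[X] = |C^\perp|(1-\epsilon)$, and \cref{lem:Ntarg} gives $\mathbb{E}[\mathcal{N}_\mathrm{target}^{-2}] = |C^\perp|$. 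Substituting,
\[
\mathbb{E}\left[\mathcal{N}_\mathrm{target}^2|X|^2\right] \ge |C^\perp|(1-\epsilon)^2.
\]
Multiplying by $1/(|C^\perp|(1-\epsilon))$ yields $\mathbb{E}_{\mathbf{v}}[|\braket{\Phi_\mathrm{actual}|\Phi_\mathrm{target}}|^2] \ge 1-\epsilon$, which is the claim.

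The main obstacle is choosing the right splitting in Cauchy--Schwarz. $\mathcal{N}_\mathrm{target}$ depends on $\mathbf{v}$, and the earlier lemmas control only $\mathbb{E}[X]$ and $\mathbb{E}[\mathcal{N}_\mathrm{target}^{-2}]$, not any joint quantity. The factorization $X = (\mathcal{N}_\mathrm{target}X)\cdot\mathcal{N}_\mathrm{target}^{-1}$ makes exactly these two expectations appear. We will also check the degenerate cases: we need $\epsilon<1$ so that $\mathcal{N}_\mathrm{dec}$ is defined, and $\mathcal{N}_\mathrm{target}$ must be finite, i.e. the target state must be nonzero. Any $\mathbf{v}$ for which the target state vanishes can be handled by restricting the expectations to the remaining $\mathbf{v}$, or treated as a measure-zero nuisance, since those $\mathbf{v}$ contribute nothing to \cref{lem:Ntarg}'s sum.
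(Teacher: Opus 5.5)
Your proof is correct and is essentially the paper's argument: the paper applies Cauchy--Schwarz to $\mathbb{E}_{\mathbf{v}}[X]$ using the same split $\langle\Phi_{\mathrm{actual}}|\Phi_{\mathrm{target}}\rangle\cdot(\mathcal{N}_{\mathrm{dec}}\mathcal{N}_{\mathrm{target}})^{-1}$, then invokes \cref{lem:expectedinnerproduct}, \cref{lem:Ndec} and \cref{lem:Ntarg} exactly as you do; the only cosmetic difference is that you pull the constant $\mathcal{N}_{\mathrm{dec}}$ out first. Your remark about $\mathbf{v}$ for which the target state vanishes addresses a point the paper passes over, and restricting the sum to the remaining $\mathbf{v}$ does fix it, because $X=0$ and $\mathcal{N}_{\mathrm{target}}^{-2}=0$ there. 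Such $\mathbf{v}$ are not ``measure zero'', though, since $\mathbf{v}$ is uniform on a finite set.
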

\begin{proof}
    By \cref{lem:expectedinnerproduct},
    \begin{align*}
        |C^\perp|^2 (1-\epsilon)^2 &= \left( \mathbb{E}_\mathbf{v} \left[ \frac{\langle \Phi_{\mathrm{actual}} | \Phi_{\mathrm{target}} \rangle}{\mathcal{N}_{\mathrm{dec}} \mathcal{N}_{\mathrm{target}}}\right]\right)^2 \\
        &= \left| \mathbb{E}_\mathbf{v} \left[ \frac{\langle \Phi_{\mathrm{actual}} | \Phi_{\mathrm{target}} \rangle}{\mathcal{N}_{\mathrm{dec}} \mathcal{N}_{\mathrm{target}}}\right]\right|^2.
    \end{align*}
    Recalling that $\mathcal{N}_{\mathrm{target}}$ is $\mathbf{v}$-dependent and applying Cauchy-Schwarz yields
    \[
    \leq \mathbb{E}_{\mathbf{v}} \left[ |\langle \Phi_{\mathrm{actual}} | \Phi_{\mathrm{target}} \rangle|^2 \right] \cdot \mathbb{E}_{\mathbf{v}} \left[ \frac{1}{\mathcal{N}_{\mathrm{dec}}^2 \mathcal{N}_{\mathrm{target}}^2}\right].
    \]
    By \cref{lem:Ndec} this is equal to
    \[
        = \mathbb{E}_{\mathbf{v}} \left[ |\langle \Phi_{\mathrm{actual}} | \Phi_{\mathrm{target}} \rangle|^2 \right] \cdot \mathbb{E}_{\mathbf{v}} \left[ \frac{|C^\perp| (1-\epsilon)}{\mathcal{N}_{\mathrm{target}}^2}\right].
    \]
    By \cref{lem:Ntarg} this is equal to
    \[
        = \mathbb{E}_{\mathbf{v}} \left[ |\langle \Phi_{\mathrm{actual}} | \Phi_{\mathrm{target}} \rangle|^2 \right] \cdot |C^\perp|^2 (1-\epsilon).
    \]
    Thus we arrive at
    \[
        |C^\perp|^2 (1-\epsilon)^2 \leq \mathbb{E}_{\mathbf{v}} \left[ |\langle \Phi_{\mathrm{actual}} | \Phi_{\mathrm{target}} \rangle|^2 \right] \cdot |C^\perp|^2 (1-\epsilon),
    \]
    which implies the conclusion.
\end{proof}
\begin{lemma}\label{lem:diagH}
    Let $\hobj \in \mathbb{C}^{q^m \times q^m}$ be a diagonal matrix. Then we have:
    \begin{equation}
        \mathbb{E}_{\mathbf{v}} \left[ \frac{\langle \Phi_\mathrm{target}(\mathbf{v}) | \mathrm{QFT} X_q^{\mathbf{v}} \hobj X_q^{-\mathbf{v}} \mathrm{QFT}^\dag \otimes \id_\mathrm{anc} | \Phi_\mathrm{target}(\mathbf{v}) \rangle}{|C^\perp| \mathcal{N}_\mathrm{target}^2}\right] = \braket{\mathcal{S} | \hobj | \mathcal{S}},
    \end{equation}
    where $\ket{\mathcal{S}} = \sum_{\mathbf{x} \in \mathbb{F}_q^m} P(\mathbf{x}) \ket{\mathbf{x}}$.
\end{lemma}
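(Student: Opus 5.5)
The plan is to compute the numerator explicitly using \cref{eq:tildephi}. By definition, $(\mathrm{QFT}^\dag \otimes \id_\mathrm{anc})\ket{\Phi_\mathrm{target}} = \ket{\widetilde{\Phi}_\mathrm{target}}$. Hence
\[
\langle \Phi_\mathrm{target} | \mathrm{QFT}\, X_q^{\mathbf{v}} \hobj X_q^{-\mathbf{v}} \mathrm{QFT}^\dag \otimes \id_\mathrm{anc} | \Phi_\mathrm{target}\rangle = \langle \widetilde{\Phi}_\mathrm{target} | X_q^{\mathbf{v}} \hobj X_q^{-\mathbf{v}} \otimes \id_\mathrm{anc} | \widetilde{\Phi}_\mathrm{target}\rangle .
\]
Substituting $\ket{\widetilde{\Phi}_\mathrm{target}} = |C^\perp|\mathcal{N}_\mathrm{target}\sum_{\mathbf{c}\in C} P(\mathbf{c}-\mathbf{v})\ket{\mathbf{c}}\ket{0_\mathrm{anc}}$, we note that $X_q^{-\mathbf{v}}$ maps $\ket{\mathbf{c}}$ to $\ket{\mathbf{c}-\mathbf{v}}$ and that $\hobj$ is diagonal, say $\hobj = \sum_{\mathbf{x}} h(\mathbf{x})\ket{\mathbf{x}}\bra{\mathbf{x}}$. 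The expression therefore collapses to
\[
|C^\perp|^2 \mathcal{N}_\mathrm{target}^2 \sum_{\mathbf{c}\in C} |P(\mathbf{c}-\mathbf{v})|^2\, h(\mathbf{c}-\mathbf{v}).
\]
The cross terms vanish because the states $\ket{\mathbf{c}-\mathbf{v}}$ for distinct $\mathbf{c}$ are orthogonal and $\hobj$ is diagonal.

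Dividing by $|C^\perp|\mathcal{N}_\mathrm{target}^2$ cancels the $\mathbf{v}$-dependent normalization exactly. This cancellation is the point of the normalization in the lemma, and it avoids the Cauchy--Schwarz step needed in \cref{lem:innerproduct}. What remains is $|C^\perp|\sum_{\mathbf{c}\in C} |P(\mathbf{c}-\mathbf{v})|^2 h(\mathbf{c}-\mathbf{v})$.

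Next we average over uniform $\mathbf{v}\in\mathbb{F}_q^m$. For each fixed $\mathbf{c}$, the vector $\mathbf{x}=\mathbf{c}-\mathbf{v}$ is uniform on $\mathbb{F}_q^m$. This gives
\[
\mathbb{E}_\mathbf{v}[\cdots] = |C^\perp|\,|C|\, q^{-m}\sum_{\mathbf{x}} |P(\mathbf{x})|^2 h(\mathbf{x}).
\]
Since $|C|\,|C^\perp| = q^m$ for a linear code and its dual, this equals $\sum_{\mathbf{x}}|P(\mathbf{x})|^2 h(\mathbf{x}) = \braket{\mathcal{S}|\hobj|\mathcal{S}}$.

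There is no real obstacle. The only care needed is to check that the prefactor in \cref{eq:tildephi} is right, which uses $q^m|C^\perp|/|C| = |C^\perp|^2$. It is also worth confirming the sign convention: $X_q^{-\mathbf{v}}$ shifts $\mathbf{c}$ to $\mathbf{c}-\mathbf{v}$, which matches the argument of $P$. If the convention were reversed, the uniform averaging over $\mathbf{v}$ would still give the same result.
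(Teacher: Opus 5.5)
Your proof is correct and follows the paper's argument essentially step for step. The paper writes $\ket{\widetilde{\Phi}_\mathrm{target}} = \sqrt{q^m|C^\perp|}\,\mathcal{N}_\mathrm{target}\ket{C}\odot\ket{\mathcal{S}(\mathbf{v})}$, moves the shift $X_q^{-\mathbf{v}}$ onto the state, obtains $\frac{1}{|C|}\sum_{\mathbf{x}}\braket{\mathbf{x}|\hobj|\mathbf{x}}\,|P(\mathbf{x})|^2\,\mathbf{1}[\mathbf{x}\in C-\mathbf{v}]$ after normalizing by $q^m|C^\perp|\mathcal{N}_\mathrm{target}^2$, and averages the indicator (mean $|C|/q^m$); this is the same computation as your substitution $\mathbf{x}=\mathbf{c}-\mathbf{v}$ combined with $|C|\,|C^\perp|=q^m$.

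One correction to your closing aside: the shift convention does matter outside characteristic $2$. A reversed convention would give $|P(\mathbf{c}-\mathbf{v})|^2\,h(\mathbf{c}+\mathbf{v})$, which for odd $q$ does not in general average to $\braket{\mathcal{S}|\hobj|\mathcal{S}}$; for example, $C=\{\mathbf{0}\}$ yields $\sum_{\mathbf{x}}|P(\mathbf{x})|^2h(-\mathbf{x})$. The paper's convention $X_q^{(a)}\ket{j}=\ket{j+a}$ is exactly the one you used, so the proof itself is unaffected.
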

\begin{proof}
    Define
    \[
        \ket{\mathcal{S}(\mathbf{v})} = \sum_{\mathbf{x} \in \mathbb{F}_q^m} P(\mathbf{x}-\mathbf{v}) \ket{\mathbf{x}},
    \]
    so that in particular we have $\ket{\mathcal{S}(\mathbf{0})} = \ket{\mathcal{S}}$.
    From \cref{eq:tildephi}, \cref{def:code_states}, and \cref{def:pointwise_product}, one has
    \[
        \ket{\widetilde{\Phi}_\mathrm{target}} = \sqrt{q^m |C^\perp|} \ \mathcal{N}_{\mathrm{target}} \ket{C} \odot \ket{\mathcal{S}(\mathbf{v})}.
    \]
    Therefore,
    \begin{align*}
        \frac{\bra{\Phi_{\mathrm{target}}} \mathrm{QFT} X_q^{\mathbf{v}} \hobj X_q^{-\mathbf{v}} \mathrm{QFT}^\dag \otimes \id_\mathrm{anc} \ket{\Phi_{\mathrm{target}}}}{q^m |C^\perp| \mathcal{N}_{\mathrm{target}}^2} &= \braket{C \odot \mathcal{S}(\mathbf{v}) | X_q^{\mathbf{v}} \hobj X_q^{-\mathbf{v}} | C \odot \mathcal{S}(\mathbf{v})} \\
        &= \braket{(C-\mathbf{v}) \odot \mathcal{S} | \hobj | (C-\mathbf{v}) \odot \mathcal{S}} \\
        &= \frac{1}{|C|} \sum_{\mathbf{x} \in \mathbb{F}_q^m} \braket{\mathbf{x} | \hobj | \mathbf{x}} |P(\mathbf{x})|^2 \mathbf{1}[\mathbf{x} \in C - \mathbf{v}].
    \end{align*}
    Consequently,
    \begin{align*}
        \mathbb{E}_{\mathbf{v}} \left[ \frac{\bra{\Phi_{\mathrm{target}}} \mathrm{QFT} X_q^{\mathbf{v}} \hobj X_q^{-\mathbf{v}} \mathrm{QFT}^\dag \otimes \id_\mathrm{anc}\ket{\Phi_{\mathrm{target}}}}{q^m |C^\perp| \mathcal{N}_{\mathrm{target}}^2} \right] &= q^{-m} \sum_{\mathbf{x} \in \mathbb{F}_q^m} \braket{\mathbf{x} | \hobj |\mathbf{x}} |P(\mathbf{x})|^2 \\
        &= q^{-m} \braket{\mathcal{S} | \hobj | \mathcal{S}},
    \end{align*}
    from which the conclusion follows.
\end{proof}

\subsection{Applying the Lemmas}
\label{sec:applying_lemmas}

In this subsection we use the lemmas established in the previous subsection to prove \cref{thm:expectation_bound}. We also prove that the distribution over bit strings obtained by measuring the final output $\ket{\Phi_{\mathrm{actual}}}$ from Regev's reduction in the computational basis is close in total variation distance to the distribution produced by measuring the ideal state $\ket{\Phi_{\mathrm{target}}}$, provided the failure rate $\epsilon$ of $U_{\mathrm{dec}}$ is small. More precisely, we have the following.

\begin{definition}\label{def:outputdistributions}
    Let $\mathcal{D}_{\mathrm{actual}}(\mathbf{v})$ (resp. $\mathcal{D}_{\mathrm{target}}(\mathbf{v})$) be the distribution over $\mathbb{F}_q^m$ obtained by measuring the first register of $(\mathrm{QFT}^\dag \otimes \id_\mathrm{anc}) \ket{\Phi_\mathrm{actual}}$ (resp. $(\mathrm{QFT}^\dag \otimes \id_\mathrm{anc}) \ket{\Phi_{\mathrm{target}}}$) in the standard basis. Let $\mathcal{D}_{\mathrm{algo}}(\mathbf{v})$ denote the final output distribution of Regev's reduction, which is $\mathcal{D}_\mathrm{actual}(\mathbf{v})$ conditioned on $\mathbf{x}$ being a codeword in $C$. Thus $\mathcal{D}_{\mathrm{algo}}, \mathcal{D}_{\mathrm{target}}$ are supported on $C$, but $\mathcal{D}_{\mathrm{actual}}$ need not be.
\end{definition}

Generalizing~\cite{chailloux2024quantum}, we begin by showing as a consequence of \cref{lem:innerproduct} that $\mathcal{D}_{\mathrm{actual}}, \mathcal{D}_{\mathrm{target}}, \mathcal{D}_{\mathrm{algo}}$ (as defined in \cref{def:outputdistributions}) are all statistically close:
\begin{theorem}\label{thm:distributionalcloseness}
    If $U_{\mathrm{dec}}$ fails with probability $\epsilon$, then for uniformly random $\mathbf{v} \in \mathbb{F}_q^m$,
    \begin{equation}
        \mathbb{E}_{\mathbf{v}} \left[ \Delta[\ket{\Phi_{\mathrm{actual}}},\ket{\Phi_{\mathrm{target}}}] \right] \leq \sqrt{\epsilon},
    \end{equation}
    where $\Delta$ denotes trace distance, and therefore
    \begin{equation}
        \mathbb{E}_{\mathbf{v}} \left[\Delta(\mathcal{D}_\mathrm{actual}(\mathbf{v}), \mathcal{D}_\mathrm{target}(\mathbf{v}))\right] \leq \sqrt{\epsilon},
    \end{equation}
    where $\Delta$ denotes the total variation distance.
\end{theorem}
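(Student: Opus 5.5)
The plan is to reduce everything to \cref{lem:innerproduct} via the standard relation between fidelity and trace distance for pure states, and then to Jensen's inequality. Both $\ket{\Phi_{\mathrm{actual}}}$ and $\ket{\Phi_{\mathrm{target}}}$ are unit vectors. $\ket{\Phi_{\mathrm{target}}}$ is normalized by construction via $\mathcal{N}_{\mathrm{target}}$. For $\ket{\Phi_{\mathrm{actual}}}$, observe that it is obtained from the normalized post-selected state \cref{eq:after_postselection} by unitary operations only (steps 6--8). So for each fixed $\mathbf{v}$ the trace distance between the two pure states is
\[
\Delta[\ket{\Phi_{\mathrm{actual}}},\ket{\Phi_{\mathrm{target}}}] = \sqrt{1 - |\langle \Phi_{\mathrm{actual}} | \Phi_{\mathrm{target}}\rangle|^2}.
\]

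Next I take expectation over uniform $\mathbf{v}$. Since $t \mapsto \sqrt{t}$ is concave, Jensen gives
\[
\mathbb{E}_{\mathbf{v}}\left[\sqrt{1-|\langle \Phi_{\mathrm{actual}} | \Phi_{\mathrm{target}}\rangle|^2}\right] \le \sqrt{1 - \mathbb{E}_{\mathbf{v}}\left[|\langle \Phi_{\mathrm{actual}} | \Phi_{\mathrm{target}}\rangle|^2\right]}.
\]
Invoking \cref{lem:innerproduct}, the expectation inside is at least $1-\epsilon$, so the right-hand side is at most $\sqrt{\epsilon}$. This proves the first claim.

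For the second claim, I apply the same unitary $\mathrm{QFT}^\dag \otimes \id_\mathrm{anc}$ to both states, which preserves trace distance. I then use the fact that a measurement (here, measuring the first register in the standard basis, tracing out the ancilla) is a CPTP map. By the contractivity of trace distance under such maps, the total variation distance between $\mathcal{D}_{\mathrm{actual}}(\mathbf{v})$ and $\mathcal{D}_{\mathrm{target}}(\mathbf{v})$ is at most the trace distance of the states, pointwise in $\mathbf{v}$. Averaging over $\mathbf{v}$ gives the bound $\sqrt{\epsilon}$.

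The only step needing care is confirming that $\ket{\Phi_{\mathrm{actual}}}$ is genuinely normalized, so that the pure-state fidelity formula applies; this follows from \cref{lem:Ndec} and the unitarity of steps 6--8. The rest is routine.
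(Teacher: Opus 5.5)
Your proposal is correct and follows the paper's proof essentially step for step: the pure-state trace distance formula, Jensen's inequality via concavity, the bound from \cref{lem:innerproduct}, and contractivity of trace distance under measurement to get the total variation bound. Your explicit check that $\ket{\Phi_{\mathrm{actual}}}$ is normalized (it is the normalized post-selected state followed only by the norm-preserving steps 6--8) is a detail the paper leaves implicit but uses.
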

\begin{proof}
    The trace distance between a pair of pure states $\ket{\Phi_{\mathrm{target}}}$ and $\ket{\Phi_{\mathrm{actual}}}$ is given by
    \[
        \Delta[\ket{\Phi_{\mathrm{actual}}},\ket{\Phi_{\mathrm{target}}}] =  \sqrt{1-|\langle \Phi_{\mathrm{target}}| \Phi_{\mathrm{actual}}\rangle|^2}.
    \]
    Since $\sqrt{1-x}$ is concave, we can apply Jensen's inequality to obtain
    \[
        \mathbb{E}_{\mathbf{v}} \left[ \Delta[\ket{\Phi_{\mathrm{actual}}},\ket{\Phi_{\mathrm{target}}}] \right] \leq \sqrt{1-\mathbb{E}_{\mathbf{v}} |\langle \Phi_{\mathrm{target}}| \Phi_{\mathrm{actual}}\rangle|^2}.
    \]
    By \cref{lem:innerproduct}, $\mathbb{E}_{\mathbf{v}} |\langle \Phi_{\mathrm{target}}| \Phi_{\mathrm{actual}}\rangle|^2 \geq 1 - \epsilon$, so 
    \[
        \mathbb{E}_{\mathbf{v}} \left[ \Delta[\ket{\Phi_{\mathrm{actual}}},\ket{\Phi_{\mathrm{target}}}] \right] \leq \sqrt{\epsilon}.
    \]
    If two quantum states have trace distance $\delta$ then for any choice of measurement, the distribution over outcomes produced by measuring one state will be within $\delta$ total variation distance of that produced by measuring the other state \cite{NC00}. Thus, we also conclude
    \[
        \mathbb{E}_{\mathbf{v}} \left[\Delta(\mathcal{D}_\mathrm{actual}(\mathbf{v}), \mathcal{D}_\mathrm{target}(\mathbf{v}))\right] \leq \sqrt{\epsilon}.
    \]
\end{proof}

\begin{theorem}\label{thm:projcodeword}
    If $U_{\mathrm{dec}}$ fails with probability $\epsilon$, then for uniformly random $\mathbf{v} \in \mathbb{F}_q^m$,
    \begin{equation}
        \mathbb{E}_{\mathbf{v}} \left[\Delta(\mathcal{D}_\mathrm{algo}(\mathbf{v}), \mathcal{D}_\mathrm{actual}(\mathbf{v}))\right] \leq \sqrt[4]{\epsilon},
    \end{equation}
    where $\Delta$ denotes the total variation distance.
\end{theorem}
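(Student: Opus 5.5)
Because $\mathcal{D}_{\mathrm{algo}}(\mathbf{v})$ is $\mathcal{D}_{\mathrm{actual}}(\mathbf{v})$ conditioned on landing in $C$, the plan reduces the claim to bounding how much mass $\mathcal{D}_{\mathrm{actual}}(\mathbf{v})$ puts outside $C$.

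\textbf{Step 1: total variation after conditioning.} Write $p(\mathbf{v}) = \Pr_{\mathbf{x}\sim \mathcal{D}_{\mathrm{actual}}(\mathbf{v})}[\mathbf{x}\notin C]$. For any distribution $\mathcal{D}$ and event $A$ of probability $1-p$, the conditional distribution $\mathcal{D}|_A$ satisfies
\[
\Delta(\mathcal{D}|_A,\mathcal{D}) = \sum_{\mathbf{x}\in A}\Bigl(\tfrac{\mathcal{D}(\mathbf{x})}{1-p}-\mathcal{D}(\mathbf{x})\Bigr) = p .
\]
This identity uses the one-sided form of total variation distance and holds when $p<1$. When $p=1$ the output is undefined, but the TV distance is at most $1 = p$, so the bound $\Delta \le p$ still holds. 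Hence it suffices to show $\mathbb{E}_{\mathbf{v}}[p(\mathbf{v})] \le \sqrt[4]{\epsilon}$.

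\textbf{Step 2: compare with the target state.} $\mathcal{D}_{\mathrm{target}}(\mathbf{v})$ is supported on $C$ by \cref{eq:tildephi}. For any event $A$, $|\mathcal{D}_{\mathrm{actual}}(A) - \mathcal{D}_{\mathrm{target}}(A)| \le \Delta(\mathcal{D}_{\mathrm{actual}},\mathcal{D}_{\mathrm{target}})$. Taking $A$ to be the complement of $C$ gives $p(\mathbf{v}) \le \Delta(\mathcal{D}_{\mathrm{actual}}(\mathbf{v}),\mathcal{D}_{\mathrm{target}}(\mathbf{v}))$. Averaging and invoking \cref{thm:distributionalcloseness} yields $\mathbb{E}_{\mathbf{v}}[p(\mathbf{v})] \le \sqrt{\epsilon}$.

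\textbf{Step 3: conclude.} Since $\epsilon\in[0,1]$, we have $\sqrt{\epsilon}\le\sqrt[4]{\epsilon}$, which gives the stated bound.

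This argument actually appears to give the stronger bound $\sqrt{\epsilon}$. The stated $\sqrt[4]{\epsilon}$ presumably comes from a cruder route, for example bounding $p(\mathbf{v})$ via the squared norm of the projection of $(\mathrm{QFT}^\dagger\otimes\id)\ket{\Delta}$ off $C$ and then applying Jensen's inequality once more. The main thing to check is Step 1. Specifically, $\mathcal{D}_{\mathrm{actual}}$ must be the genuine measurement distribution of a normalized state, so that $p(\mathbf{v})$ is a probability. The convention for $\mathcal{D}_{\mathrm{algo}}$ when $p(\mathbf{v})=1$ must also be handled, and the bound $\Delta\le 1=p$ covers it. If the intended notion of $\mathcal{D}_{\mathrm{algo}}$ also involves restarts, Step 1 is unchanged, because restarting and conditioning produce the same output distribution.
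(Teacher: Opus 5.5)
Your proof is correct. It departs from the paper's proof only in its last step, and that departure gives a stronger bound.

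Both arguments control the same quantity,
\[
p(\mathbf{v}) = 1-\bra{\Phi_{\mathrm{actual}}}\mathrm{QFT}\,\Pi_C\,\mathrm{QFT}^\dag\otimes\id_{\mathrm{anc}}\ket{\Phi_{\mathrm{actual}}},
\]
which is the mass $\mathcal{D}_{\mathrm{actual}}(\mathbf{v})$ places outside $C$. Both bound it by a distance to the target:
\begin{itemize}
\item the paper tests the trace distance between $\ket{\Phi_{\mathrm{actual}}}$ and $\ket{\Phi_{\mathrm{target}}}$ against the projector $\Pi_C$;
\item you use the total variation distance to $\mathcal{D}_{\mathrm{target}}$, which is supported on $C$.
\end{itemize}
Either way, \cref{thm:distributionalcloseness} gives $\mathbb{E}_{\mathbf{v}}[p(\mathbf{v})]\le\sqrt{\epsilon}$.

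The paper then passes through the normalized projected state $\ket{\Psi(\mathbf{v})}$. It uses $|\langle\Psi(\mathbf{v})|\widetilde{\Phi}_{\mathrm{actual}}\rangle|^2=1-p(\mathbf{v})$, so the pure-state trace distance equals $\sqrt{p(\mathbf{v})}$. A second application of Jensen then yields $\mathbb{E}_{\mathbf{v}}\sqrt{p(\mathbf{v})}\le\sqrt[4]{\epsilon}$.

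For the computational-basis measurement this is loose. Since $\Pi_C$ is diagonal in the measured basis, the two outcome distributions are exactly a distribution and its conditioning on $C$. Their total variation distance is exactly $p(\mathbf{v})\le\sqrt{p(\mathbf{v})}$, which is your Step 1. Your route therefore avoids the square-root loss and gives $\mathbb{E}_{\mathbf{v}}[\Delta(\mathcal{D}_{\mathrm{algo}},\mathcal{D}_{\mathrm{actual}})]\le\sqrt{\epsilon}$. By the triangle inequality this also gives $2\sqrt{\epsilon}$ relative to $\mathcal{D}_{\mathrm{target}}$.

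What the paper's formulation buys in exchange is a statement about quantum states: the post-selected state is close in trace distance to the unprojected one. That controls every subsequent measurement, including ones that do not commute with $\Pi_C$, and for those $\sqrt{p(\mathbf{v})}$ is the correct order.

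Your treatment of the degenerate case $p(\mathbf{v})=1$ is fine. The paper's normalization of $\ket{\Psi(\mathbf{v})}$ has the same issue and passes over it silently.
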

\begin{proof}
    Define $\Pi_C$ to be a projector onto codewords in $C$. Then $\mathcal{D}_{\mathrm{algo}}(\mathbf{v})$ is obtained by measuring the first register of the state
    \[
        \ket{\Psi(\mathbf{v})} := \frac{\left(\Pi_C \mathrm{QFT}^\dag  \otimes \id_\mathrm{anc}\right)\ket{\Phi_{\mathrm{actual}}}}{\sqrt{\bra{\Phi_{\mathrm{actual}}} \mathrm{QFT} \Pi_C \mathrm{QFT}^\dag \otimes \id_\mathrm{anc} \ket{\Phi_{\mathrm{actual}}}}}
    \]
    in the computational basis. Let $\ket{\widetilde{\Phi}_{\mathrm{actual}}} = (\mathrm{QFT}^\dag \otimes \id_\mathrm{anc}) \ket{\Phi_{\mathrm{actual}}}$. Then, 
    \begin{align*}
        |\langle \Psi(\mathbf{v}) | \widetilde{\Phi}_{\mathrm{actual}} \rangle|^2 &= \bra{\Phi_{\mathrm{actual}}} \mathrm{QFT} \Pi_C \mathrm{QFT}^\dag \otimes \id_\mathrm{anc} \ket{\Phi_{\mathrm{actual}}} \\
        &\geq \bra{\Phi_{\mathrm{target}}} \mathrm{QFT} \Pi_C \mathrm{QFT}^\dag \otimes \id_\mathrm{anc} \ket{\Phi_{\mathrm{target}}} - \Delta(\ket{\Phi_{\mathrm{target}}},\ket{\Phi_{\mathrm{actual}}}) \\
        &= 1 - \Delta(\ket{\Phi_{\mathrm{target}}},\ket{\Phi_{\mathrm{actual}}})
    \end{align*}
    where $\Delta$ here indicates the trace distance. By \cref{thm:distributionalcloseness}, we thus have
    \[
        \mathbb{E}_{\mathbf{v}} \left[ | \langle \Psi(\mathbf{v})| \widetilde{\Phi}_{\mathrm{actual}} \rangle |^2 \right] \geq 1-\sqrt{\epsilon}.
    \]
    Because $\sqrt{1-x}$ is a concave function on the interval $[0,1]$ we can apply Jensen's inequality to obtain
    \[
        \mathbb{E}_{\mathbf{v}} \left[ \sqrt{1-| \langle \Psi(\mathbf{v})| \widetilde{\Phi}_{\mathrm{actual}} \rangle |^2 }\right] \leq \sqrt[4]{\epsilon}.
    \]
    The left-hand side is recognizable as the expected trace distance between $\ket{\Psi(\mathbf{v})}$ and $\ket{\widetilde{\Phi}_{\mathrm{actual}}}$, which upper bounds the total variation distance between the distribution over bit strings obtained by measuring these states. In other words
    \[
        \mathbb{E}_{\mathbf{v}} [\Delta(\mathcal{D}_{\mathrm{algo}},\mathcal{D}_{\mathrm{actual}})]\leq \mathbb{E}_{\mathbf{v}} \left[ \sqrt{1-| \langle \Psi(\mathbf{v})| \widetilde{\Phi}_{\mathrm{actual}} \rangle |^2 }\right],
    \]
 which, when combined with the prior equation, completes the proof.
\end{proof}
We next move on to theorems regarding the expected number of constraints satisfied (or general score attained) by the symbol string produced by Regev's reduction.

\begin{theorem}\label{thm:weirdcauchy}
    Let $\hobj'(\mathbf{v}) \in \mathbb{C}^{q^m \times q^m}$ be a Hermitian matrix (that could depend on $\mathbf{v}$) with spectral norm $\|\hobj'(\mathbf{v})\|_\infty \leq H_\mathrm{max}$ (for all $\mathbf{v}$). Then, if $U_{\mathrm{dec}}$ fails with probability $\epsilon$, we have for uniformly random $\mathbf{v} \in \mathbb{F}_q^m$:
    \begin{equation}
        \mathbb{E}_{\mathbf{v}} \left[\left|\braket{\Phi_\mathrm{actual}(\mathbf{v}) | \hobj'(\mathbf{v}) \otimes \id_\mathrm{anc} | \Phi_\mathrm{actual}(\mathbf{v})} - \frac{ \langle \Phi_\mathrm{target}(\mathbf{v}) | \hobj'(\mathbf{v}) \otimes \id_\mathrm{anc} | \Phi_\mathrm{target}(\mathbf{v}) \rangle}{\mathcal{N}_\mathrm{target}(\mathbf{v})^2 \cdot |C^\perp|} \right| \right] \leq 2 H_\mathrm{max} \cdot \sqrt{\epsilon}.
    \end{equation}
\end{theorem}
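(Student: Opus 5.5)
The plan is to compare $\ket{\Phi_{\mathrm{actual}}}$ with a suitably rescaled copy of $\ket{\Phi_{\mathrm{target}}}$, so that the second term in the statement becomes an ordinary expectation value. Set
\[
b(\mathbf{v}) := \frac{1}{\mathcal{N}_{\mathrm{target}}(\mathbf{v})\sqrt{|C^\perp|}}, \qquad \ket{B} := b\,\ket{\Phi_{\mathrm{target}}}.
\]
Then the quantity inside the absolute value is $\braket{\Phi_{\mathrm{actual}}|H|\Phi_{\mathrm{actual}}} - \braket{B|H|B}$, where we abbreviate $H=\hobj'(\mathbf{v})\otimes\id_\mathrm{anc}$. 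Because $H$ is Hermitian, this difference equals $\mathrm{Re}\,\braket{\Phi_{\mathrm{actual}}-B|H|\Phi_{\mathrm{actual}}+B}$. Hence it is bounded in absolute value by
\[
H_{\max}\,\|\Phi_{\mathrm{actual}}-B\|\,\|\Phi_{\mathrm{actual}}+B\|.
\]
Taking $\mathbb{E}_{\mathbf{v}}$ and applying Cauchy--Schwarz reduces the theorem to computing $\mathbb{E}_{\mathbf{v}}\|\Phi_{\mathrm{actual}}\mp B\|^2$.

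Both states are expanded in the same way. Since $\ket{\Phi_{\mathrm{actual}}}$ and $\ket{\Phi_{\mathrm{target}}}$ are normalized,
\[
\|\Phi_{\mathrm{actual}}\mp B\|^2 = 1 + b^2 \mp 2b\,\mathrm{Re}\braket{\Phi_{\mathrm{actual}}|\Phi_{\mathrm{target}}}.
\]
By \cref{lem:Ntarg}, $\mathbb{E}_{\mathbf{v}}[b^2]=1$. For the cross term, \cref{lem:Ndec} gives $b = \sqrt{1-\epsilon}\,\mathcal{N}_{\mathrm{dec}}/\mathcal{N}_{\mathrm{target}}$. Therefore
\[
b\braket{\Phi_{\mathrm{actual}}|\Phi_{\mathrm{target}}} = \sqrt{1-\epsilon}\,\mathcal{N}_{\mathrm{dec}}^2\cdot\frac{\braket{\Phi_{\mathrm{actual}}|\Phi_{\mathrm{target}}}}{\mathcal{N}_{\mathrm{dec}}\mathcal{N}_{\mathrm{target}}}.
\]
Here $\mathcal{N}_{\mathrm{dec}}$ does not depend on $\mathbf{v}$. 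So by \cref{lem:expectedinnerproduct} and \cref{lem:Ndec},
\[
\mathbb{E}_{\mathbf{v}}\bigl[b\braket{\Phi_{\mathrm{actual}}|\Phi_{\mathrm{target}}}\bigr] = \sqrt{1-\epsilon}\cdot\frac{|C^\perp|(1-\epsilon)}{|C^\perp|(1-\epsilon)} = \sqrt{1-\epsilon},
\]
which is real. Writing $s=\sqrt{1-\epsilon}$, this yields $\mathbb{E}_{\mathbf{v}}\|\Phi_{\mathrm{actual}}-B\|^2 = 2(1-s)$ and $\mathbb{E}_{\mathbf{v}}\|\Phi_{\mathrm{actual}}+B\|^2 = 2(1+s)$.

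Combining these with Cauchy--Schwarz over $\mathbf{v}$ gives
\[
\mathbb{E}_{\mathbf{v}}\bigl[\,|\cdots|\,\bigr] \le H_{\max}\sqrt{2(1-s)\cdot 2(1+s)} = 2H_{\max}\sqrt{1-s^2} = 2H_{\max}\sqrt{\epsilon},
\]
which is exactly the claimed bound.

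The main obstacle is choosing the comparison state correctly. The naive route bounds $|\braket{\Phi_{\mathrm{actual}}|H|\Phi_{\mathrm{actual}}}-\braket{\Phi_{\mathrm{target}}|H|\Phi_{\mathrm{target}}}|$ via \cref{thm:distributionalcloseness}. That leaves an error term $H_{\max}\,|1-b^2|$, whose expectation is not small. Scaling by $b$ absorbs the $\mathbf{v}$-dependent normalization. It also requires the cross-term expectation to be handled exactly through \cref{lem:expectedinnerproduct}, with the $\mathbf{v}$-independent factor $\mathcal{N}_{\mathrm{dec}}$ pulled out of the expectation. Using the symmetric factorization $\|A-B\|\,\|A+B\|$ rather than $\|A-B\|(\|A\|+\|B\|)$ is what gives the constant $2$ instead of $2\sqrt{2}$.
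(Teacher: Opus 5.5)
Your proposal is correct and follows the paper's proof essentially step for step. Your $\ket{B}=b\ket{\Phi_{\mathrm{target}}}$ is exactly the paper's rescaled state $\ket{\Psi_t(\mathbf{v})}$, and the paper likewise uses the Hermitian symmetrization $\braket{A|H|A}-\braket{B|H|B}=\mathrm{Re}\braket{A-B|H|A+B}$, Cauchy--Schwarz over $\mathbf{v}$, and the identities $\mathbb{E}_{\mathbf{v}}\|\Phi_{\mathrm{actual}}\mp B\|^2=2\mp 2\sqrt{1-\epsilon}$, obtained from \cref{lem:Ntarg}, \cref{lem:expectedinnerproduct} and \cref{lem:Ndec}.
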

\begin{proof}
    For brevity, we use the subscripts $a, t$ to abbreviate ``actual'' and ``target''. Define the unnormalized state
    \[
        \ket{\Psi_t(\mathbf{v})} := \frac{1}{\mathcal{N}_t(\mathbf{v}) \cdot \sqrt{|C^\perp|}} \cdot \ket{\Phi_t(\mathbf{v})}.
    \]
    We have both of the following:
    \begin{align}
        \mathbb{E}_{\mathbf{v}} \left[ \langle \Psi_t(\mathbf{v}) | \Psi_t(\mathbf{v}) \rangle \right] &= 1 \quad \textrm{ (by Lemma~\ref{lem:Ntarg})} \label{eq:targ}\\
        \mathbb{E}_{\mathbf{v}}\left[\braket{\Phi_a(\mathbf{v}) | \Psi_t(\mathbf{v})}\right] &= \sqrt{1-\epsilon} \textrm{ (by Lemmas~\ref{lem:expectedinnerproduct} and~\ref{lem:Ndec})}.\label{eq:ip}
    \end{align}
    We now compute:
    \begin{align*}
        &2\left(\braket{\Phi_a(\mathbf{v}) | \hobj'(\mathbf{v}) \otimes \id_\mathrm{anc} | \Phi_a(\mathbf{v})} - \braket{\Psi_t(\mathbf{v}) | \hobj'(\mathbf{v})\otimes \id_\mathrm{anc} | \Psi_t(\mathbf{v})}\right) \\
        =\text{ }&\left(\bra{\Phi_a(\mathbf{v})} - \bra{\Psi_t(\mathbf{v})}\right) (\hobj'(\mathbf{v}) \otimes \id_\mathrm{anc}) \left(\ket{\Phi_a(\mathbf{v})} + \ket{\Psi_t(\mathbf{v})}\right) \\
        +\text{ }&\left(\bra{\Phi_a(\mathbf{v})} + \bra{\Psi_t(\mathbf{v})}\right) (\hobj'(\mathbf{v}) \otimes \id_\mathrm{anc}) \left(\ket{\Phi_a(\mathbf{v})} - \ket{\Psi_t(\mathbf{v})}\right),
    \end{align*}
    which implies:
    \begin{align*}
        &\left|\braket{\Phi_a(\mathbf{v}) | \hobj'(\mathbf{v}) \otimes \id_\mathrm{anc} | \Phi_a(\mathbf{v})} - \braket{\Psi_t(\mathbf{v}) | \hobj'(\mathbf{v}) \otimes \id_\mathrm{anc} | \Psi_t(\mathbf{v})}\right| \\
        \leq\text{ }&\|\ket{\Phi_a(\mathbf{v})} - \ket{\Psi_t(\mathbf{v})}\| \cdot H_\mathrm{max} \cdot \|\ket{\Phi_a(\mathbf{v})} + \ket{\Psi_t(\mathbf{v})}\|.
    \end{align*}
    By Cauchy-Schwarz, it follows that:
    \begin{align}
        &\mathbb{E}_{\mathbf{v}} \left|\braket{\Phi_a(\mathbf{v}) | \hobj'(\mathbf{v}) \otimes \id_\mathrm{anc} | \Phi_a(\mathbf{v})} - \braket{\Psi_t(\mathbf{v}) | \hobj'(\mathbf{v}) \otimes \id_\mathrm{anc} | \Psi_t(\mathbf{v})}\right| \\
        \leq\text{ }&H_\mathrm{max} \cdot \mathbb{E}_{\mathbf{v}} \left[\|\ket{\Phi_a(\mathbf{v})} - \ket{\Psi_t(\mathbf{v})}\| \cdot \|\ket{\Phi_a(\mathbf{v})} + \ket{\Psi_t(\mathbf{v})}\|\right] \\
        \leq\text{ }& H_\mathrm{max} \cdot \sqrt{\mathbb{E}_{\mathbf{v}} \left[\|\ket{\Phi_a(\mathbf{v})} - \ket{\Psi_t(\mathbf{v})}\|^2\right]} \cdot \sqrt{\mathbb{E}_{\mathbf{v}} \left[\|\ket{\Phi_a(\mathbf{v})} + \ket{\Psi_t(\mathbf{v})}\|^2\right]}.\label{eq:distbound}
    \end{align}
    From Equations~\eqref{eq:targ} and~\eqref{eq:ip} we have:
    \begin{align}
        \mathbb{E}_{\mathbf{v}} \left[\|\ket{\Phi_a(\mathbf{v})} - \ket{\Psi_t(\mathbf{v})}\|^2\right] &= 2 - 2\sqrt{1-\epsilon} \\
        \mathbb{E}_{\mathbf{v}}\left[\|\ket{\Phi_a(\mathbf{v})} + \ket{\Psi_t(\mathbf{v})}\|^2\right] &= 2 + 2\sqrt{1-\epsilon}.
    \end{align}
    Combining these with Equation~\eqref{eq:distbound} implies the conclusion.
\end{proof}

Lastly, we prove \cref{thm:expectation_bound}, which we restate here for convenience.

\mainErrorBound*

\begin{proof}
    Our algorithm will sample from $\mathcal{D}_{\mathrm{algo}}$. To analyze this, we apply \cref{thm:weirdcauchy} with $\hobj'(\mathbf{v}) := \mathrm{QFT} \Pi_C X_q^\mathbf{v} \hobj X_q^{-\mathbf{v}} \Pi_C \mathrm{QFT}^\dag$, where $\Pi_C$ is a projector onto strings in $C$. This has spectral norm $\leq 1$. By definition, we know that $(\mathrm{QFT}^\dag \otimes \id_\mathrm{anc}) \ket{\Phi_\mathrm{target}(\mathbf{v})}$ is entirely supported on strings in $C$. Moreover, we have:
    \begin{align}
        &\mathbb{E}_{\mathbf{v}} \left[\frac{\braket{\Phi_\mathrm{target}(\mathbf{v}) | \hobj'(\mathbf{v}) \otimes \id_\mathrm{anc} | \Phi_\mathrm{target}(\mathbf{v})}}{\mathcal{N}_\mathrm{target}(\mathbf{v})^2 \cdot |C^\perp|}\right] \\
        =\text{ }&\mathbb{E}_{\mathbf{v}} \left[\frac{\braket{\Phi_\mathrm{target}(\mathbf{v}) | \mathrm{QFT} \Pi_C X_q^\mathbf{v} \hobj X_q^{-\mathbf{v}} \Pi_C \mathrm{QFT}^\dag \otimes \id_\mathrm{anc} | \Phi_\mathrm{target}(\mathbf{v})}}{\mathcal{N}_\mathrm{target}(\mathbf{v})^2 \cdot |C^\perp|}\right] \\
        =\text{ }&\mathbb{E}_{\mathbf{v}} \left[\frac{\braket{\Phi_\mathrm{target}(\mathbf{v}) | \mathrm{QFT} X_q^\mathbf{v} \hobj X_q^{-\mathbf{v}} \mathrm{QFT}^\dag \otimes \id_\mathrm{anc} | \Phi_\mathrm{target}(\mathbf{v})}}{\mathcal{N}_\mathrm{target}(\mathbf{v})^2 \cdot |C^\perp|}\right] \\
        =\text{ }&\braket{\mathcal{S} | \hobj | \mathcal{S}} \text{ (Lemma~\ref{lem:diagH})}.
    \end{align}
    It follows by \cref{thm:weirdcauchy} that:
    \begin{align}\label{eq:cauchy1}
        \mathbb{E}_{\mathbf{v}} \left[\braket{\Phi_\mathrm{actual}(\mathbf{v}) | \hobj'(\mathbf{v})\otimes \id_\mathrm{anc} | \Phi_\mathrm{actual}(\mathbf{v})}\right] &\geq \braket{\mathcal{S} | \hobj | \mathcal{S}} - 2\sqrt{\epsilon}.
    \end{align}
    Now recall that $\mathcal{D}_{\mathrm{algo}}(\mathbf{v})$ is obtained by measuring the first register of the state
    \begin{equation}
        \ket{\Psi(\mathbf{v})} := \frac{1}{\sqrt{\braket{\Phi_\mathrm{actual}(\mathbf{v}) | \mathrm{QFT} \Pi_C \mathrm{QFT}^\dag \otimes \id_\mathrm{anc} | \Phi_\mathrm{actual}(\mathbf{v})}}}\left(\Pi_C \mathrm{QFT}^\dag\otimes \id_\mathrm{anc}\right) \ket{\Phi_\mathrm{actual}(\mathbf{v})}
    \end{equation}
    in the computational basis. 
    Equivalently, the mixed state $\rho(\mathbf{v})$ can be thought of as the result of computing $\ket{\Psi(\mathbf{v})}$ then discarding the trailing ancilla qubits.
    Bearing this in mind, we have:
    \begin{align}
        &\mathbb{E}_{\mathbf{v}} \left[\braket{\Phi_\mathrm{actual}(\mathbf{v}) | \hobj'(\mathbf{v}) \otimes \id_\mathrm{anc} | \Phi_\mathrm{actual}(\mathbf{v})}\right] \\
        =\text{ }&\mathbb{E}_{\mathbf{v}} \left[\braket{\Phi_\mathrm{actual}(\mathbf{v}) | \mathrm{QFT} \Pi_C X_q^{\mathbf{v}} \hobj X_q^{-\mathbf{v}} \Pi_C \mathrm{QFT}^\dag \otimes \id_\mathrm{anc} | \Phi_\mathrm{actual}(\mathbf{v})}\right] \\
        =\text{ }&\mathbb{E}_{\mathbf{v}} \left[\braket{\Phi_\mathrm{actual}(\mathbf{v}) | \mathrm{QFT} \Pi_C \mathrm{QFT}^\dag\otimes \id_\mathrm{anc} | \Phi_\mathrm{actual}(\mathbf{v})} \cdot \braket{\Psi(\mathbf{v}) | X_q^{\mathbf{v}} \hobj X_q^{-\mathbf{v}}\otimes \id_\mathrm{anc} |\Psi(\mathbf{v})}\right] \\
        \leq\text{ }&\mathbb{E}_{\mathbf{v}}\left[\braket{\Psi(\mathbf{v}) | X_q^{\mathbf{v}} \hobj X_q^{-\mathbf{v}} \otimes \id_\mathrm{anc}|\Psi(\mathbf{v})}\right] \\
        =\text{ }&\mathbb{E}_{\mathbf{v}}\left[\tr\left[X_q^{\mathbf{v}} \hobj X_q^{-\mathbf{v}} \rho(\mathbf{v})\right]\right].
    \end{align}
    Combining this with \cref{eq:cauchy1} implies the conclusion.
\end{proof}

\section{Turbo Prange}
\label{sec:turboprange}

In a max-XORSAT problem we have $m$ $\mathbb{F}_2$-linear equations and $n$ variables, where $m > n$. In Prange's algorithm we choose $\nu$ of the $m$ linear equations with $\nu$ as large as possible such that the resulting system remains solvable. Then we solve it using Gaussian elimination. Generically, the remaining $m-\nu$ equations that were not solved for will each be satisfied with probability half. For generic max-XORSAT instances, in which the equations are random, one expects the linear system to remain solvable up until $\nu = n - O(1)$ in the limit $n \to \infty$. Thus Prange's algorithm asymptotically yields solutions satisfying $n+(m-n)/2$ of the $m$ constraints.

The basic idea of our Turbo Prange heuristic is to first run Prange's algorithm to provide an initial solution, and then to further improve the solution using greedy bit flip updates. To maximize the success of the greedy updates, we choose $\nu$ equations to solve for in a way that is informed by the structure of the Tanner graph of the max-XORSAT instance.

Consider the Tanner graph arising from Gallager's ensemble shown in \cref{fig:gallager_tanner}. In the context of max-XORSAT the top row of vertices (labeled check nodes due to their role in $C^\perp$) represent the $n$ variables, and the bottom row of vertices (labeled bit nodes due to their role in $C^\perp$) represent the $m$ constraint equations. For Tanner graphs arising from Gallager's ensemble we can always choose a subset of the variable nodes that yields a partition of the equation nodes, as illustrated in blue.

Now, suppose we flip one of the blue variable nodes that defines the partition. If this variable appears in $D$ equations, $s$ of which are satisfied, then flipping it will make $D-s$ of these equations satisfied. In our greedy update phase, we can go through each of the blue variable nodes and consider such a change, making the change if it improves the objective value, \textit{i.e.} if $D-s > s$. Since the sets of equations that these blue variable nodes are connected to do not intersect, none of these bit flip moves will undo the gains from any previous move.

In the initial Gaussian elimination phase we have free choice of which $\nu$ equations to solve for. If we choose these randomly, then each of the sets of $D$ equations that we subsequently consider flipping will have on average $s=\frac{D}{m}(\nu+(m-\nu)/2)$, which is more than half of the $D$ equations. Thus, by the statistics of the binomial distribution, this makes it unlikely that $D-s > s$.

Instead, it is statistically better to pack all of the $\nu$ equations that we solve into $\nu/D$ of the elements of the partition. Then, in the greedy phase, we have $\nu/D$ variables where we are guaranteed that flipping them will definitely not yield improvement (since all equations containing them are satisfied), and $(m-\nu)/D$ variables such that flipping them has roughly a fifty percent chance of yielding improvement.

Packing the $\nu$ equations that we solve for into $\nu/D$ of the elements of the partition comes at a cost, however. There are $(m-\nu)/D$ elements of the partition that we exclude, and each of these has a variable that defines it. By definition, the sets in the partition do not overlap. Therefore, these defining variables will not be included in any of the chosen $\nu$ equations. The number of variables appearing in the chosen equations will be at most $n - (m-\nu)/D$. Additionally, some variables are expected to fail to appear in these equations by random chance, just as in our analysis of the erasure errors in \cref{sec:FGUM_analysis}. In fact, one recognizes that the maximum value of $\nu/m$ such that the linear system of equations is solvable with high probability is exactly the quantity $e_{\max}$ analyzed in \cref{sec:FGUM_analysis}.

We can now predict the fraction of constraints satisfied asymptotically by the Turbo Prange heuristic. For each of the unsolved elements of the partition the expected number of equations satisfied after the greedy update step is
\begin{equation}
    \label{eq:sigmad}
    \sigma_D = \frac{1}{2^D} \sum_{s=0}^D \binom{D}{s} \times \max \{s,D-s\}.
\end{equation}
Our partition contains $(m-e_{\max}m)/D$ such sets of equations plus $e_{\max}m/D$ sets of equations that are fully solved by Gaussian elimination. Thus, the expected fraction of equations satisfied will be
\begin{equation}
    \label{eq:stp}
    \frac{s_{\mathrm{TP}}}{m} = e_{\max} + \frac{\sigma_D}{D}\left( 1 - e_{\max} \right).
\end{equation}
Using \cref{eq:hatid} and \cref{eq:sigmad}, one can verify that
\begin{equation}
    \label{eq:sigma_vs_ihat}
    \sigma_D = D + (2^{1-D}-1) \hat{I}^*_D.
\end{equation}
Substituting \cref{eq:sigma_vs_ihat} into \cref{eq:stp} and comparing with \cref{eq:simple_satfrac} shows that 
\begin{equation}
\frac{s_{\mathrm{TP}}}{m} = \frac{\langle s \rangle}{m}.
\end{equation}
In other words, Turbo Prange matches the performance of Regev+FGUM for all $k,D$. Since $e_{\max}$ cancels out of this equality, this conclusion holds exactly even though our equation \cref{eq:emax} for the value of $e_{\max}$ as a function of $k$ and $D$ is only approximate.

\section{Comparison with QAOA}
\label{sec:QAOA}
The Quantum Approximate Optimization Algorithm (QAOA)~\cite{farhi2014quantum} is a quantum algorithm for combinatorial optimization. For a $D$-regular max-$k$-XORSAT instance specified by a $(k,D)$-regular Tanner graph with $n$~variable nodes (set~$V$) and $m$~constraint nodes, the depth-$p$ QAOA state is
\begin{equation}\label{eq:qaoa_state}
  \ket{\boldsymbol{\gamma},\boldsymbol{\beta}}
  = \prod_{\ell=1}^{p} U_B(\beta_\ell)\,U_C(\gamma_\ell)
  \;\ket{+}^{\otimes n}\,,
\end{equation}
where, denoting by $e_1,\ldots,e_k$ the $k$ variable-node neighbors of constraint~$e$,
$U_C(\gamma) = \prod_{e=1}^{m} e^{-i\gamma\, Z_{e_1}\cdots Z_{e_k}}$
is the phase operator and
$U_B(\beta) = \prod_{v\in V} e^{-i\beta\, X_v}$
is the mixer.
The $2p$ parameters $(\boldsymbol\gamma,\boldsymbol\beta)$ are chosen to maximize the expected fraction of satisfied constraints
\begin{equation}\label{eq:qaoa_observable}
  \frac{s_{\mathrm{QAOA}}}{m}
  = \frac{1}{m}\sum_{e=1}^{m}
    \frac{1+\bra{\boldsymbol{\gamma},\boldsymbol{\beta}}\,
    Z_{e_1}\cdots Z_{e_k}\,
    \ket{\boldsymbol{\gamma},\boldsymbol{\beta}}}{2}\,.
\end{equation}

\subsection{Finite $k$ and $D$}\label{sec:QAOA_finite_k_D}
We begin by evaluating the QAOA energy on $D$-regular max-$k$-XORSAT for finite $D$ and $k$ but in the $n\rightarrow\infty$ limit.  Due to the locality of QAOA~\cite{farhi2014quantum}, the expectation $\bra{\bm{\gamma}, \bm{\beta}}Z_{e_1} \ldots Z_{e_k} \ket{\bm{\gamma}, \bm{\beta}}$ for constant QAOA depth depends only on a constant-sized neighborhood of the constraint in the Tanner graph. We note that for Gallager's ensemble and large $n$, the fraction of constraint neighborhoods that are tree-like is $1 - O(n^{-(1-\epsilon)})$ for every $\epsilon > 0$, with high probability over the random choice of instance (see~\cref{app:trees}); we may set $\epsilon = 0.01$ for concreteness. Since the contribution of any per-constraint expectation is bounded by $1$, the QAOA energy density on a random Gallager instance equals the tree value up to $O(n^{-0.99})$ correction with high probability. It is therefore sufficient to assume a high-girth Tanner graph to compute the performance of QAOA on the Gallager's ensemble for large $n$.

Under this high-girth condition, the expectation value~\eqref{eq:qaoa_observable} reduces to contracting a tensor network on a single light-cone tree since all terms contribute equally. To perform this contraction, we extend the techniques introduced for $k=2$ in \cite{BF21, farhi2025lower} to arbitrary $k$. By regularity, the $D{-}1$ identical subtrees at each variable collapse into a single branch, and the $k{-}1$ identical children of each clause likewise collapse, giving a total cost of $O(p \cdot 4^p)$, independent of $D$, $k$, and the total number of qubits in the light-cone (which is exponential in $k$, $D$, and $p$). We use the resulting objective to optimize the QAOA parameters $(\bm{\gamma},\bm{\beta})$ for depths $p\leq16$. 

Results for all $(k,D)$ in \cref{tab:scores} are reported for $p = 16$. At this depth, QAOA provably outperforms Regev+FGUM for configurations $(k,D)\in\left\{(3,5), (3,6), (3,7), (3,8), (4,8)\right\}$ at sufficiently large $n$. For these configurations, simulated annealing surpasses both Regev+FGUM and QAOA. However, the QAOA performance grows monotonically with depth $p$. Although we stop at $p=16$ because our analysis methods become too computationally costly at larger $p$, running QAOA on a quantum computer would have cost only linear in $p$. Investigating whether QAOA surpasses competing quantum and classical algorithms for $D$-regular max-$k$-XORSAT at higher $p$ is an interesting area for future work.

\subsection{Large $D$ Asymptotics}
In \cite{BF21} it was shown that the fraction of constraints satisfied by depth-$p$ QAOA on $D$-regular max-$k$-XORSAT is
\begin{equation}
    \label{eq:QAOA}
    \frac{s_{\mathrm{QAOA}}}{m} = \frac{1}{2} + \nu_p^{[k]} \sqrt{\frac{k}{2(D-1)}} \pm O(1/D).
\end{equation}
Numerically computed values of $\nu_p^{[k]}$ for $k$ up to 6 are given at \cite{bensgithub}.

The result \cref{eq:QAOA} provides an insight into the behavior of QAOA in the limit of large $D$ with $k$ fixed, complementing the analysis for finite $D$ and $k$ in \cref{sec:QAOA_finite_k_D} and \cref{tab:scores}. This motivates an analysis of Regev+FGUM in the same limit. Recall that Regev+FGUM matches the performance of Turbo Prange. Since the formulas for the performance of Turbo Prange are slightly simpler, we will analyze their asymptotics.

Let $S$ be a random variable binomially distributed according to $S \sim \mathrm{Bin}(D,1/2)$. From \cref{eq:sigmad}, we see that
\begin{equation}
    \label{eq:sigmaD_largeD}
    \frac{\sigma_D}{D} = \frac{1}{2} + \frac{\mathbb{E}\left[|S-D/2|\right]}{D} = \frac{1}{2} + \frac{1}{\sqrt{2 \pi D}} + O(D^{-1}).
\end{equation}
Next, we show that for fixed $k \geq 3$, $e_{\max}$ scales like $1/D$ in the limit of large $D$. To do so, substitute $e_{\max} = x_k/{D}$ into \cref{eq:emax}. This yields
\begin{equation}
    x_k \simeq \frac{x_k}{D} + (k-1) \left( 1 - \left( 1-\frac{x_k}{D} \right)^D\right).
\end{equation}
Recall that
\begin{equation}
    \lim_{D \to \infty} \left( 1-\frac{x_k}{D} \right)^D = e^{-x_k}.
\end{equation}
Thus, 
\begin{equation}
    \label{eq:xk}
    x_k \simeq (k-1) \left( 1 - e^{-x_k} \right) \quad \textrm{for large $D$}.
\end{equation}
We see that the solution $x_k$ to this equation becomes independent of $D$ in the limit of large $D$. Hence
\begin{equation}
    e_{\max} \simeq \frac{x_k}{D} \quad \textrm{for large $D$},
\end{equation}
where $x_k$ is a constant that depends only on $k$ and can be found by numerically solving \cref{eq:xk}.

Substituting \cref{eq:xk} and \cref{eq:sigmaD_largeD} into \cref{eq:stp} and recalling that $s_{TP} = \langle s \rangle$, we see that the satisfaction fraction achieved by Regev+FGUM for fixed $k$ in the limit of large $D$ is
\begin{equation}
    \frac{ \langle s \rangle}{m} = \frac{1}{2} + \frac{1}{\sqrt{2 \pi D}} + O(D^{-1}).
\end{equation}
In \cref{tab:nu} we compare this with the satisfaction fraction achieved by 14-round QAOA in the limit of fixed $k$ and large $D$ and find that QAOA wins.

\begin{table}[ht!]
    \centering
    \renewcommand{\arraystretch}{1.4}
    \begin{tabular}{c|cccc}
    \hline
        $k$ & 3 & 4 & 5 & 6 \\
        \hline
        $\nu^k_{14} \sqrt{k/2}$ & 0.7865 & 0.8666 & 0.9243 & 0.9686 \\
        \hline
    \end{tabular}
    \caption{\label{tab:nu} Asymptotically, QAOA with $p$ rounds achieves a satisfaction fraction $\frac{1}{2} + \nu^k_p \sqrt{k/2} D^{-1/2} + O(1/D)$. The values of $\nu^k_p$ are computed up to $p=14$ at \cite{bensgithub}. Regev+FGUM achieves $\frac{1}{2} + \frac{1}{\sqrt{2 \pi}} D^{-1/2} + O(1/D)$. As shown above, $\nu^k_{14} \sqrt{k/2}$ exceeds $\frac{1}{\sqrt{2 \pi}} \simeq 0.3989$ for all values of $k$ considered in \cite{BF21}. Thus QAOA with 14 rounds outperforms Regev+FGUM in the limit of large $D$.}
\end{table}

\subsection{QAOA at higher depth}

The large-$D$ analysis shows that QAOA with $p=14$ rounds outperforms Regev+FGUM in the $D\to\infty$ limit at fixed $k$. The finite $(k,D)$ results in \cref{sec:QAOA_finite_k_D} show that at $p=16$ QAOA surpasses Regev+FGUM for $(k,D) \in \{(3,5),(3,6),(3,7),(3,8),(4,8) \}$, whereas Regev+FGUM surpasses $p=16$ QAOA for $(k,D) \in \{(3,4),(4,5),(4,6),(4,7),(5,6),(5,7),(5,8),(6,7),(6,8),(7,8) \}$. Since the performance of QAOA increases with $p$, it may be that QAOA at larger $p$ surpasses Regev+FGUM at additional $(k,D)$ pairs. Refuting or confirming this remains an interesting open question, as does assessing the relative performance of QAOA and Regev+FGUM when the ratio $k/D$ is kept fixed and both become large.

\section{Conclusions}
\label{sec:conclusions}

Locally-quantum decoding substantially broadens the toolbox for quantum decoders in DQI and related algorithms. The measurements we analyze are essentially the simplest local measurement operators one could use, fitting in conveniently with Gaussian elimination. This baseline already substantially outperforms classical belief propagation and brings us to the boundary of quantum advantage. However, parity equations are by no means the only useful information we can learn. Using more sophisticated coherent measurements or adaptive strategies, one might hope to cross the boundary and realize genuine quantum advantage on max-$k$-XORSAT. \\

\noindent
\textbf{Acknowledgments:} We thank Umesh Vazirani, Alexander Schmidhuber, Mark Zhandry, Eddie Farhi, and Vinod Vaikuntanathan for helpful conversations and feedback on the manuscript. SR was supported by the Defense Advanced Research Projects Agency (DARPA) under Contract No. HR0011-25-C-0300 and Amazon Research Awards, and partially supported by Jane Street. Any opinions, findings and conclusions or recommendations expressed in this material are those of the author(s) and do not necessarily reflect the views of the Defense Advanced Research Projects Agency (DARPA). QB and AC acknowledge funding from the French PEPR integrated projects EPIQ (ANR-22-PETQ-007), PQ-TLS (ANR-22-PETQ-008) and HQI (ANR-22-PNCQ-0002) all part of plan France 2030. This work was done in part while the authors were visiting the Simons Institute for the Theory of Computing and the Challenge Institute for Quantum Computation at UC Berkeley.\\

\noindent
\textbf{Disclaimer:} This paper was prepared for informational purposes with contributions from the Global Technology Applied Research center of JPMorgan Chase \& Co. This paper is not a product of the Research Department of JPMorgan Chase \& Co. or its affiliates. Neither JPMorgan Chase \& Co. nor any of its affiliates makes any explicit or implied representation or warranty and none of them accept any liability in connection with this paper, including, without limitation, with respect to the completeness, accuracy, or reliability of the information contained herein and the potential legal, compliance, tax, or accounting effects thereof. This document is not intended as investment research or investment advice, or as a recommendation, offer, or solicitation for the purchase or sale of any security, financial instrument, financial product or service, or to be used in any way for evaluating the merits of participating in any transaction.

\begin{appendices}
\crefalias{section}{appendix}

\section{Relation Between Regev's Reduction and DQI}
\label{app:DQI_vs_regev}

Although DQI was obtained in \cite{JSW25} from a different conceptual framework, it bears a close relation to Regev's reduction, which we here clarify. Aside from the conceptual framing, the primary difference between DQI and Regev's reduction is that, whereas Regev's reduction prepares the state
\[
    \ket{\psi_P} \propto \sum_{\mathbf{x} \in \mathbb{F}_2^n} P(B\mathbf{x} - \mathbf{v}) \ket{B \mathbf{x}},
\]
DQI prepares the state
\[
    \ket{\psi_{\mathrm{DQI}}} \propto \sum_{\mathbf{x} \in \mathbb{F}_2^n} P(B\mathbf{x} - \mathbf{v}) \ket{\mathbf{x}}.
\]
That is, DQI directly prepares a weighted superposition over the decision variables $\mathbf{x} \in \mathbb{F}_2^n$, whereas Regev's reduction prepares a weighted superposition over strings $B \mathbf{x} \in \mathbb{F}_2^m$ indicating which of the $m$ constraints are satisfied.

In the case that $B \in \mathbb{F}_2^{m \times n}$ has rank $n$, the sets of strings $\mathbf{x} \in \mathbb{F}_2^n$ and $\{B \mathbf{x} : \mathbf{x} \in \mathbb{F}_2^n \}$ are in bijective correspondence and furthermore the correspondence is efficiently implementable on classical computers using Gaussian elimination over $\mathbb{F}_2$. In this case the $m$-qubit state $\ket{\psi_P}$ and the $n$-qubit state $\ket{\psi_{\mathrm{DQI}}}$ are equally useful computationally, up to polynomial factors, although $\ket{\psi_{\mathrm{DQI}}}$ uses fewer qubits and requires no classical postprocessing.

In the language of codes, Regev's reduction prepares a superposition over codewords in the primal code $C = \{ B \mathbf{x} : \mathbf{x} \in \mathbb{F}_2^n \}$ which is weighted by the distance of these codewords from a target bit string $\mathbf{v}$. It reduces this to a problem of decoding codewords from the dual code $C^\perp = \{\mathbf{d} \in \mathbb{F}_2^m : B^T \mathbf{d} = \mathbf{0} \}$ which have been corrupted by bit flip errors. DQI produces a superposition of messages $\mathbf{x}$ weighted by the distance of their corresponding primal codewords $B \mathbf{x}$ from a target bit string $\mathbf{v}$. It reduces this to the problem of deducing errors $\mathbf{e} \in \mathbb{F}_2^m$ from the corresponding syndromes $B^T \mathbf{e} \in \mathbb{F}_2^n$ induced by the parity check matrix $B^T$ of the dual code $C^\perp$.

A second difference between DQI and Regev's reduction is that in DQI the function $P$ is chosen so that its Fourier transform $\widetilde{P}$ has support only out to Hamming radius $\ell$, and among all possible such functions $\widetilde{P}$ the optimal choice is found for maximizing the expected number of max-XORSAT constraints satisfied upon measuring the DQI output state. In some cases, such as for the OPI problem, $\ell$ can be chosen such that the decoding problem out to radius $\ell$ can be solved with $100\%$ success probability, which ensures that the quantum algorithm requires no postselection and achieves perfectly uniform sampling among solutions of given objective value. This choice of $\widetilde{P}$ necessitates some additional quantum circuitry to prepare a superposition of Dicke states in the initial step of the DQI algorithm \cite{JSW25, khattar2025verifiable}.

In Regev's reduction one instead traditionally uses a product distribution for $P$, such as 
\[
    P_\alpha(\mathbf{y}) := \prod_{j=1}^m \sqrt{1-\alpha}^{1-y_j} \sqrt{\alpha}^{y_j}.
\]
This has the advantage that the first state preparation step in the algorithm is trivial, as one needs only to prepare a tensor product state. However, the Hadamard transform
\[
    \widetilde{P}_\alpha(\mathbf{e}) := \prod_{j=1}^m \sqrt{1-\alpha^\perp}^{1-e_j} \sqrt{\alpha^\perp}^{e_j}
\]
has support on all bit strings. Consequently it is impossible to perfectly solve the resulting quantum decoding problem of recovering an unknown codeword $\mathbf{d}$ from the state $\sum_{\mathbf{e} \in \mathbb{F}_2^m} \widetilde{P}_\alpha(\mathbf{e}) \ket{\mathbf{d} + \mathbf{e}}$. At sufficiently small $\alpha^\perp$ the translations of this state by different codewords in $C^\perp$ have small overlap and $\mathbf{d}$ can be recovered with high probability, resulting in a good approximate solution to the quantum decoding problem. The error analysis is surprisingly subtle, however, as noted in Section 6.3.1 of \cite{chailloux2023quantum}. An alternative approach, as in \cite{debris2024quantum}, is to truncate $\widetilde{P}$ at some Hamming radius below half the distance of the code $C^\perp$, which ensures that perfect decodability is at least information-theoretically possible.

Remarkably, although the product distribution truncated to Hamming radius $\ell$ is not identical at any finite $\ell$ to the fully optimal choice $\widetilde{P}$ of radius-$\ell$ functions derived in \cite{JSW25}, the expected number of constraints satisfied using the truncated product distribution converges to the optimum asymptotically. Thus the analysis in \cite{JSW25} shows retrospectively that the choice in \cite{debris2024quantum} is asymptotically optimal.

For the purpose of solving optimization problems by reducing them to classical decoding problems the choice between DQI versus Regev's reduction is largely a matter of technical details that can affect quantitative resource requirements but do not affect what can or cannot be achieved in polynomial time. However, there are some contexts in which the distinction between these two approaches is more significant. In particular, DQI generalizes naturally to solve problems of preparing Gibbs states of Hamiltonians \cite{schmidhuber2025hamiltonian}. On the other hand, the quantum decoding problem produced by Regev's reduction, in which one is faced with recovering a codeword that has been subjected to a superposition of bit flip errors, appears more natural to work with than the quantum decoding problem produced by DQI, in which one is faced with recovering a superposition of error vectors from a superposition of syndromes. For this reason, in the present manuscript we work in the $m$-dimensional dual spaces of codewords and corrupted dual codewords, as is traditional in Regev's reduction, rather than in the $n$-dimensional dual spaces of messages and syndromes, as is done in DQI.

\section{Gallager's Ensemble is Locally Treelike}
\label{app:trees}

\begin{definition}\label{def:gallager-ensemble}
For positive integer $t$ let $[t] = \{1,2,\ldots,t\}$. Fix integers $D>k\ge 2$. For each $b\ge 1$, Gallager's ensemble $\mathcal{G}(k,D,b)$ is constructed as follows. Let
\[
A_0=[I_b\ I_b\ \cdots\ I_b]\in\{0,1\}^{\,b\times Db}
\]
be the horizontal concatenation of $D$ copies of the $b\times b$ identity matrix. For each $i\in [k]$, choose independently a uniformly random $Db\times Db$ permutation matrix $P_i$ and define
\[
M_i:=A_0P_i.
\]
Now stack these $k$ row blocks vertically to obtain the $kb\times Db$ matrix
\[
A=
\left[\begin{array}{c}
M_1\\ \hline
M_2\\ \hline
\vdots\\ \hline
M_k
\end{array}\right].
\]
The matrix $A$ defines a bipartite graph in the natural way, with rows corresponding to check vertices and columns corresponding to bit vertices, and with an edge between a check vertex and a bit vertex if and only if the corresponding entry of $A$ is $1$.
\end{definition}

By construction, in a graph sampled from $\mathcal{G}(k,D,b)$, every bit vertex has degree $k$ and every check vertex has degree $D$. Moreover, the $k$ layers of checks are independent and identically distributed, and within each layer the $b$ check vertices have disjoint neighborhoods that form a uniformly random partition of the common bit set into $b$ blocks of size $D$. Furthermore, each bit vertex is adjacent to exactly one check vertex from each layer.

More formally, we index the bit vertices as $V_b = [Db]$ and the check vertices as $C_b = \{(i,j): i\in [k], j\in [b]\}$, so that the check vertex $(i,j)$ corresponds to row $(i-1)b+j$ of $A$. For each $(i,j)\in C_b$, let $S_{(i,j)}\subseteq V_b$ denote the support of row $(i-1)b+j$. Equivalently, $S_{(i,j)}$ is the neighborhood of the check vertex $(i,j)$.  For each fixed $i\in [k]$, the checks
\[
(i,1),(i,2),\dots,(i,b)
\]
form the $i$th \emph{layer}. In that layer, the sets
\[
S_{(i,1)},\dots,S_{(i,b)}
\]
form a uniformly random ordered partition of the common bit set $V_b$ into $b$ disjoint blocks of size $D$. For each $i \in [k]$, let $\sigma_i$ denote the permutation on $[Db]$ underlying $P_i$, so that $(M_i)_{j,w} = 1$ iff $\sigma_i(w) \equiv j \pmod{b}$, or equivalently $w \in S_{(i,j)}$.

We are now ready to bound the expected number of short cycles.

\begin{proposition}[Bounding expected number of short cycles]
\label{prop:bounding_expected_number_short_cycles}
Let $N_{\ell}$ denote the number of $\ell$-cycles in the Tanner graph, i.e.\ the number of cycles $\left(u_1\,w_1\,\ldots\,u_{\ell}\,w_{\ell}\right)$ where $u_1, \ldots, u_{\ell}$ are distinct check vertices and $w_1, \ldots, w_{\ell}$ distinct bit vertices. Then for a graph sampled from $\mathcal{G}(k,D,b)$,
\begin{align}
    \mathbb{E}\left[N_{\ell}\right] & \leq \left(4kD\right)^{\ell} \qquad \textrm{for $b \geq 4\ell/D$}.
\end{align}
\begin{proof}
By linearity of expectation,
\begin{align}\label{eq:expected_l_cycles_final_expression}
    \mathbb{E}\left[N_{\ell}\right] & = \sum_{\substack{\textrm{candidate}\\\textrm{$\ell$-cycles}}}\mathrm{Pr}\left[\left(u_1\,w_1\,\ldots\,u_{\ell}\,w_{\ell}\right) \in G\right],
\end{align}
where the sum is over all potential $\ell$-cycles, i.e.\ all choices of distinct check vertices $u_1,\ldots,u_{\ell} \in C_b$ and distinct bit vertices $w_1,\ldots,w_{\ell} \in V_b$. We bound the two ingredients---the per-cycle probability and the number of candidate cycles---separately.

\medskip
\noindent\emph{Per-cycle probability.}
Write each check vertex as $u_r = (i_r, j_r) \in C_b$. The cycle $(u_1\,w_1\,\ldots\,u_{\ell}\,w_{\ell})$ belongs to $G$ iff for each $r \in [\ell]$ (indices mod $\ell$), check vertex $(i_r, j_r)$ is adjacent to both bit vertices $w_r$ and $w_{r-1}$, i.e.\ $w_r, w_{r-1} \in S_{(i_r,j_r)}$. Since $\sigma_{i_r}(w) \equiv j_r \pmod{b}$ iff $w \in S_{(i_r,j_r)}$, this requires:
\begin{align}\label{eq:residue_constraints}
    \sigma_{i_r}(w_r) \equiv j_r \pmod{b} \quad \textrm{and} \quad \sigma_{i_r}(w_{r-1}) \equiv j_r \pmod{b}.
\end{align}
Each residue class $\{j_r, j_r + b, \ldots, j_r + (D-1)b\}$ has $D$ elements. For each $r$, let $s_r, s'_r \in \{0, \ldots, D-1\}$ specify the exact images:
\begin{align}
    \sigma_{i_r}(w_r) = j_r + bs_r, \qquad \sigma_{i_r}(w_{r-1}) = j_r + bs'_r.
\end{align}
For any fixed choice of $(s_r, s'_r)_{r \in [\ell]}$, these are $2\ell$ point constraints on the uniform random permutations $\sigma_1,\ldots,\sigma_k$. Since the values of a uniform random permutation on $[Db]$ at $d$ distinct inputs are distributed as sampling without replacement (i.e., the probability of $d$ prescribed input--output pairs is $1/(Db)_d \leq 1/(Db - d)^d$), the probability that all constraints are simultaneously satisfied is at most:
\begin{align}
    \frac{1}{\left(Db - 2\ell\right)^{2\ell}}.
\end{align}
Taking a union bound over the $D^{2\ell}$ choices of $(s_r, s'_r)_{r \in [\ell]}$:
\begin{align}\label{eq:per_cycle_prob_bound}
    \mathrm{Pr}\left[(u_1\,w_1\,\ldots\,u_{\ell}\,w_{\ell}) \in G\right] & \leq \frac{D^{2\ell}}{\left(Db - 2\ell\right)^{2\ell}}.
\end{align}

\medskip
\noindent\emph{Counting candidate cycles.}
The number of candidate $\ell$-cycles is at most $|C_b|^{\ell}|V_b|^{\ell}/\ell = (kb)^{\ell}(Db)^{\ell}/\ell$: there are at most $(kb)^{\ell}$ ordered $\ell$-tuples of check vertices and $(Db)^{\ell}$ of bit vertices, and we divide by $\ell$ since each cycle is counted $\ell$ times under cyclic relabelling.

\medskip
\noindent\emph{Combining.}
Multiplying the per-cycle bound~\eqref{eq:per_cycle_prob_bound} by the cycle count:
\begin{align}
    \mathbb{E}\left[N_{\ell}\right] & \leq \frac{(kb)^{\ell}(Db)^{\ell}}{\ell} \cdot \frac{D^{2\ell}}{(Db - 2\ell)^{2\ell}}
    = \frac{k^{\ell}D^{\ell}}{\ell\left(1 - \frac{2\ell}{Db}\right)^{2\ell}}
    \leq \frac{\left(4kD\right)^{\ell}}{\ell}
    \leq \left(4kD\right)^{\ell} \qquad \textrm{for $b \geq 4\ell/D$}.
\end{align}
\end{proof}
\end{proposition}

Proposition~\ref{prop:bounding_expected_number_short_cycles} confirms the number of short cycles is constant in expectation. The following Corollary shows this implies almost-treelikeness of $G$ with high probability.

\begin{corollary}[Bounding fraction of non-treelike neighbourhoods with high probability]
\label{corollary:gallager_ensemble_treelikeness}
For any vertex $x$ of $G$, write $\mathcal{N}_{\ell}(x)$ for the subgraph of $G$ induced on all vertices within graph distance $2\ell$ of $x$. For $G \sim \mathcal{G}(k,D,b)$ with $b \geq 8\ell/D$,
\begin{align}
    \mathbb{E}\!\left[\left|\left\{x \in C_b \cup V_b\,:\,\mathcal{N}_{\ell}(x)\,\textrm{has a cycle}\right\}\right|\right] & \leq \left(4kD\right)^{3\ell + 2}.
\end{align}
In particular, for any constants $\ell$ and $\epsilon > 0$, $G$ is treelike around all but an $\mathcal{O}(b^{-(1-\epsilon)})$ fraction of its vertices with probability $1 - \mathcal{O}\!\left(b^{-\epsilon}\right)$.
\begin{proof}
Since check vertices have degree $D$ and bit vertices degree $k$, every vertex has at most $L:=(kD)^{\ell+1}$ vertices within distance~$2\ell$. Call $x$ \emph{bad} if it lies on a simple cycle of length $\ell'\le 2\ell$, and \emph{tainted} if $\mathcal{N}_{\ell}(x)$ contains a cycle.\footnote{Recall that our convention for cycle length in the Tanner graph counts the number of check--bit pairs: an $\ell'$-cycle traverses $\ell'$~check vertices and $\ell'$~bit vertices, hence $2\ell'$~edges.} Any cycle in $\mathcal{N}_{\ell}(x)$ contains a simple cycle of at most $4\ell$ edges, i.e.\ of length at most $2\ell$ in our convention (a non-tree edge in the BFS tree from $x$ closes a cycle of at most $4\ell+1$ edges, rounded down to $4\ell$ since $G$ is bipartite). So every tainted vertex is within distance $2\ell$ of a bad vertex, giving
\[
\#\textrm{tainted} \;\le\; L\cdot\#\textrm{bad} \;\le\; L\sum_{\ell'=1}^{2\ell}2\ell'\,N_{\ell'},
\]
where the last inequality uses that each simple $\ell'$-cycle contributes $2\ell'$ bad vertices. Taking expectations and applying Proposition~\ref{prop:bounding_expected_number_short_cycles} (valid for $b\ge 8\ell/D$),
\begin{align}
    \mathbb{E}\!\left[\#\textrm{tainted}\right]
    &\;\le\; L\sum_{\ell'=1}^{2\ell}2\ell'\left(4kD\right)^{\ell'}
    \;\le\; (kD)^{\ell+1}\cdot 8\ell\cdot\left(4kD\right)^{2\ell}
    \;=\; 8\ell\cdot 4^{2\ell}(kD)^{3\ell+1}\notag\\
    &\;\le\; 4^{3\ell+2}(kD)^{3\ell+1}
    \;\le\; \left(4kD\right)^{3\ell + 2},
\end{align}
where the second inequality bounds the sum by $2\ell$ times its largest term times a geometric factor ($\tfrac{r}{r-1}\le 2$ for $r=4kD\ge 16$), the fourth uses $\ell\le 4^{\ell}$ and $8\le 4^2$.
The high-probability statement follows by Markov's inequality.
\end{proof}
\end{corollary}

Corollary~\ref{corollary:gallager_ensemble_treelikeness} establishes that in the limit $b \to \infty$, a Tanner graph $G$ sampled from $\mathcal{G}(k,D,b)$ is treelike around all but a vanishing fraction of its vertices with high probability. The energy of QAOA for any hyperedge is further bounded by a one. It follows that for every $\epsilon > 0$, the energy density of QAOA on the $(k,D)$-Gallager ensemble equals its tree value, computed according to \cite{BF21}, up to $O(n^{-(1-\epsilon)})$ correction with high probability. Setting $\epsilon = 0.01$ for concreteness, the correction is $O(n^{-0.99})$. 

\end{appendices}

\bibliography{minrefs}

@inproceedings{chailloux2024quantum,
  title={Quantum advantage from soft decoders},
  author={Chailloux, Andr{\'e} and Tillich, Jean-Pierre},
  booktitle={Proceedings of the 57th Annual ACM Symposium on Theory of Computing},
  pages={738--749},
  year={2025}
}

@article{JSW25,
	title = {Optimization by decoded quantum interferometry},
	author = {Stephen P. Jordan and Noah Shutty and Mary Wootters and Adam Zalcman and Alexander Schmidhuber and Robbie King and Sergei V. Isakov and Tanuj Khattar and Ryan Babbush},
	date = {2025/10/01},
	doi = {10.1038/s41586-025-09527-5},
	isbn = {1476-4687},
	journal = {Nature},
	number = {8086},
	pages = {831--836},
	url = {https://doi.org/10.1038/s41586-025-09527-5},
	volume = {646},
	year = {2025},
  note={arXiv:2408.08292}
}

@inproceedings{chailloux2023quantum,
  title={The Quantum Decoding Problem},
  author={Chailloux, Andr{\'e} and Tillich, Jean-Pierre},
  booktitle={19th Conference on the Theory of Quantum Computation, Communication and Cryptography (TQC 2024)},
  pages={6--1},
  year={2024},
  organization={Schloss Dagstuhl--Leibniz-Zentrum f{\"u}r Informatik}
}

@inproceedings{clz21,
  title={Quantum algorithms for variants of average-case lattice problems via filtering},
  author={Yilei Chen and Qipeng Liu and Mark Zhandry},
  booktitle={Annual international conference on the theory and applications of cryptographic techniques},
  pages={372--401},
  year={2022},
  organization={Springer}
}

@article{kothari2025no,
  title={No exponential quantum speedup for $\mathrm{SIS}^{\infty}$ anymore},
  author={Robin Kothari and Ryan O'Donnell and Kewen Wu},
  journal={arXiv:2510.07515},
  year={2025}
}

@article{quantumEquivalenceSLWEISIS,
  title={On the Quantum Equivalence between $\mathrm{S}|\mathrm{LWE}\rangle$ and $\mathrm{ISIS}$},
  author={Andr{\'e} Chailloux and Paul Hermouet},
  journal={Cryptology ePrint Archive},
  year={2025}
}

@article{R04,
  title={Quantum computation and lattice problems},
  author={Oded Regev},
  journal={SIAM Journal on Computing},
  volume={33},
  number={3},
  pages={738--760},
  year={2004},
  publisher={SIAM}
}

@article{R09,
  title={On lattices, learning with errors, random linear codes, and cryptography},
  author={Oded Regev},
  journal={Journal of the ACM (JACM)},
  volume={56},
  number={6},
  pages={1--40},
  year={2009},
  publisher={ACM New York, NY, USA}
}

@Article{AR05,
  title = {Lattice problems in {NP $\cap$ coNP}},
  author = {Dorit Aharonov and Oded Regev},
  journal = {Journal of the ACM (JACM)},
  volume = {52},
  number = {5},
  pages = {749--765},
  year = {2005}
}

@article{khattar2025verifiable,
  title={Verifiable Quantum Advantage via Optimized {DQI} Circuits},
  author={Tanuj Khattar and Noah Shutty and Craig Gidney and Adam Zalcman and Noureldin Yosri and Dmitri Maslov and Ryan Babbush and Stephen P. Jordan},
  journal={arXiv:2510.10967},
  year={2025}
}

@article{buzet2025fine,
  title={Fine-Grained Unambiguous Measurements},
  author={Quentin Buzet and Andr{\'e} Chailloux},
  journal={arXiv:2510.07298},
  year={2025}
}

@article{renes2017belief,
  title={Belief propagation decoding of quantum channels by passing quantum messages},
  author={Joseph M. Renes},
  journal={New Journal of Physics},
  volume={19},
  number={7},
  pages={072001},
  year={2017},
  publisher={IOP Publishing}
}

@inproceedings{debris2024quantum,
  title={Quantum oblivious {LWE} sampling and insecurity of standard model lattice-based snarks},
  author={Thomas Debris-Alazard and Pouria Fallahpour and Damien Stehl{\'e}},
  booktitle={Proceedings of the 56th Annual ACM Symposium on Theory of Computing},
  pages={423--434},
  year={2024}
}

@inproceedings{brandsen2022belief,
  title={Belief propagation with quantum messages for symmetric classical-quantum channels},
  author={Sarah Brandsen and Avijit Mandal and Henry D. Pfister},
  booktitle={2022 IEEE Information Theory Workshop (ITW)},
  pages={494--499},
  year={2022},
  organization={IEEE}
}

@article{mandal2026belief,
  title={Belief Propagation with Quantum Messages for Symmetric Q-ary Pure-State Channels},
  author={Avijit Mandal and Henry D. Pfister},
  journal={arXiv:2601.21330},
  year={2026}
}

@Article{RU01,
  title = {The capacity of low-density parity-check codes under message-passing decoding},
  author = {Thomas J. Richardson and R{\"u}diger L. Urbanke},
  journal = {IEEE Transactions on Information Theory},
  volume = {47},
  number = {2},
  pages = {599--618},
  year = {2001}
}

@inproceedings{ATS03,
  title={Adiabatic quantum state generation and statistical zero knowledge},
  author={Dorit Aharonov and Amnon Ta-Shma},
  booktitle={Proceedings of the thirty-fifth annual ACM symposium on Theory of computing},
  pages={20--29},
  year={2003}
}

@Book{NC00,
  author =       {Michael A. Nielsen and Isaac L. Chuang},
  title =        {Quantum Computation and Quantum Information},
  publisher =    {Cambridge University Press},
  year =         {2000}
}

@article{schmidhuber2025hamiltonian,
  title={Hamiltonian decoded quantum interferometry},
  author={Alexander Schmidhuber and Jonathan Z. Lu and Noah Shutty and Stephen Jordan and Alexander Poremba and Yihui Quek},
  journal={arXiv:2510.07913},
  year={2025}
}

@article{farhi2001quantum,
  title={A quantum adiabatic evolution algorithm applied to random instances of an {NP}-complete problem},
  author={Edward Farhi and Jeffrey Goldstone and Sam Gutmann and Joshua Lapan and Andrew Lundgren and Daniel Preda},
  journal={Science},
  volume={292},
  number={5516},
  pages={472--475},
  year={2001},
  publisher={American Association for the Advancement of Science}
}

@article{farhi2014quantum,
  title={A quantum approximate optimization algorithm},
  author={Edward Farhi and Jeffrey Goldstone and Sam Gutmann},
  journal={arXiv:1411.4028},
  year={2014}
}

@article{montanaro2020quantum,
  title={Quantum speedup of branch-and-bound algorithms},
  author={Ashley Montanaro},
  journal={Physical Review Research},
  volume={2},
  number={1},
  pages={013056},
  year={2020},
  publisher={APS}
}

@article{li2025new,
  title={A New Quantum Linear System Algorithm Beyond the Condition Number and Its Application to Solving Multivariate Polynomial Systems},
  author={Jianqiang Li},
  journal={arXiv:2510.05588},
  year={2025}
}

@Article{BF21,
  author = {Joao Basso and Edward Farhi and Kunal Marwaha and Benjamin Villalonga and Leo Zhou},
  title = {The quantum approximate optimization algorithm at high depth for {MaxCut} on large-girth regular graphs and the {S}herrington-{K}irkpatrick model},
  journal = {arXiv:2110.14206},
  year = {2021}
}

@Article{FJ24,
  author = {Edward Farhi and Stephen P. Jordan},
  title = {Efficiently constructing a quantum uniform superposition over bit strings near a binary linear code},
  journal = {Quantum Information and Computation},
  volume = {24},
  number = {15/16},
  pages = {1326--1355},
  note =  {arXiv:2404.16129},
  year = {2025}
}

@misc{bensgithub,
  author = {Joao Basso and Edward Farhi and Kunal Marwaha and Benjamin Villalonga and Leo Zhou},
  title = {Performance of the {QAOA} on {MaxCut} over Large-Girth Regular Graphs},
  year = {2022},
  howpublished = {\url{https://github.com/benjaminvillalonga/large-girth-maxcut-qaoa/blob/main/data.csv}},
  commit = {b5bbc2}
}

@article{farhi2025lower,
  title={Lower bounding the {MaxCut} of high girth 3-regular graphs using the {QAOA}},
  author={Edward Farhi and Sam Gutmann and Daniel Ranard and Benjamin Villalonga},
  journal={arXiv:2503.12789},
  year={2025}
}

@article{kramer2026tight,
  title={Tight inapproximability of {max-LINSAT} and implications for decoded quantum interferometry},
  author={Maximilian J. Kramer and Carsten Schubert and Jens Eisert},
  journal={arXiv:2603.04540},
  year={2026}
}

@article{anschuetz2025decoded,
  title={Decoded quantum interferometry requires structure},
  author={Eric R. Anschuetz and David Gamarnik and Jonathan Z. Lu},
  journal={arXiv:2509.14509},
  year={2025}
}

@article{parekh2025no,
  title={No quantum advantage in decoded quantum interferometry for {MaxCut}},
  author={Ojas Parekh},
  journal={arXiv:2509.19966},
  year={2025}
}

@Article{ZP75,
  author = {Victor Vasilievich Zyablov and Mark Semenovich Pinsker},
  title = {Estimation of the error-correction complexity for {G}allager low-density codes},
  journal = {Problemy Peredachi Informatsii},
  volume = {11},
  number = {1},
  pages = {23--36},
  year = {1975},
}

@unpublished{zhou2026gibbs,
  title  = {Gibbs sampling by Decoded Quantum Interferometry},
  author = {Leo Zhou and Noah Shutty and Mark Sellke and Stephen P. Jordan},
  note   = {In preparation},
  year   = {2026}
}

\end{document}